\documentclass[10pt,twocolumn]{IEEEtran}
\usepackage{cite}
\usepackage{amsmath,amssymb,amsfonts,acronym}
\usepackage{amsthm}
\usepackage[ruled,vlined]{algorithm2e}
\usepackage{algorithmic}
\usepackage{textcomp}
\usepackage[usenames]{color}
\usepackage{psfrag}
\usepackage{arydshln}
\usepackage{bm}
\usepackage{url}
\usepackage{soul}
\usepackage{comment}
\usepackage{graphicx,color}

\usepackage{epstopdf}
\usepackage{esint}

%

\acrodef{3D}{three dimensional}
\acrodef{AP}{access point}
\acrodef{AoD}{angle-of-departure}
\acrodef{CoA}{curvature of arrival}
\acrodef{D2D}{device-to-device}
\acrodef{DFT}{discrete Fourier transform}
\acrodef{MPC}{multipath component}
\acrodef{TDL}{time delay line}

\acrodef{2D}{two dimensional}
\acrodef{OEB}{orientation error bound}
\acrodef{PEB}{Position Error Bound}

\acrodef{EKF}{Extended Kalman Filter}

\acrodef{5G}{fifth generation}
\acrodef{6G}{sixth generation}

\acrodef{TX}{transmitter}
\acrodef{RX}{receiver}

\acrodef{AF}{ambiguity function}

\acrodef{FFT}{fast Fourier transform}
\acrodef{TTD}{true-time delay}

\acrodef{CR}{channel response}

\acrodef{MAP}{maximum a posteriori probability}

\acrodef{MIMO}{multiple-input multiple-output}
\acrodef{mmW}{millimeter-wave}

\acrodef{SIMO}{single-input multiple-output}

\acrodef{MISO}{multiple-input single-output}
\acrodef{GNSS}{Global Navigation Satellite System}

\acrodef{SISO}{single-input single-output}
\acrodef{TDMA}{time division multiple access}
\acrodef{FDMA}{frequency division multiple access}
\acrodef{CRLB}{Cram\'er-Rao Lower Bound}
\acrodef{P-CRLB}{posterior Cram\'er-Rao Lower Bound}
\acrodef{SCAs}{small cell access points}

\acrodef{SCA}{small cell access point}

\acrodef{BS}{base station}

\acrodef{DF}{detect \& forward}

\acrodef{JF}{just forward}

\acrodef{DS}{delay spread}

\acrodef{CSI}{channel state information}

\acrodef{ACK}{acknowledge}
\acrodef{ADC}{analog-to-digital converter}
\acrodef{AWGN}{additive white Gaussian noise}
\acrodef{BPZF}{band-pass zonal filter}
\acrodef{CDF}{cumulative distribution function}
\acrodef{ch.f.}{characteristic function}
\acrodef{CIR}{channel impulse response}
\acrodef{CRB}{Cram\'{e}r-Rao bound}
\acrodef{DP}{direct path}
\acrodef{ED}{energy detector}
\acrodef{EM}{electromagnetic}
\acrodef{FCC}{Federal Communications Commission}
\acrodef{FIM}{Fisher Information Matrix}
\acrodef{GDOP}{geometric dilution of precision}
\acrodef{GLRT}{generalized likelihood ratio test}
\acrodef{GPS}{Global Positioning System}
\acrodef{INR}{interference-to-noise ratio}
\acrodef{IR-UWB}{impulse radio UWB}
\acrodef{i.i.d.}{independent, identically distributed}
\acrodef{IS}{importance sampling}
\acrodef{LF}{likelihood function}
\acrodef{LRT}{likelihood ratio test}
\acrodef{LIS}{large intelligent surface}
\acrodef{LOS}{line-of-sight}
\acrodef{LLR}{log-likelihood ratio}
\acrodef{LRT}{likelihood ratio test}
\acrodef{LS}{least squares}
\acrodef{MAC}{medium access control}
\acrodef{MF}{matched filter}
\acrodef{MLE}{maximum likelihood estimator}
\acrodef{mm-wave}{millimeter-waves}
\acrodef{MMSE}{minimum-mean-square-error}
\acrodef{MSE}{mean squared error}
\acrodef{MU}{multi-user}
\acrodef{MUI}{multi-user interference}
\acrodef{MUR}{Multistatic RADAR}
\acrodef{NBI}{narrowband interference}
\acrodef{NL}{nonlinear}
\acrodef{NLOS}{non-line-of-sight}
\acrodef{OAM}{Orbital Angular Momentum}
\acrodef{PAM}{pulse amplitude modulation}
\acrodef{PEB}{position error bound}
\acrodef{PDF}{probability distribution function}
\acrodef{TDOA}{time difference of arrival}
\acrodef{PDP}{power delay profile}
\acrodef{PFA}{probability of false alarm}
\acrodef{ppm}{part-per-million}
\acrodef{PPM}{pulse position modulation}
\acrodef{RFID}{radio frequency identification}
\acrodef{RMSE}{root mean square error}
\acrodef{RSS}{received signal strength}
\acrodef{RSSI}{received signal strength indicator}
\acrodef{RTT}{round-trip time}
\acrodef{RV}{random variable}
\acrodef{SIR}{sampling-importance-resampling}
\acrodef{SNR}{signal-to-noise ratio}
\acrodef{SU}{single-user}
\acrodef{TH}{time-hopping}
\acrodef{THz}{terahertz}
\acrodef{TNR}{threshold-to-noise ratio}
\acrodef{UAV}{unmanned aerial vehicle}
\acrodef{UCA}{uniform circular array}
\acrodef{ULA}{uniform linear array}
\acrodef{UWB}{ultrawide bandwidth}
\acrodef{WP}{waypoint}
\acrodef{WBI}{wideband interference}
\acrodef{WPAN}{wireless personal area networks}
\acrodef{WSN}{Wireless Sensor Network}
\acrodef{WWLB}{Weiss-Weinstein lower bound}

\acrodef{CW}{continuous wave}

\acrodef{RF}{radiofrequency}

\acrodef{FCC}{Federal Communications Commission}

\acrodef{EIRP}{effective radiated isotropic power}

\acrodef{RCS}{radar cross section}

\acrodef{BAV}{balanced antipodal Vivaldi}

\acrodef{PRake}{partial Rake}

\acrodef{RTLS}{Real-time locating systems}

\acrodef{EFI}{equivalent Fisher information matrix}

\acrodef{SPEB}{squared position error bound}

\acrodef{SOEB}{squared orientation error bound}

\acrodef{Hi-RADIAL}{High-accuracy RAdio Detection, Identification,
And Localization}

\acrodef{HCRB}{hybrid Cram\'{e}r-Rao bound}
\acrodef{HFIM}{hybrid Fisher Information Matrix}

\acrodef{ZZB}{Ziv-Zakai bound}

\acrodef{TOA}{time-of-arrival}
\acrodef{DOA}{direction-of-arrival}

\acrodef{ToF}{time-of-flight}

\acrodef{WSN}{wireless sensor network}

\acrodef{MAC}{medium access control}

\acrodef{RSS}{received signal strength}

\acrodef{TDoA}{time difference-of-arrival}

\acrodef{RF}{radiofrequency}
\acrodef{PF}{particle filter}
\acrodef{PSD}{power spectral density}

\acrodef{RTT}{round-trip time}

\acrodef{AOA}{angle-of-arrival}

\acrodef{MF}{matched filter}

\acrodef{ED}{energy detector}

\acrodef{ML}{maximum likelihood}

\acrodef{MUR}{Multistatic radar}

\acrodef{HDSA}{high-definition situation-aware}

\acrodef{RRC}{root raised cosine}

\acrodef{OFDM}{orthogonal frequency division multiplexing}

\acrodef{IF}{intermediate frequency}

\acrodef{PHY}{physical layer}

\acrodef{S-V}{Saleh-Valenzuela}

\acrodef{UHF}{ultra-high frequency}

\acrodef{PR}{pseudo-random}

\acrodef{SoC}{System on Chip}

\acrodef{SoP}{System on Package}

\acrodef{SPMF}{Single-Path Matched Filter}

\acrodef{IMF}{Ideal Matched Filter}

\acrodef{SCR}{signal-to-clutter ratio}

\acrodef{BEP}{bit error probability}

\acrodef{BER}{bit error rate}

\acrodef{WSR}{wireless sensor radar}

\acrodef{HPBW}{half power beam width}

\acrodef{LEO}{localization error outage}

\acrodef{SLAM}{simultaneous localization and mapping}
\acrodef{std}{standard deviation}
\acrodef{WPT}{wireless power transfer}
\acrodef{EV}{electric vehicle}

\newcommand{\ny}{n_{\mathsf{y}}}
\newcommand{\nz}{n_{\mathsf{z}}}
\newcommand{\ntilde}{\tilde{n}}
\newcommand{\Ntilde}{\tilde{N}}

\newcommand{\sqrtJd}{\sqrt{\left[\tjM\left(d\right) \right]^{-1}}}

\newcommand{\JM}{\mathbf{J}^{\mathrm{D}}}
\newcommand{\tJM}{\tilde{\mathbf{J}}^{\mathrm{D}}}
\newcommand{\tjM}{\tilde{{J}}^{\mathrm{D}}}
\newcommand{\JP}{\mathbf{J}^{\mathrm{P}}}


\newcommand{\lambdap}{\lambda}

\newcommand{\fn}{{f}_n}
\newcommand{\gn}{g_n}

\newcommand{\fdtk}{{f}_{n,k}}

\newcommand{\sumn}{\sum_{n=0}^{N-1}}

\newcommand{\fc} {f_p}

\newcommand{\Ns} {N_{\mathrm{s}}}

\newcommand{\ho}{h_0}
\newcommand{\hn}{h_n}
\newcommand{\hN}{h_{N-1}}




\newcommand{\pn}{\mathbf{q}_{n}}
\newcommand{\pnk}{\mathbf{q}_{n}}
\newcommand{\pok}{\mathbf{q}_{0}}

\newcommand{\xn}{x_{n}}
\newcommand{\yn}{y_{n}}
\newcommand{\zn}{z_{n}}

\newcommand{\xo}{x_{0}}
\newcommand{\yo}{y_{0}}
\newcommand{\zo}{z_{0}}

\newcommand{\dk}{d_k}

\newcommand{\dnk}{d_{n,k}}
\newcommand{\dn}{d_n}

\newcommand{\dno}{d_{n0}}
\newcommand{\thetak}{\theta_k}
\newcommand{\phik}{\phi_k}

\newcommand{\phino}{\phi_{n0}}
\newcommand{\thetano}{\theta_{n0}}

\newcommand{\pk}{\mathbf{p}_{k}}

\newcommand{\ssk}{\mathbf{s}_{k}}
\newcommand{\ssztok}{\mathbf{s}_{0:k}}
\newcommand{\sskm}{\mathbf{s}_{m,k}}
\newcommand{\sszm}{\mathbf{s}_{m,0}} 
 \newcommand{\ssz}{\mathbf{s}_0}
\newcommand{\ssom}{\mathbf{s}_{m,1}} 
\newcommand{\sskmo}{\mathbf{s}_{m, k-1}}
\newcommand{\hssk}{\hat{\mathbf{s}}_{k}}
\newcommand{\fssk}{\mathbf{s}_{k+1}}
\newcommand{\pssk}{\mathbf{s}_{k-1}}
\newcommand{\psskm}{\mathbf{s}_{m,k-1}}
\newcommand{\sskmpo}{\mathbf{s}_{m, k+1}}
\newcommand{\vsk}{\mathbf{v}_{k}}

\newcommand{\weightkm}{{w}_{m,k}}
\newcommand{\weightkkm}{{w}_{m,k | k}}

\newcommand{\weightkpokm}{{w}_{m,k+1 | k}}
\newcommand{\weightkkmom}{{w}_{m,k | k -1}}

\newcommand{\weightkmo}{{w}_{m,k-1}}

\newcommand{\xs}{x}
\newcommand{\ys}{y}
\newcommand{\zs}{z}

\newcommand{\xsk}{x_{k}}
\newcommand{\ysk}{y_{k}}

\newcommand{\zsk}{z_{k}}

\newcommand{\vxk}{{v}_{\mathsf{x},k}} 
\newcommand{\vyk}{{v}_{\mathsf{y},k}} 
\newcommand{\vzk}{{v}_{\mathsf{z},k}}

\newcommand{\Pk}{\boldsymbol{\Sigma}_{k}}
\newcommand{\Ak}{\mathbf{A}_{k}}
\newcommand{\Q}{\mathbf{Q}}
\newcommand{\Qk}{\mathbf{Q}_{k}}
\newcommand{\Rk}{\mathbf{R}_{k}}
\newcommand{\tRk}{\tilde{\mathbf{R}}_{k}}
\newcommand{\tRmk}{\tilde{\mathbf{R}}_{m,k}}
\newcommand{\R}{\mathbf{R}}

\newcommand{\gammat}{\gamma_{\mathsf{t}}}
\newcommand{\gammam}{\gamma_{\mathsf{m}}}
\newcommand{\Daa}{\mathbf{D}_{k-1}^{11}}
\newcommand{\Dab}{\mathbf{D}_{k-1}^{12}}
\newcommand{\Dba}{\mathbf{D}_{k-1}^{21}}
\newcommand{\Dbb}{\mathbf{D}_{k-1}^{22}}

\newcommand{\Nmc}{N_{\mathsf{mc}}}

\newcommand{\A}{\mathbf{A}}

\newcommand{\bmumk}{\bm{\mu}_{m,k}}
\newcommand{\Smk}{\bm{\Sigma}_{m,k}}
\newcommand{\betak}{\boldsymbol{\eta}_{k}}
\newcommand{\tbetak}{\tilde{\boldsymbol{\eta}}_{k}}
\newcommand{\Kk}{\mathbf{K}_{k}}
\newcommand{\Sk}{\mathbf{S}_{k}}
\newcommand{\Pkk}{\mathbf{P}_{k\rvert k}}
\newcommand{\vk}{\mathbf{v}_{k}}
\newcommand{\zk}{\mathbf{z}_{k}}
\newcommand{\fzk}{\mathbf{z}_{k+1}}
\newcommand{\zotok}{\mathbf{z}_{1:k}}

\newcommand{\zztok}{\mathbf{z}_{0:k}}
\newcommand{\bJk}{\mathbf{J}_k}
\newcommand{\pbJk}{\mathbf{J}_{k-1}}
\newcommand{\PCRB}{\mathbf{P}_{0:k}}

\newcommand{\bJztok}{\mathbf{J}_{0:k}}

\newcommand{\bJkmo}{\mathbf{J}_{k-1}}
\newcommand{\bJz}{\mathbf{J}_0}
\newcommand{\bmk}{\mathbf{m}_{k}}
\newcommand{\bmkk}{\mathbf{m}_{k \lvert k}}
\newcommand{\bmkkmo}{\mathbf{m}_{k \lvert k-1}}
\newcommand{\bmkpok}{\mathbf{m}_{k+1 \lvert k}}

\newcommand{\bmo}{\mathbf{m}_{0}}

\newcommand{\bpo}{\mathbf{q}_0}

\newcommand{\bsso}{\mathbf{s}_0}
\newcommand{\bwk}{\mathbf{w}_{k}}
\newcommand{\bzk}{\mathbf{z}_{k}}

\newcommand{\dF}{d_\mathrm{F}}

\newcommand{\gnk}{g_{n,k}}

\newcommand{\hbsml}{\hat{\mathbf{s}}_{\mathsf{ML},k}}
\newcommand{\BSml}{{\bm{\Sigma}_{\mathbf{s},k}}}

\newcommand{\Hk}{\mathbf{H}_{k}}
\newcommand{\Hkkmo}{\mathbf{H}_{k \lvert k-1}}
\newcommand{\Hmkkmo}{\mathbf{H}_{m,k \lvert k-1}}

\newcommand{\IN}{\mathbf{I}_N}

\newcommand{\Ny}{N_{\mathsf{y}}}
\newcommand{\Nz}{N_{\mathsf{z}}}

\newcommand{\Pkkmo}{\mathbf{P}_{k \lvert k-1}}
\newcommand{\Pkpok}{\mathbf{P}_{k+1 \lvert k}}

\newcommand{\po}{p_0}
\newcommand{\Po}{\boldsymbol{\Sigma}_{0}}
\newcommand{\ps}{{\mathbf{p}}}

\newcommand{\psk}{\mathbf{p}_{k}}

\newcommand{\sigmaketa}{\sigma_{\eta,k}}
\newcommand{\sigmaeta}{\sigma_{\eta}}
\newcommand{\ssigmaeta}{\sigma^2_{\eta}}

\newcommand{\thetank}{\vartheta_{n,k}}

\newcommand{\tilden}{{\tilde n}}

\newcommand{\tra}{\mathsf{T}}

\newcommand{\znk}{z_{n,k}}

\newcommand{\sumnynz}{\sum_{\ny, \nz}}
\newcommand{\sumny}{\sum_{\ny=0}^{\Ny-1}}
\newcommand{\sumnz}{\sum_{\nz=0}^{\Nz-1}}

\newcommand{\param}{\xi_k}

\title{Near-field Tracking with Large Antenna Arrays:\\ Fundamental Limits and Practical Algorithms}
\author{Anna Guerra,~\IEEEmembership{Member,~IEEE,} Francesco Guidi,~\IEEEmembership{Member,~IEEE,} \\ Davide Dardari,~\IEEEmembership{Senior,~IEEE,} and Petar M. Djuri\'c,~\IEEEmembership{Fellow,~IEEE,}

\thanks{A. Guerra (corresponding author, e-mail: anna.guerra3@unibo.it) and D. Dardari  are with the WiLAB - Department of Electrical and Information Engineering ``Guglielmo Marconi" - CNIT, University of Bologna, Italy. F. Guidi is with CNR-IEIIT, Italy. P. M. Djuri\'c is with ECE, Stony Brook University, Stony Brook, NY 11794, USA.
E-mail: petar.djuric@stonybrook.edu.
}
}
\theoremstyle{plain}
\newtheorem{prop}{Proposition}
\newtheorem{assumption}{Assumption}

\theoremstyle{remark}
\newtheorem{rem}{Remark}
\begin{document}
\maketitle
\begin{abstract}
Applications towards 6G have brought a huge interest towards arrays with a high number of antennas and operating within the millimeter and sub-THz bandwidths for joint communication and localization.
With such large arrays, the plane wave approximation is often not accurate because the system may operate in the near-field propagation region (Fresnel region) where the electromagnetic field wavefront is spherical. 
In this case, the \ac{CoA} is a measure  of the spherical wavefront that can be used to infer the source position using only a  single large array.
In this paper, we study a near-field tracking problem for inferring the state (i.e., the position and velocity) of a moving source with an ad-hoc observation model that accounts for the phase profile of  
a large receiving array. For this tracking  problem, we derive the \ac{P-CRLB} and show the effects 
when the source moves inside and outside the Fresnel region. We provide insights on how the loss of positioning information outside  Fresnel comes from  an increase of the ranging error rather than from inaccuracies of angular estimation. Then, we investigate the  performance of different Bayesian tracking algorithms in the presence of model mismatches and abrupt trajectory changes. Our results demonstrate the feasibility and high accuracy for most of the tracking approaches without the need of wideband signals and of any synchronization scheme.
\end{abstract}

\begin{IEEEkeywords}
Near-field tracking, posterior Cram\'er-Rao lower bound, curvature-of-arrival, large antenna array.
\end{IEEEkeywords}
\acresetall

\bstctlcite{IEEEexample:BSTcontrol}

\IEEEpeerreviewmaketitle

\section{Introduction}
\label{sec:intro}
Short-range localization and tracking techniques have recently attracted great interest in all the scenarios where the signal coming from the \ac{GNSS} is denied or leads to a low-accuracy positioning \cite{WinSheDai:J18,DarCloDju:J15,GenEtAl:J14}. 
Nowadays, there is a large variety of ad-hoc solutions for high-accuracy positioning, spanning from systems based on dedicated impulse radio \ac{UWB} technology 
to system integrating heterogeneous sensors \cite{win2011network}.
Unfortunately, most of the available solutions usually require the deployment of an {\em ad-hoc} positioning infrastructure with multiple anchors, i.e., multiple reference sensors with known positions, that can be expensive or bulky, especially in indoor environments. 
While it is possible, in principle, to avoid the need of  an infrastructure using \ac{SLAM} algorithms based on laser or camera sensors \cite{durrant2006simultaneous}, it is of interest to realize high-accuracy radio localization and tracking solutions that make use of the same network of \acp{AP} already deployed for communication coverage.   
\begin{figure}[t!]
\centering
\includegraphics[width=0.9\linewidth,draft=false]
{./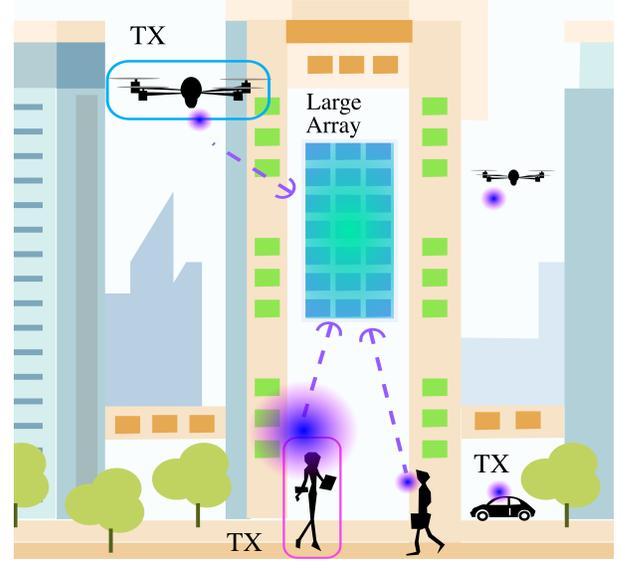}
\caption{Localization scenario with a large array coating a building wall.
}
\label{fig:scenario}
\end{figure}

With the \ac{6G} cellular networks, further improvements are expected in localization and tracking. The improvements will result from the joint use of high frequencies and large arrays for both communication and localization purposes \cite{guidi2019radio, de2021convergent,FasEtAl:J19,GuiEtAl:J18,vukmirovic2018position,GueGuiDar:J18} (see Fig.~\ref{fig:scenario}). Following a trend started by the {5G} cellular systems, a huge number of \acp{AP}, equipped with massive arrays, are expected to play a dual functional role of communication and localization reference nodes. The large arrays at each \ac{AP} allow to collect a huge number of measurements, thus enhancing the localization accuracy.

Usually, with such large arrays, localization is based on the joint estimate of the \ac{AOA} and \ac{TOA} \cite{garcia2017direct,WanWuShe:J19,Wym:C20}, which requires a fine synchronization between the transmitter (namely, the source to be localized) and the receiver (the \ac{AP}). When the synchronization is not guaranteed, it is not possible to retrieve any reliable positioning information about the transmitter if only one \ac{AP} is involved in the process. Traditionally, \ac{TDOA} or two-way ranging approaches are used to overcome this issue \cite{DarCloDju:J15}, but they require multiple message-passing between the two nodes or the involvement of multiple \acp{AP} with  a good \ac{GDOP}. When \acp{AP} are  closely located to each other and latency requirements become stringent, these approaches could fail and, therefore, new solutions are needed.  

When the antenna array is large enough to capture the spherical characteristic of the incident wave, which happens when operating in the radiating near-field of the array (Fresnel region), a promising approach is to retrieve the source position directly from the \ac{CoA} encapsulated in the spherical wavefront impinging a single large array. The \ac{CoA} depends on the transmitter position and the array geometry, and, when it is used for localization purposes, it does not need any synchronization  \cite{ZhaEtAl:J19,guidi2019radio,elzanaty2020reconfigurable}.
This concept is not new and it has been investigated for different frequencies and architectures \cite{HadSacFas:C17,ElKetAl:J13,LeC:J95,friedlander2019localization}, entailing the adoption of distributed antennas \cite{vukmirovic2019direct, ZhaEtAl:J19}.
In \cite{HadSacFas:C17}, an approach for direct wireless positioning with narrowband signals with multi-tone signalling and multi-arrays is described, whereas 
in \cite{ElKetAl:C09-2} a MUSIC-based method and an extensive analysis on the attainable fundamental localization limits is derived for near-field propagation conditions. 
A detailed investigation using acoustic waveforms has been carried out in \cite{LeC:J95,FerWyb:C01}. Unfortunately, these studies usually refer to acoustic waves or \ac{RF} microwave  considering only very short distances or using very large, often not practical, antennas.
With the introduction of the \ac{mm-wave} technology, source positioning and tracking is in principle possible even with antenna arrays with limited aperture and for distances of several meters \cite{guidi2019radio,ZhaEtAl:J19}. Preliminary studies on near-field fundamental limits on positioning with \ac{5G} antenna arrays has been recently addressed in  \cite{HuRusEdf:J18,Wym:C20,bjornson2021reconfigurable}, but considering a static scenario and non-Bayesian methods.

In this paper, we investigate the  fundamental limits in source tracking  in a single array scenario, and we assess the performance through practical algorithms working with \ac{CoA}.  
To this end, we consider an \emph{ad-hoc} phase-based observation model accounting for the near-field wavefronts, and we derive compact formulas for different array configurations to gain further insights on  the capability to infer the position information when moving from near-field to far--field regions, conventionally delimited by the Fraunhofer distance \cite{Bal:05}. Through an asymptotic analysis, we evaluate the role of ranging and bearing information on localization when the source-array distance increases, showing that the \ac{CoA} provides both types of information only in the Fresnel region while, elsewhere, only bearing data can be correctly estimated. Further, we investigate different Bayesian tracking algorithms to assess their robustness and accuracy in different situations.

The main contributions of the paper are as follows.
\begin{itemize}
    \item We introduce a narrowband observation model, accounting for phase difference-of-arrival at a single large array, that includes \ac{CoA}  of the impinging wavefront;
     \item We derive the \ac{P-CRLB} to assess the ultimate performance of the \ac{CoA}--based tracking in the near-- and far--field regions when the considered phase-based model is employed; 
     \item We derive compact formulas for the \ac{FIM} on ranging and bearing information for two different array geometries and we highlight the role of the ratio between the array aperture and the source distance in defining the near-field localization coverage;
    \item We evaluate the performance of different Bayesian filtering approaches considering different parameter models available at the receiver. We investigate the robustness of the  tracking algorithms with respect to model parameter mismatches, abrupt changes of direction, and the impact of movements inside/outside the Fresnel region.
\end{itemize}

\paragraph{Notation} Scalar variables, vectors and matrices are represented with lower letters, lower bold letters, and capital bold letters, respectively (e.g., $x$, $\mathbf{x}$, and $\mathbf{X}$, respectively). The symbols $\left( \cdot \right)^\tra$, $\left( \cdot \right)^{-1}$, and $\left( \cdot \right)^\dagger$ represent the transpose, inverse and Moore-Penrose pseudo-inverse operators of their arguments, respectively, and $\lVert \cdot \rVert$ is the 2-norm of its argument. We use $k$ for discrete temporal indexing, $n$ for antenna indexing, and $m$ for particle indexing. As an example, $x_{n,k}$, $\mathbf{x}_{n,k}$, $\mathbf{X}_{n,k}$ stand for a scalar, a vector or a matrix related to the $n$th antenna at the $k$th time instant. With $\IN$ and $\mathbf{0}_N$ we represent the identity and all-zero matrices of size $N \times N$, \and with $p\left( \cdot \right)$ probability density functions (pdfs). 
With $\mathcal{N}\left(\mathbf{x}; \bm{\mu},  \bm{\Sigma} \right)$ we indicate that the random vector $\mathbf{x}$ is distributed according to a Gaussian pdf with a mean vector $\bm{\mu}$ and a covariance matrix $\bm{\Sigma}$. The notation $\mathbf{x}_{a \lvert b}$ indicates the value of a vector $\mathbf{x}$ at time instant $a$ estimated by considering the measurements collected up to time instant $b$. For example, $\mathbf{x}_{k \lvert k-1}$ is the value of $\mathbf{x}$ predicted at time instant $k-1$ for the next time instant $k$, whereas, once a new measurement becomes available at $k$, this value is updated to  $\mathbf{x}_{k \lvert k}$.

\paragraph{Organisation of the paper} 
The rest of the paper is organized as follows. Section~\ref{sec:signmodel} provides the state-space model of the tracking problem, whereas Sections~\ref{sec:tracklim}-\ref{sec:track} describe the  fundamental limits of localization performance and present practical algorithms for source tracking in near-field, respectively.  
A case study is addressed  in Section~\ref{sec:results} and conclusions are drawn in Section~\ref{sec:conclusions}.
%
%
\section{State--Space Model}\label{sec:signmodel}
We consider a tracking scenario where a single antenna array tracks a moving source by exploiting the phase profile of the received signal caused by the \ac{CoA}. We denote by $\ssk=\left[ \psk^{\tra},\, \vsk^{\tra}  \right]^{\tra}$ the state composed of the position and velocity Cartesian coordinates of the source at time instant $k$, respectively, defined by
$\psk= \left[ \xsk,\, \ysk, \zsk \right]^{\tra}$ and 
$\vsk=\left[ \vxk,\, \vyk, \vzk \right]^{\tra}$.
Therefore, the state dimensionality is $\Ns=6$, because of the $3$D position and velocity Cartesian coordinates.
We also consider that the array has $N$ antennas located at {$\pnk=\left[ \xn,\, \yn, \zn \right]^{\tra}$}, $n=0, \ldots, (N-1)$, with reference location $\pok$. 

At each time instant, the geometric relationship between the reference location and the source is given by
 \begin{align}\label{eq:cartesian}
    \!\! \psk &\!=\! \left[ \begin{array}{l}
     \xsk \\
     \ysk \\
     \zsk
     \end{array}
     \right]\!=\! \!\left[ \begin{array}{l}
     \!\xo+ \dk\, \cos\left( \phik \right) \,\sin\left( \thetak \right) \\
     \yo+\dk\, \sin\left( \phik \right) \sin\left( \thetak \right) \\
     \zo+ \dk\, \cos\left( \thetak \right)
     \end{array}
     \right],
 \end{align}
with $\dk=\lVert \psk - \pok\rVert$, $\phik=\operatorname{atan2}\left( \frac{\ysk-\yo}{\xsk-\xo}\right)$,  and $\thetak=\operatorname{acos}\left( \frac{\zsk-\zo}{\dk}\right)$  being the true distance, azimuth, and elevation angles, respectively, 
as represented in Fig.~\ref{fig:geo}.

The source emits a narrowband signal such as a tone or a pilot in a resource block of an OFDM scheme with a frequency $\fc$, which is received by the antenna array and processed for tracking purposes. We assume that the source is not synchronized with the receiver so that any \ac{TOA} information cannot be inferred from the signal and the phase offset between the source and the array is not known. Starting from the collected phase measurements at each antenna of the array, the purpose is to estimate and track the state of the source.
\begin{figure}[t!]
\centering
\psfrag{x}[lc][lc][0.7]{$x$}
\psfrag{y}[lc][lc][0.7]{$y$}
\psfrag{z}[lc][lc][0.7]{\,\,$z$}
\psfrag{Ant}[lc][lc][0.7]{Large array}
\psfrag{da}[c][c][0.7]{$\dno$ \,\,}
\psfrag{ta}[c][c][0.7]{$\thetano$}
\psfrag{pa}[c][c][0.7]{\quad$\phino$}
\psfrag{p}[c][c][0.7]{\quad$\phik$}
\psfrag{t}[c][c][0.7]{\quad$\thetak$}
\psfrag{d}[c][c][0.7]{\quad$\dk$}
\psfrag{r}[c][c][0.7]{$\pok$}
\psfrag{a}[c][c][0.7]{$\pnk$}
\psfrag{q}[lc][lc][0.7]{Trajectory}
\psfrag{s}[c][c][0.7]{$\psk$}
\includegraphics[width=0.9\linewidth,draft=false]
{./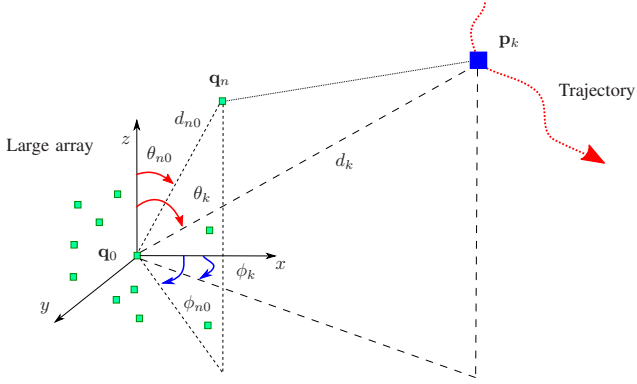}
\caption{Considered array geometry. The receiver array (antennas indicated with green squares with reference location in $\pok$), tracks a moving source at $\psk$ (blue square marker) whose trajectory is depicted with a dashed red line. 
}
\label{fig:geo}
\end{figure}

The sequential state estimation problem (tracking)  can be formulated starting from a discrete-time state-space representation given by \cite{sarkka2013bayesian}
\begin{align} \label{eq:motionmodel}
&\ssk = f\left(\pssk  \right) + \bwk = \Ak \, \pssk  + \bwk ,  \\
& \bzk = h\left(\psk \right) + \betak, 
\label{eq:observationmodel}
\end{align}
where the motion model $f: \mathbb{R}^6 \rightarrow \mathbb{R}^{6}$ is considered a linear function of the state, with $\Ak$ being the transition matrix, whereas the observation model is a nonlinear function $h: \mathbb{R}^3 \rightarrow \mathbb{R}^{N}$ that will be defined in the sequel, and $\bwk \sim \mathcal{N} \left(\bwk; \bf 0, \Qk \right)\,$ and $\betak \sim \mathcal{N} \left( \betak; \bf 0, \Rk \right)\,$ are  zero-mean noise processes with $\Qk$ and $\Rk$ being the transition and observation noise covariance matrices. In the next, we will assume a time invariant transition matrix and covariance matrix, e.g., $\Ak=\A$ and $\Qk=\Q$, as well as $\Rk=\R=\sigmaeta^2\, \IN$.

The  observation function provides, for a given source position,  the differential phases at each antenna, i.e., the difference of phases gathered at the considered antenna and at the reference location. More specifically,
\begin{align}\label{eq:observationmodel_v4}
&h\left(\psk \right)=\left[\ho\left(\psk \right),\ldots, \hn\left(\psk \right), \ldots, \hN\left(\psk \right) \right]^{\tra}, 
\end{align}
where the generic element is a phase difference between $0$ and $2\pi$, given by\footnote{Note that the phase uncertainty due to source-array clock mismatches disappears thanks to the operation of difference between the phases at the array antennas and the reference.}
\begin{align}\label{eq:observationmodel_v2}
&\hn\left(\psk \right)=\Delta \thetank \,\, \operatorname{mod}\,\, 2\pi, \\
&\Delta \thetank = \frac{2\,\pi}{\lambdap} \, \Delta \dnk\left(\pnk,\psk \right), \label{eq:observationmodel_v3}
\end{align}
\noindent where $\Delta \thetank$ represents the phase difference between locations $\psk$ and the reference location $\pok$,  $\operatorname{mod}$ is the modulo operator that returns the remainder after division (of  $\Delta \thetank/ 2\pi$) with the same sign of $\Delta\thetank$, $\lambdap$ 
is the wavelength,
 and $\Delta \dnk \left(\pnk,\psk \right)$ is the extra-distance traveled by the waveform to arrive to the $n$th antenna with respect to the reference one. In particular, this extra-distance is given by
\begin{align}\label{eq:deltad}
\Delta \dnk \left(\pnk,\psk \right)= \dnk\left(\pnk,\psk \right)-\dk\left(\pok,\psk \right) \,,
\end{align}
with $\dnk$ representing the distance between the $n$th antenna and the source at the time instant $k$. According to trigonometric rules, we have 
\begin{align}
  &\dnk^2 =\dno^2+\dk^2-2\, \dk \,  \dno \, \gnk,
\end{align}
where $\dno=\lVert \pnk -\pok \rVert$ is the distance between the $n$th antenna and the reference location,  and
\begin{align}\label{eq:gn}
  \gnk \triangleq g\left(\thetano, \phino, \thetak, \phik \right)=& \sin\left(  \thetano \right)  \sin\left( \thetak \right)   \cos\left( \phino - \phik \right)  \nonumber \\
  &+ \cos\left( \thetano\right)  \cos\left( \thetak\right),
\end{align}
\begin{figure}[t!]
\centering
\psfrag{y}[lc][lc][0.8]{$y$ [m]}
\psfrag{wrap}[c][c][0.8]{}
\psfrag{unwrap}[c][c][0.8]{}
\psfrag{dphase}[c][c][0.8]{$h_n\left(y \right)$}
\centerline{
\includegraphics[width=0.85\linewidth,draft=false]
{./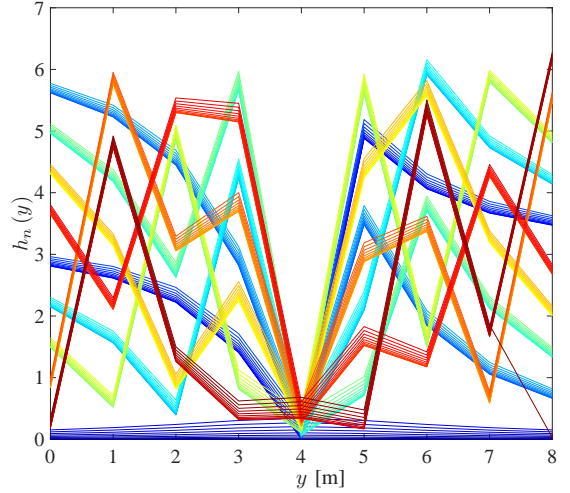}
}
\caption{Differential phases collected by the antennas of an array located at $\bpo=\left[0, 4, 1 \right]$ with $\Ny=\Nz=10$ antennas along the $y$- and $z$- axis, respectively. The target is located at $\left[\xs, \ys, \zs \right]=\left[\xo, y, \zo\right]$ with the $y$ coordinate spanning from $0$ to $8$ m. The central frequency is set to $28$ GHz. The colors of the lines indicate the different antennas of the array, clusters of close antennas measure the same differential phases. }
\label{fig:phase_ex}
\end{figure}
\noindent is a geometric term 
with $\thetano$
and $\phino$
being the $n$th antenna elevation and azimuth angles with respect to the reference location, respectively.
Consequently, the extra-distance in \eqref{eq:deltad} can be written as
\begin{align}\label{eq:a_new}
  \Delta \dnk \left(\pnk,\psk \right) &=  \dk\,\left[ \sqrt{ \fdtk } - 1 \right]\,,
\end{align}
with the \ac{CoA} information gathered in $\fdtk$ as
\begin{align}\label{eq:f}
\fdtk \triangleq f_{n,k}\left(\pn, \psk \right)= 1+\left(\frac{\dno}{\dk}\right)^2-2  \frac{ \dno}{\dk} \,\gnk.
\end{align}
Note that the observation function in \eqref{eq:observationmodel_v2} is highly nonlinear with respect to the state as highlighted in \eqref{eq:a_new}-\eqref{eq:f} and in the example reported in Fig.~\ref{fig:phase_ex} where the differential phase profile is reported as a function of the source's $y$ coordinate. 

In the next sections, we will investigate the theoretical limits on source tracking as well as some practical algorithms by considering the source located both in the radiating near-- and far--field regions. Conventionally, the far--field region corresponds to distances larger than the Fraunhofer limit given by $\dF \triangleq \frac{2\,D^2}{\lambdap}$, where $D$ is the diameter of the antenna array, whereas the radiating near--field region is \cite{Bal:05}
\begin{align}
& 0.62\, \sqrt{\frac{D^3}{\lambdap}}\le \dk \le \dF.
\end{align}
\section{Fundamental Limits on Near--field Tracking}\label{sec:tracklim}
\subsection{Posterior CRLB}
In this section, we derive the \ac{P-CRLB} \cite{tichavsky1998posterior,brehard2006closed,bergman2001optimal,fritsche2014fresh} for the discrete-time nonlinear problem described in this paper. As introduced in \cite{tichavsky1998posterior,fritsche2014fresh}, different Bayesian bounds can be derived depending on the choice of the probability distribution from which the Bayesian \ac{FIM} is computed. 

The joint distribution of the state and measurements, $p_{0:k}=p\left( \ssztok, \zztok \right)$, allows for the computation of the Bayesian \ac{FIM} from the state history, i.e., $\bJztok$, and the derivation of the tightest \ac{P-CRLB}  for nonlinear filtering problems. 
The \ac{P-CRLB} of the joint distribution can be written as the inverse of the Bayesian \ac{FIM}
\begin{align}\label{eq:PCRB}
\PCRB \ge \bJztok^{-1},
\end{align}
where $\bJztok$ is the Bayesian \ac{FIM} defined as
\begin{align}\label{eq:jointFIM}
    &\bJztok=\mathbb{E}_{\ssztok, \zztok} \left\{ \Delta_{\ssztok}^{\ssztok} \ln p\left( \ssztok, \zztok \right) \right\}, 
\end{align}
with $\Delta_{\mathbf{a}}^{\mathbf{b}}=\nabla_{\mathbf{a}}^{\tra} \nabla_{\mathbf{b}}$ and $\nabla_{\mathbf{a}}=\left[\frac{\partial}{\partial {a}_1}, \,\ldots \frac{\partial}{\partial {a}_i},\, \ldots,\,  \frac{\partial}{\partial {a}_{\lvert \mathbf{a} \rvert}} \right]$ being the gradient with respect to the vector $\mathbf{a}$. From \eqref{eq:jointFIM}, it is possible to derive the \ac{P-CRLB} on $\ssk$ by
taking the $\Ns \times \Ns$ lower-right sub-matrix of the inverse of $\bJztok$. A more elegant approach that avoids the inversion of the large \ac{FIM} $\bJztok$ is a recursive formula proposed in \cite{tichavsky1998posterior}, which  permits to express
the \ac{FIM} of $\ssk$ as
\begin{align}
 \bJk&=\Dbb  - \Dba \left(\bJkmo + \Daa \right)^{-1}\, \Dab \,,
\end{align}
where the initial information matrix $\bJz=\mathbb{E}_{\bsso} \left\{ \Delta_{\bsso}^{\bsso} \ln p\left( \bsso \right) \right\}$ is derived from the prior distribution of the state \cite{koohifar2018autonomous} and where 
\begin{align}
    &\Daa=\mathbb{E}_{p_{0:k}}\left[- \Delta_{\pssk}^{\pssk} \ln p\left( \ssk \lvert \pssk\right) \right] =\A^{\tra}  \Q^{-1} \A, \\
   & \Dab\!=\mathbb{E}_{p_{0:k}}\left[- \Delta_{\pssk}^{\ssk} \ln p\left( \ssk \lvert \pssk\right) \right]=- \A^{\tra} \Q^{-1}, \\
    &\Dba=\left( \Dab \right)^{\tra}, \\
    &\Dbb=\mathbb{E}_{p_{0:k}}\left[- \Delta_{\ssk}^{\ssk} \ln p\left( \ssk \lvert \pssk\right) - \Delta_{\ssk}^{\ssk} \ln p\left( \zk \lvert \ssk\right) \right]
    \nonumber \\
    &\qquad = \Q^{-1}+ \JM_{k} \,,
\end{align}
where $\JM_{k}$ is the expectation of the Hessian matrix with respect to the state and measurements as \cite{koohifar2018autonomous} 
\begin{align}\label{eq:hessian}
&\JM_{k}=\mathbb{E}_{p_{0:k}}\left\{- \Delta_{\ssk}^{\ssk} \ln p\left( \zk \lvert \ssk\right)\right\} \nonumber \\
&=\mathbb{E}_{\ssk \lvert \pssk}\left\{\mathbb{E}_{\zk \lvert \ssk}\left\{- \Delta_{\ssk}^{\ssk} \ln p\left( \zk \lvert \ssk\right)\right\}\right\} \nonumber \\
&=\mathbb{E}_{\ssk  \lvert \pssk}\left\{{\tJM}_{k}\right\},
\end{align}
with ${\tJM}_{k}$ being the non-Bayesian data \ac{FIM}.
Unfortunately, the expectation in \eqref{eq:hessian} cannot be easily derived, but it is often approximated using Monte Carlo integration \cite{koohifar2018autonomous}. Indeed, we can separate the contribution deriving from the collected data and the prior information, thus writing
\begin{align}\label{eq:FIM_P_M}
    \bJk=\JP_{k \lvert k-1}+\JM_{k}, 
\end{align}
where $\JP_{k \lvert k-1}=\Q^{-1}-\Dba \left(\pbJk + \Daa \right)^{-1}\, \Dab$ contains the propagation information. 
In the sequel, we will find closed-form solutions for ${\tJM}_{k}$ to better investigate the behaviour in near-- and far--field regions. 
In the case study of Section~\ref{sec:results}, we will make use of the Bayesian \ac{FIM} in \eqref{eq:FIM_P_M} as a benchmark for practical tracking algorithms.

\subsection{A Near-- vs. Far--field Fisher Information Analysis} 
\label{subsec:dataFIM}
We now investigate the behavior of the non-Bayesian data \ac{FIM} when the target approaches the Fraunhofer distance, that is, when $\dk = d_{\mathsf{F}}$, and when it is far from the radiating near--field region, i.e., $\dk \gg \dF$.
\begin{prop}
Under the observation model in \eqref{eq:observationmodel_v2}-\eqref{eq:observationmodel_v3} and {weak regularity conditions for $ p\left( \zk \lvert \ssk\right)$ \cite{kay1993fundamentals}}, the positioning information carried by the data \ac{FIM}  vanishes when the distance increases, i.e., 
\begin{align} \label{eq:asyJM}
  {\tJM}_{k} &\triangleq \mathbb{E}_{\zk \lvert \ssk}\left\{\nabla_{\ssk}^{\tra} \ln p\left( \zk \lvert \ssk\right)\, \nabla_{\ssk} \ln p\left( \zk \lvert \ssk\right) \right\} \nonumber \\
  &=\frac{1}{\ssigmaeta}\, \nabla_{\ssk}^{\tra}\, h \left(\psk \right) \nabla_{\ssk}\, h \left(\psk \right) \underset{\dk \gg \dF}{\longrightarrow} \mathbf{0}_{\Ns},
\end{align}
%
where $\mathbf{0}_{\Ns}$ is an all-zero matrix. 
\end{prop}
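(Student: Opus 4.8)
The plan is to establish \eqref{eq:asyJM} in three moves: (i) identify $\tJM_k$ with the scaled Gram matrix of the observation Jacobian, using only that the measurement noise is additive Gaussian with state-independent covariance; (ii) reduce the claim to showing that the position-gradient of each scalar observation $h_n$ vanishes as $\dk\to\infty$; and (iii) bound that gradient geometrically by the angle subtended at the source by the baseline joining antenna $n$ and the reference antenna, which is $\Os(D/\dk)$.

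First I would invoke the Gaussian model. Since $\bzk=h(\psk)+\betak$ with $\betak\sim\mathcal N(\mathbf 0,\sigmaeta^2\IN)$, the conditional log-likelihood is quadratic in $\bzk-h(\psk)$, so the standard Gaussian-model Fisher identity \cite{kay1993fundamentals} yields $\tJM_k=\sigmaeta^{-2}\,(\nabla_{\ssk}h(\psk))^{\tra}\,\nabla_{\ssk}h(\psk)$, which is precisely the second line of \eqref{eq:asyJM}; here $\nabla_{\ssk}h$ denotes the $N\times\Ns$ Jacobian. The $\bmod 2\pi$ in \eqref{eq:observationmodel_v2} is harmless for this step: $x\mapsto x\bmod 2\pi$ has unit derivative wherever it is differentiable and its jump set has Lebesgue measure zero (equivalently, one may regard $h_n$ as a smooth map into $\mathbb R/2\pi\mathbb Z$), so we differentiate $h_n$ at points where it is $C^1$ and there $\nabla_{\psk}h_n=\tfrac{2\pi}{\lambdap}\nabla_{\psk}\Delta\dnk$. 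Because $h$ depends on the state $\ssk$ only through the position $\psk$ (cf.\ \eqref{eq:observationmodel_v4}--\eqref{eq:f}), the three columns of $\nabla_{\ssk}h$ associated with the velocity vanish identically, so $\tJM_k$ collapses to its $3\times3$ position block and it is enough to show $\nabla_{\psk}h_n(\psk)\to\mathbf 0$ for each $n=0,\dots,N-1$.

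For the core estimate I would work directly in Cartesian coordinates rather than through \eqref{eq:a_new}--\eqref{eq:f}. From \eqref{eq:deltad}, $\Delta\dnk=\lVert\pnk-\psk\rVert-\lVert\pok-\psk\rVert$, hence
\[
\nabla_{\psk}\,\Delta\dnk=\frac{\psk-\pnk}{\lVert\psk-\pnk\rVert}-\frac{\psk-\pok}{\lVert\psk-\pok\rVert},
\]
the difference of two unit vectors, whose norm equals $2\lvert\sin(\alpha_n/2)\rvert\le\alpha_n$, where $\alpha_n$ is the angle at the vertex $\psk$ of the triangle $(\psk,\pnk,\pok)$. By the law of sines, $\sin\alpha_n=\dno\sin(\cdot)/\dk\le\dno/\dk\le D/\dk$, since $\dno=\lVert\pnk-\pok\rVert$ is at most the array diameter $D$. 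Thus $\lVert\nabla_{\psk}\,\Delta\dnk\rVert\le\arcsin(D/\dk)$, so $\lVert\nabla_{\psk}h_n\rVert\le\tfrac{2\pi}{\lambdap}\arcsin(D/\dk)$; each position column of $\nabla_{\ssk}h$ then has norm at most $\sqrt N\,\tfrac{2\pi}{\lambdap}\arcsin(D/\dk)$ while the velocity columns are zero, so by Cauchy--Schwarz every entry of $\tJM_k$ is bounded by $\sigmaeta^{-2}N(2\pi/\lambdap)^2\arcsin^2(D/\dk)\to0$ as $\dk\to\infty$, which is \eqref{eq:asyJM}.

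\textbf{Main obstacle.} The only genuinely delicate points are the legitimacy of differentiating through the modulo (cleanest handled by the null-set / quotient-manifold remark above, or by restricting to the generic set where $h$ is $C^1$) and the geometric bound on $\alpha_n$. A chain-rule alternative — differentiating $\Delta\dnk$ through $(\dk,\thetak,\phik)$ via \eqref{eq:a_new}--\eqref{eq:f} with $\lVert\nabla_{\psk}\dk\rVert=1$, $\lVert\nabla_{\psk}\thetak\rVert=1/\dk$, $\lVert\nabla_{\psk}\phik\rVert=1/(\dk\sin\thetak)$ — also works and is worth recording, since it exhibits that the coefficient of the radial direction, $\partial\Delta\dnk/\partial\dk=\Os(\dno^2/\dk^2)$, decays one order faster than the bounded angular coefficients $\partial\Delta\dnk/\partial\thetak,\ \partial\Delta\dnk/\partial\phik=\Os(\dno)$; this is the quantitative counterpart of the paper's observation that, outside the Fresnel region, it is the ranging and not the bearing information that is lost first.
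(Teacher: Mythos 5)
Your proposal is correct, but the core of your argument differs from the paper's. The first step (the Gaussian-model identity reducing $\tJM_k$ to $\sigmaeta^{-2}$ times the Gram matrix of the Jacobian, the measure-zero treatment of the modulo, and the vanishing of the velocity columns) coincides with what the paper does via its footnote and the opening of Appendix~A. Where you diverge is the key limit: the paper's Appendix~A differentiates $\Delta d_{n,k}$ through the spherical parametrization $(\dk,\thetak,\phik)$, i.e.\ through $f_{n,k}$ and $g_{n,k}$ with the gradients $\nabla_{\ps}d$, $\nabla_{\ps}\theta$, $\nabla_{\ps}\phi$, writes out the resulting expression for, e.g., $\partial\Delta d_n/\partial x$, and observes termwise that it tends to zero as $\dk\gg\dF$. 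You instead stay in Cartesian coordinates, recognize $\nabla_{\psk}\Delta d_{n,k}$ as the difference of the two unit vectors from $\pnk$ and $\pok$ toward the source, and bound its norm by the angle subtended at the source, $2\lvert\sin(\alpha_n/2)\rvert$ with $\sin\alpha_n\le \dno/\dk\le D/\dk$. This is more elementary and buys an explicit, geometry-free rate: every entry of $\tJM_k$ is $\Os\!\left(N\sigmaeta^{-2}\lambdap^{-2}D^2/\dk^2\right)$, uniformly over array geometries, rather than a limit checked coordinate by coordinate. What the paper's route buys in exchange is the explicit spherical-coordinate gradient expressions, which are reused verbatim in Appendix~B to obtain the separate distance/bearing FIMs of Propositions~2--4; your closing remark about the radial coefficient decaying as $\Os(\dno^2/\dk^2)$ versus the bounded angular coefficients is exactly the observation that analysis formalizes, so the two approaches are complementary rather than in conflict.
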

\begin{proof}
All the entries of the data \ac{FIM}\footnote{{In order to meet the regurality conditions, e.g., to let $\nabla_{\ssk} \ln p\left( \zk \lvert \ssk\right)$ exist and be finite, the derivatives of $h_n(\pk)$ with respect to $\ssk$ are taken equal to the left and right derivatives in the discontinuous point, which is equivalent to substituting $h_n(\pk)$ with $\Delta \vartheta_{n,k}$ in the derivative process.} 
}
\begin{align} 
  \left\{\tJM_{k}\right\}_{i,j} &\!\!\!\!= \frac{1}{\ssigmaeta}\, \left(\frac{2\, \pi}{\lambdap}\right)^2  \sumn \frac{\partial \Delta \dnk }{\partial \left[\ssk\right]_j} \cdot \frac{\partial \Delta \dnk }{\partial \left[\ssk\right]_i} \underset{\dk \gg \dF}{\longrightarrow} 0 \,,
\end{align}
$\forall i,j=1, \ldots, \Ns$, tend to be zero since
\begin{align}\label{eq:limite}
&\frac{\partial \Delta \dnk }{\partial \left[\ssk\right]_i} \underset{\dk \gg \dF}{\longrightarrow} 0, && \forall i=1, \ldots, \Ns.
\end{align}
as demonstrated in Appendix A. \end{proof}
In the next, we will show that, when moving toward the far--field region, such data information vanishing is only caused by the loss of the distance information, whereas the angles can still be inferred. 
To analyze this point, we derive the single components of the \ac{FIM} on the distance and angle parameters, namely $\param \in \left\{\dk, \thetak, \phik \right\}$. We have 
\begin{align}\label{eq:dfim}
\tjM_k\left(\param \right) &\triangleq\mathbb{E}_{\bzk | \param}\left\{ \left(\frac{\partial\ln p\left(\bzk|\param\right) }{\partial \param} \right)^2   \right\}, 
\end{align}
where the gradient of the log-likelihood is given by
\begin{align}
  &\frac{\partial\ln p\left(\bzk|\param\right) }{\partial \param}  
   =   \frac{1}{\sigmaeta^2}\sumn  \frac{\partial h_n\left(\param\right) }{\partial \param} \, (\znk - h_n\left(\param\right)). 
\end{align}
{In the next, for notation simplicity, we omit the time index $k$.}
\begin{prop}\label{prop:p0}
Under the observation model in \eqref{eq:observationmodel_v2}-\eqref{eq:observationmodel_v3} and weak regularity conditions for $ p\left( \zk \lvert \param\right)$, the \acp{FIM} for distance and angles for both near-- and far--field regions (for any value of $d$) and for any geometry (i.e., for any $\dno$, $\gn$) are 
%
\begin{align} \label{eq:Jkdfinal}
&\tjM \left(d \right)= \frac{4\, \pi^2}{\lambdap^2\,\sigmaeta^2} \sumn \frac{1}{d^2+\dno^2-2\,\gn\, \dno\, d}  \nonumber \\
&\phantom{\tjM \left(d \right)=}\times \Big[ 2\,d^2 + \dno^2\,(\gn^2+1) -4\, g_n\, d\, \dno  +  \nonumber \\
&\phantom{\tjM \left(d \right)=} -2 (d- \gn \dno) \sqrt{d^2+\dno^2-2\, \gn d\, \dno} \Big], \\
&\tjM\left(\theta \right)  = \frac{4\,\pi^2}{\lambdap^2 \sigmaeta^2}\sumn\frac{d^2\,\dno^2}{\dno^2+ d^2-2\,\gn\,d\,\dno} \left(\frac{\partial \gn}{\partial \theta} \right)^2, \\
&\tjM \left(\phi \right)  = \frac{4\,\pi^2}{\lambdap^2 \, \sigmaeta^2}\sumn\frac{d^2\,\dno^2\,}{{\dno^2}+ d^2-2\, {\gn\,d\,\dno}}\left(\frac{\partial \gn}{\partial \phi} \right)^2.
  \label{eq:Jkthetafinal}
\end{align}  
\end{prop}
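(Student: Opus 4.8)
The plan is to reduce the scalar Fisher informations in \eqref{eq:dfim} to deterministic sums of squared derivatives of the extra-distance, and then to differentiate the exact closed form \eqref{eq:a_new}--\eqref{eq:f} directly, without any near-field approximation.

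First I would evaluate the expectation in \eqref{eq:dfim}. Since $\betak \sim \mathcal{N}(\mathbf{0},\sigmaeta^2\IN)$, the likelihood $p(\bzk \lvert \param)$ is Gaussian with $n$th-component mean $h_n(\param)$ and covariance $\sigmaeta^2\IN$; squaring the score displayed just above the statement, taking $\mathbb{E}_{\bzk \lvert \param}\{\,\cdot\,\}$, and using $\mathbb{E}\{(\znk-h_n)(z_{m,k}-h_m)\}=\sigmaeta^2\delta_{nm}$ eliminates all cross terms and leaves
\begin{align}
\tjM_k(\param)=\frac{1}{\sigmaeta^2}\sumn\left(\frac{\partial h_n(\param)}{\partial \param}\right)^2 .
\end{align}
The weak regularity conditions enter precisely here: at the phase-wrap points $h_n$ is discontinuous, so I would adopt the convention of the footnote and differentiate $\Delta\vartheta_{n,k}=\frac{2\pi}{\lambdap}\Delta\dnk$ in place of $h_n$, giving $\partial h_n/\partial\param=\frac{2\pi}{\lambdap}\,\partial\Delta\dnk/\partial\param$ and hence the common prefactor $4\pi^2/(\lambdap^2\sigmaeta^2)$ of all three formulas. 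From here I suppress the index $k$, as in the statement.

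Next I would compute $\partial\Delta d_n/\partial\param$ from $\Delta d_n=R_n-d$ with $R_n\triangleq\sqrt{\dno^2+d^2-2\,d\,\dno\,\gn}$ (the source-to-antenna-$n$ distance), treating $d$, $\theta$, $\phi$ as free parameters and noting that $\dno$ is a fixed geometric quantity and $\gn$ depends only on the angles. For $\param=d$ this gives $\partial\Delta d_n/\partial d=(d-\dno\gn)/R_n-1$; writing the square as $[(d-\dno\gn)-R_n]^2/R_n^2$ and using the elementary identity $(d-\dno\gn)^2+R_n^2=2d^2+\dno^2(\gn^2+1)-4\gn d\,\dno$ yields, with $R_n^2=\dno^2+d^2-2\gn d\,\dno$ as the common denominator, exactly \eqref{eq:Jkdfinal}. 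For $\param\in\{\theta,\phi\}$ the only $\param$-dependent quantity inside $R_n$ is $\gn$, so $\partial\Delta d_n/\partial\param=-d\,\dno\,(\partial\gn/\partial\param)/R_n$, whose square is $[d^2\dno^2/(\dno^2+d^2-2\gn d\,\dno)]\,(\partial\gn/\partial\param)^2$, i.e., the angular formulas \eqref{eq:Jkthetafinal}. No expansion of $\partial\gn/\partial\theta$ or $\partial\gn/\partial\phi$ from \eqref{eq:gn} is needed, which is why the result holds for an arbitrary array geometry.

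I do not anticipate a genuine obstacle: once the score is reduced to squared derivatives, the remainder is elementary calculus. The one point I would flag explicitly — and the reason the proposition is stated under weak regularity — is the non-differentiability of $h_n$ at the wrap points, handled by the $\Delta\vartheta_{n,k}$ convention; beyond that, I would stress that because $R_n$ is kept exact (no Fraunhofer/Fresnel expansion), \eqref{eq:Jkdfinal}--\eqref{eq:Jkthetafinal} hold for every source distance $d$ and every geometry, which is exactly what makes them the right starting point for the near-/far-field limits and for the Monte-Carlo evaluation of $\JM_k$ in \eqref{eq:hessian}.
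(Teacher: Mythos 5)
Your proposal is correct and follows essentially the same route as the paper's Appendix B: reduce the Gaussian Fisher information to $\frac{4\pi^2}{\lambdap^2\sigmaeta^2}\sumn\left(\frac{\partial \Delta d_n}{\partial \xi}\right)^2$ (invoking the footnote's convention at the wrap points) and then differentiate the exact extra-distance with no near-/far-field approximation. The only cosmetic difference is that you differentiate $R_n=\sqrt{\dno^2+d^2-2\,d\,\dno\,\gn}$ directly rather than the paper's equivalent parameterization $\Delta d_n=d\left(\sqrt{f_{n}}-1\right)$, and your squared expressions coincide with \eqref{eq:functions}--\eqref{eq:functions2}.
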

\begin{proof}
See Appendix B.
\end{proof}
Because \eqref{eq:Jkdfinal}-\eqref{eq:Jkthetafinal} are not easy to interp, we further simplify by focusing on planar circular and rectangular array geometries, and hence,  we assume
\begin{assumption}[Planar Circular Array]\label{as:A1} The distance and azimuth angle between the $n$th antenna and the reference location are $\dno=D/2,\,\phino =\frac{\pi}{2},\,\forall \, n$, whereas the elevation angle is set to  $\thetano  = \frac{n\, 2\, \pi}{N}$ (lying on $YZ$ plane).
\end{assumption}
\begin{assumption}[Source Position]\label{as:A2} The source is on the central perpendicular line, along the $x$-axis, so that $\theta=\frac{\pi}{2},\phi=0$, such that $g_n=0,\,\forall \, n=0, \ldots, N-1$.
\end{assumption}
\begin{prop}\label{prop:p1}
Under Assumption \ref{as:A1}, the \acp{FIM} in  \eqref{eq:Jkdfinal}-\eqref{eq:Jkthetafinal} for a generic source position are
\begin{align}\label{eq:Jkdfinal1}
&\tjM \left(d \right)\! = \!\frac{4\, \pi^2}{\lambdap^2\,\sigmaeta^2} \sumn \frac{1}{4\, d^2+D^2-4\, g_n d\, D}  \nonumber \\
&\phantom{\tjM \left(d \right)=}\times \Big[ 8\,d^2 + D^2\,(g_n^2+1) -8\, g_n\, d\, D  + \nonumber \\
&\phantom{J \left(d \right)=} -2 (2\,d- g_n D) \sqrt{4\, d^2+D^2-4\, g_n d\, D} \Big],\\
&\tjM \left(\theta \right)  = \frac{4\,\pi^2}{\lambdap^2 \,\sigmaeta^2}\sumn\frac{d^2\,D^2}{{D^2}+ 4\,d^2-4\, {g_n\,d\,D}} \left(\frac{\partial \gn}{\partial \theta} \right)^2, \\
&\tjM \left(\phi \right)  = \frac{4\,\pi^2}{\lambdap^2 \,\sigmaeta^2}\sumn\frac{d^2\,D^2}{{D^2}+ 4\,d^2-4\, {g_n\,d\,D}} \left(\frac{\partial \gn}{\partial \phi} \right)^2.
\label{eq:Jkthetafinal1}
\end{align}
\end{prop}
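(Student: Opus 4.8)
The plan is to derive \eqref{eq:Jkdfinal1}--\eqref{eq:Jkthetafinal1} by a direct specialization of the general Fisher-information expressions \eqref{eq:Jkdfinal}--\eqref{eq:Jkthetafinal} established in Proposition~\ref{prop:p0}. Assumption~\ref{as:A1} constrains \emph{only} the inter-antenna geometry with respect to the reference location, fixing $\dno = D/2$ for every $n$ (together with $\phino = \pi/2$ and $\thetano = 2\pi n/N$), while leaving the source position completely free. Consequently the quantities $\gn$, $\partial\gn/\partial\theta$ and $\partial\gn/\partial\phi$ appearing in \eqref{eq:Jkdfinal}--\eqref{eq:Jkthetafinal} do not need to be expanded: they are simply carried along, now understood as evaluated at the prescribed array angles through \eqref{eq:gn}. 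Hence the entire argument reduces to substituting $\dno = D/2$ and re-arranging the constants so that the denominators, polynomial numerators and square-root arguments are all scaled consistently.

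I would first dispatch $\tjM(\theta)$ and $\tjM(\phi)$, which are immediate. Setting $\dno = D/2$ turns the common factor $\dfrac{d^2\dno^2}{\dno^2 + d^2 - 2\gn\dno d}$ into $\dfrac{d^2 D^2/4}{D^2/4 + d^2 - \gn d D}$; multiplying numerator and denominator by $4$ gives $\dfrac{d^2 D^2}{D^2 + 4d^2 - 4\gn d D}$, which is precisely the coefficient appearing in \eqref{eq:Jkthetafinal1}, and the same manipulation applies verbatim to the $\phi$ component.

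The distance component requires one extra bookkeeping step. Introducing $A_n \triangleq 4d^2 + D^2 - 4\gn d D$, the substitution $\dno = D/2$ gives $d^2 + \dno^2 - 2\gn\dno d = A_n/4$ and $\sqrt{d^2 + \dno^2 - 2\gn d\dno} = \sqrt{A_n}/2$, turns the bracketed polynomial $2d^2 + \dno^2(\gn^2+1) - 4\gn d\dno$ into $2d^2 + \tfrac14 D^2(\gn^2+1) - 2\gn d D$, and turns the cross-term factor $2(d-\gn\dno)$ into $2d - \gn D$. Pulling the $1/(A_n/4) = 4/A_n$ prefactor through the bracket and collecting an overall factor of $4$ inside the sum then yields $\dfrac{1}{A_n}\big[\,8d^2 + D^2(\gn^2+1) - 8\gn d D - 2(2d-\gn D)\sqrt{A_n}\,\big]$, which is exactly the summand in \eqref{eq:Jkdfinal1}.

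There is essentially no analytical obstacle: the proof is a substitution and the only point requiring care is the \emph{uniform} rescaling by $4$ (equivalently, keeping track of $\dno^2 = D^2/4$ inside every denominator, numerator and radical) so that the three identities hold termwise in the sum $\sumn$. No further regularity assumptions or near-/far-field distinctions are needed beyond those already used in Proposition~\ref{prop:p0}; the closed forms \eqref{eq:Jkdfinal1}--\eqref{eq:Jkthetafinal1} then follow by summing over $n = 0,\dots,N-1$.
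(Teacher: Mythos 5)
Your proposal is correct and is essentially the paper's own argument: Proposition~\ref{prop:p1} is obtained by substituting $\dno = D/2$ (with the angles of Assumption~\ref{as:A1}) into \eqref{eq:Jkdfinal}--\eqref{eq:Jkthetafinal} and rescaling, exactly as you carry out, with the paper merely stating the substitution and you supplying the routine bookkeeping. The algebra (the factor-of-$4$ rescaling of denominators, numerators and radicals, and the cross-term $2(d-\gn D/2)\sqrt{A_n}/2 = (2d-\gn D)\sqrt{A_n}$) checks out.
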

\begin{proof}
From \eqref{eq:Jkdfinal}-\eqref{eq:Jkthetafinal}, they can be obtained by substituting $\left(\dno, \phino, \thetano \right)$ according to Assumption \ref{as:A1}.
\end{proof}
\begin{prop}\label{cor:c1}
Under Assumptions \ref{as:A1}-\ref{as:A2}, the \acp{FIM} in  \eqref{eq:Jkdfinal}-\eqref{eq:Jkthetafinal} for a circular array on the $YZ$-plane and a target on the $x$-axis are
\begin{align}\label{eq:CFIMd}
&\tjM\left(d \right) 
= \frac{4\, N\,\pi^2}{\lambdap^2\, \sigmaeta^2} \cdot \frac{2\, +\frac{D^2}{4\,d^2}-2\, \sqrt{1+\frac{D^2}{4\,d^2}}}{1+\frac{D^2}{4\,d^2}},  \\
&\tjM\left(\theta \right) = \tjM\left(\phi \right) = 
\frac{N\, \pi^2}{2\, \lambdap^2 \sigmaeta^2}\, \frac{D^2}{1+ \frac{D^2}{4\, d^2}}.
\label{eq:CFIMp}
\end{align}
\end{prop}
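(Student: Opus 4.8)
The plan is to specialise the reduced expressions already established in Proposition~\ref{prop:p1} (Eqs.~\eqref{eq:Jkdfinal1}--\eqref{eq:Jkthetafinal1}, which incorporate Assumption~\ref{as:A1}) by additionally imposing Assumption~\ref{as:A2}. The crucial observation is that Assumption~\ref{as:A2} forces $g_n=0$ for every $n$, so that the summand of $\tjM(d)$ loses all antenna dependence while the denominators appearing in $\tjM(\theta)$ and $\tjM(\phi)$ reduce to the constant $4d^2+D^2$.

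For the distance component I would substitute $g_n=0$ into \eqref{eq:Jkdfinal1}: the denominator becomes $4d^2+D^2$, the bracketed factor collapses to $8d^2+D^2-4d\sqrt{4d^2+D^2}$, and the sum over the $N$ antennas then contributes only a multiplicative factor $N$, giving $\tjM(d)=\frac{4N\pi^2}{\lambdap^2\sigmaeta^2}\cdot\frac{8d^2+D^2-4d\sqrt{4d^2+D^2}}{4d^2+D^2}$. The remaining manipulation to reach \eqref{eq:CFIMd} is purely cosmetic: divide numerator and denominator by $4d^2$ and use $\sqrt{4+D^2/d^2}=2\sqrt{1+D^2/(4d^2)}$.

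For the angular components, once $g_n=0$ the only nontrivial task is to evaluate $(\partial\gn/\partial\theta)^2$ and $(\partial\gn/\partial\phi)^2$ at the source position $\theta=\pi/2$, $\phi=0$ using the antenna angles of Assumption~\ref{as:A1}, namely $\phino=\pi/2$ and $\thetano=2\pi n/N$. Differentiating $\gn$ in \eqref{eq:gn} and inserting $\cos\theta=0$, $\sin\theta=1$ gives $\partial\gn/\partial\theta=-\cos\thetano$ and $\partial\gn/\partial\phi=\sin\thetano\sin\phino=\sin\thetano$, hence $(\partial\gn/\partial\theta)^2=\cos^2(2\pi n/N)$ and $(\partial\gn/\partial\phi)^2=\sin^2(2\pi n/N)$. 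I would then use $\cos^2 x=\tfrac{1}{2}(1+\cos 2x)$, $\sin^2 x=\tfrac{1}{2}(1-\cos 2x)$ together with $\sumn\cos(4\pi n/N)=0$ (valid for $N>2$) to get $\sumn(\partial\gn/\partial\theta)^2=\sumn(\partial\gn/\partial\phi)^2=N/2$. Substituting this into \eqref{eq:Jkthetafinal1} and rewriting $d^2D^2/(4d^2+D^2)=\tfrac{1}{4}D^2/(1+D^2/(4d^2))$ yields \eqref{eq:CFIMp}; the identity $\tjM(\theta)=\tjM(\phi)$ is then immediate because the two circular sums coincide.

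I do not expect a genuine obstacle here — the derivation is essentially algebraic bookkeeping. The one place needing care is the evaluation of the partial derivatives of $\gn$ exactly at the symmetric geometry (tracking which trigonometric factors vanish), and recognising that the $N>2$ hypothesis is precisely what makes the second harmonic in the circular sums average to zero; for $N\le 2$ the closed forms \eqref{eq:CFIMp} would require separate handling.
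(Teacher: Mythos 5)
Your proposal is correct and follows essentially the same route as the paper's Appendix~C: substitute $g_n=0$ from Assumption~\ref{as:A2} into \eqref{eq:Jkdfinal1}--\eqref{eq:Jkthetafinal1}, evaluate $\left(\partial \gn/\partial\theta\right)^2=\cos^2\!\left(2\pi n/N\right)$ and $\left(\partial \gn/\partial\phi\right)^2=\sin^2\!\left(2\pi n/N\right)$ at $\theta=\pi/2$, $\phi=0$, and use $\sumn\cos^2\!\left(2\pi n/N\right)=\sumn\sin^2\!\left(2\pi n/N\right)=N/2$ before the final algebraic rewriting. Your explicit remark that the circular sums equal $N/2$ only for $N>2$ is a small point the paper leaves implicit, but otherwise the two derivations coincide.
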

\begin{proof}
See Appendix C.
\end{proof}
\begin{figure}[t!]
\psfrag{T}[c][c][0.9]{Circular array}
\psfrag{x}[c][c][0.9]{$d/D$}
\psfrag{y}[c][r][0.9]{$\sqrtJd$}
\psfrag{data11111111111111111}[lc][lc][0.7]{$D=0.14\,$m}
\psfrag{data2}[lc][lc][0.7]{$d=\dF,\,D=0.14\,$m}
\psfrag{data4}[lc][lc][0.7]{$D\!=\!0.75\,$m}
\psfrag{data5}[lc][lc][0.7]{$d=\dF,\,D=0.75\,$m}
\psfrag{data1}[c][c][0.7]{\quad \,\,\, Threshold}
\centerline{
\includegraphics[width=0.85\linewidth,draft=false]
{./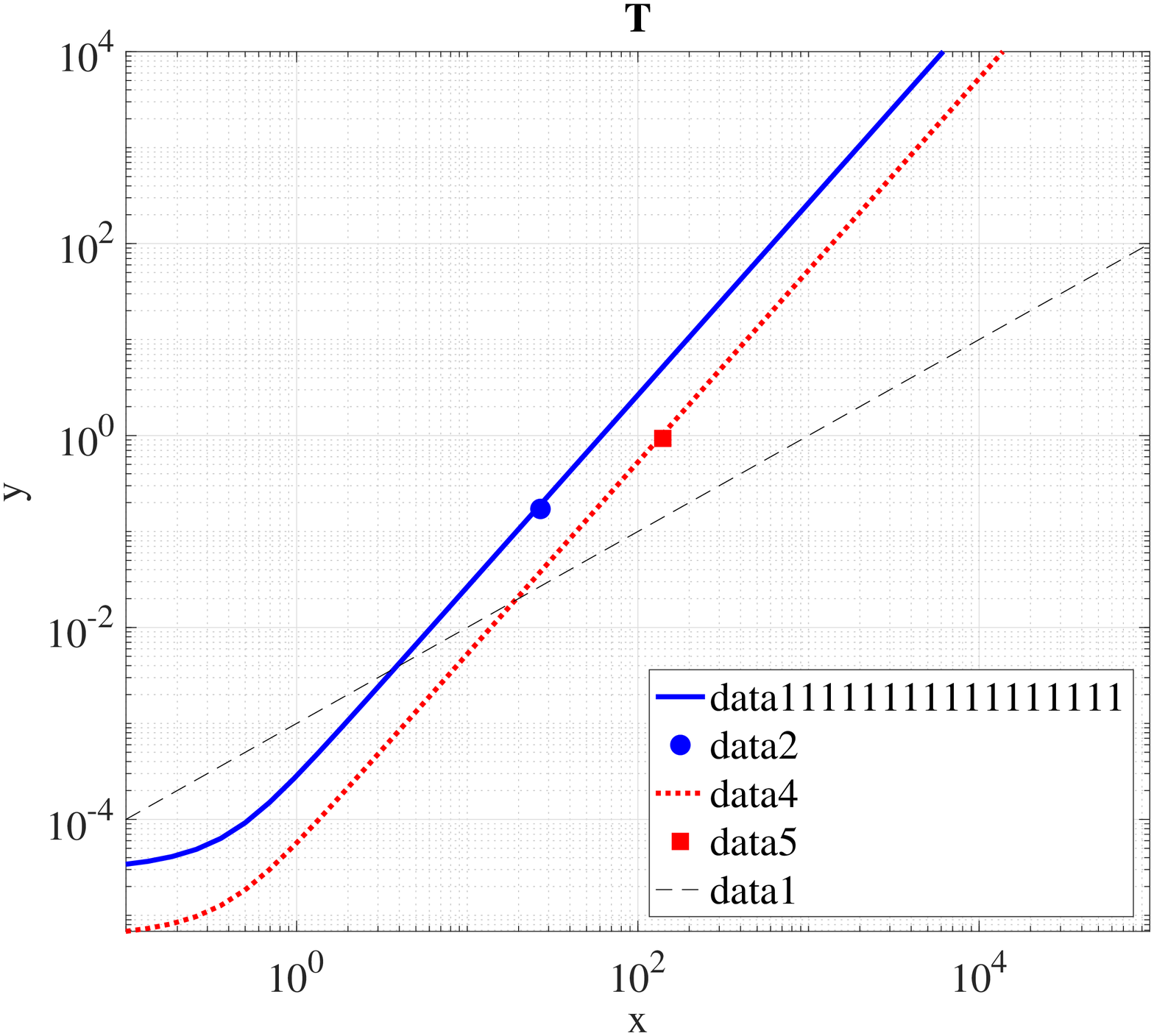}}
\caption{Ranging error for circular arrays and for different array diameters as a function of $d/D$. The central frequency is set to $28\,$GHz. The standard deviation of the measurement noise is fixed to $20^\circ$. The numbers of antennas are $N=400$ (blue continuous curve) and $N=10000$ (red dotted curve). The threshold line indicates a ranging error of $0.1\, \%$ of the actual distance.}
\label{fig:crbd_circ}
\end{figure}
\begin{rem}\label{rem:r1}
From \eqref{eq:CFIMd}-\eqref{eq:CFIMp}, we obtain the \acp{FIM} at the boundary of the Fresnel region ($\dk=\dF$), 
\begin{align}\label{eq:Jcircdf}
&\tjM\left(d \right)\! = \frac{4\, N\,\pi^2}{\lambdap^2\, \sigmaeta^2} \, \frac{2\, +\frac{\lambdap^2}{16\, D^2}-2\, \sqrt{1+\frac{\lambdap^2}{16\, D^2}}}{1+\frac{\lambdap^2}{16\, D^2}}, \\
&\tjM\left(\theta \right)  = \tjM \left(\phi \right)  = \frac{N\, \pi^2}{2\, \lambdap^2 \sigmaeta^2}\, \frac{D^2}{1+ \frac{\lambdap^2}{16\, D^2}}, 
\label{eq:Jcirctf}
\end{align}
and, for $\dk \gg \dF$ (far--field region), we get 
\begin{align}
\tjM\left(d \right)&=0,\\
\label{eq:asym}
\tjM\left(\theta \right)&=\tjM\left(\phi \right)=\frac{D^2\,N\, \pi^2}{2\, \lambdap^2 \sigmaeta^2}.
\end{align}
The results show the dependence of the FIMs on the diameter $D$ of the array and the number of measurements $N$. Further, they also reveal that the FIMs are inversely proportional to the measurement noise variance  $\sigmaeta^2$ and squared wavelength $\lambdap^2$.

Note that $\tjM\left(\theta \right)$ and $\tjM \left(\phi \right)$ in \eqref{eq:Jcirctf} tend to their asymptotic values in \eqref{eq:asym} because  $\frac{\lambda^2}{16\,D^2} \ll 1$. According to this result, 
outside the near--field region bounded by $\dF$, it is not possible to  retrieve the target position because the \ac{CoA} tends to vanish, despite the feasibility of estimating the angles.
Figure~\ref{fig:crbd_circ} displays the square root of the inverse of the ranging \ac{FIM}, $\tjM(d)$, as a function of the source-array distance normalized with respect to the array diameter. The markers indicate the value of \eqref{eq:Jcircdf} at the Fraunhofer distance. We observe from the obtained results that the ranging information depends on the ratio between the source distance $d$ and the array size, represented by its diameter $D$, and tends to decrease when this ratio is large. 
The figure also shows a threshold line that corresponds to $0.1\%$ error of the actual distance. We can see that the 
inverse of the ranging \ac{FIM}
is above the threshold outside the Fraunhofer boundary.
\end{rem}

We now consider a rectangular array lying on the $YZ$-plane with $\pok=\left[0, 0, 0 \right]$ and antennas equally spaced by $\lambdap/2$. The generic $n$th antenna is located at $\pnk= \frac{\lambdap}{2}\,\left[ 0, \ny, \nz \right]$ with $\ny=0, \ldots, \Ny-1$ and $\nz=0, \ldots, \Nz-1$ being the antenna index along the $y$- and $z$-axis, respectively. We, thus, have the following assumption:
\begin{assumption}[Planar Rectangular Array]\label{as:A3} The distance, azimuth and elevation angles between the $n$th antenna and the reference location are $\dno=\frac{\lambdap}{2} \, \sqrt{\ny^2+\nz^2} \triangleq \frac{\lambdap}{2} \, \ntilde$, $\phino= \frac{\pi}{2}$, and $\thetano=\operatorname{acos}\left( \frac{\nz}{\ntilde}\right) \in\left[0, \frac{\pi}{2} \right]$, respectively.
\end{assumption}
\begin{prop}\label{prop:p3}
Under Assumptions \ref{as:A2}-\ref{as:A3}, considering a rectangular array on the $YZ$-plane and a source on the $x$-axis, the \acp{FIM} in \eqref{eq:Jkdfinal}-\eqref{eq:Jkthetafinal} become
\begin{align}\label{eq:Jkdfinal2}
&\tjM\left(d \right)=  \frac{4\,\pi^2}{\lambdap^2\, \sigmaeta^2}\sumny\sumnz \frac{1}{4\, d^2+\lambdap^2\, \ntilde^2} \, \big[ 8\, d^2 + \nonumber \\
&\phantom{J_k \left(\dk \right)=} + \lambdap^2\, \ntilde^2-4\, d\, \sqrt{4\, d^2+\lambdap^2\, \ntilde^2} \big], \\
&\tjM\left(\theta \right)  =  \frac{4\,\pi^2}{ \sigmaeta^2}\sumny \sumnz \frac{\tilden^2\,d^2}{\tilden^2\,\lambdap^2+4\,d^2}\left(\cos \thetano\right)^2,\\
&\tjM \left(\phi \right)  =  \frac{4\,\pi^2}{ \sigmaeta^2}\sumny \sumnz \frac{\tilden^2\,d^2}{\tilden^2\,\lambdap^2+4\,d^2}\left(\sin \thetano\right)^2.
\label{eq:Jkthetaplanar}
\end{align}
\end{prop}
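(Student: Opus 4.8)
The plan is to derive \eqref{eq:Jkdfinal2}-\eqref{eq:Jkthetaplanar} by specializing the general expressions \eqref{eq:Jkdfinal}-\eqref{eq:Jkthetafinal} of Proposition~\ref{prop:p0} to the rectangular lattice of Assumption~\ref{as:A3} and the on-axis source of Assumption~\ref{as:A2}. Two geometric facts make the substitution immediate. First, at $\theta=\pi/2$, $\phi=0$ and $\phino=\pi/2$ one has $\cos\theta=0$ and $\cos(\phino-\phi)=\cos(\pi/2)=0$, so the geometric term in \eqref{eq:gn} satisfies $g_n=\sin\thetano\sin\theta\cos(\phino-\phi)+\cos\thetano\cos\theta=0$ for all $n$ (this is Assumption~\ref{as:A2}). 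Second, differentiating \eqref{eq:gn} and evaluating at the same point gives $\partial g_n/\partial\theta=\sin\thetano\cos\theta\cos(\phino-\phi)-\cos\thetano\sin\theta=-\cos\thetano$ and $\partial g_n/\partial\phi=\sin\thetano\sin\theta\sin(\phino-\phi)=\sin\thetano$, hence $(\partial g_n/\partial\theta)^2=\cos^2\thetano$ and $(\partial g_n/\partial\phi)^2=\sin^2\thetano$. The modulo discontinuity of $h_n$ is irrelevant here: exactly as in the proof of Proposition~\ref{prop:p0}, $h_n$ is replaced by $\Delta\vartheta_{n,k}$ in the differentiation, which is a smooth function of $\Delta d_{n,k}$ and therefore of $g_n$.

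For the ranging term I would plug $g_n=0$ into \eqref{eq:Jkdfinal}; the $-4g_n d\dno$ and $\dno^2 g_n^2$ contributions disappear and each summand collapses to $(2d^2+\dno^2-2d\sqrt{d^2+\dno^2})/(d^2+\dno^2)$. Setting $\dno=\tfrac{\lambdap}{2}\ntilde$ so that $\dno^2=\tfrac{\lambdap^2}{4}\ntilde^2$, multiplying numerator and denominator by $4$, and using $\sqrt{d^2+\tfrac{\lambdap^2}{4}\ntilde^2}=\tfrac12\sqrt{4d^2+\lambdap^2\ntilde^2}$ rewrites the summand as $(8d^2+\lambdap^2\ntilde^2-4d\sqrt{4d^2+\lambdap^2\ntilde^2})/(4d^2+\lambdap^2\ntilde^2)$; re-indexing the sum $\sumn$ as the planar sum $\sumny\sumnz$ then yields \eqref{eq:Jkdfinal2}.

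For the angular terms I would start from \eqref{eq:Jkthetafinal}, set $g_n=0$ so the denominators become $\dno^2+d^2$, and use the second geometric fact to replace the squared partials by $\cos^2\thetano$ and $\sin^2\thetano$. Substituting $\dno^2=\tfrac{\lambdap^2}{4}\ntilde^2$ and simplifying $\tfrac{d^2(\lambdap^2/4)\ntilde^2}{(\lambdap^2/4)\ntilde^2+d^2}=\tfrac{d^2\lambdap^2\ntilde^2}{\lambdap^2\ntilde^2+4d^2}$ cancels the overall $1/\lambdap^2$ prefactor, producing $\tjM(\theta)=\tfrac{4\pi^2}{\sigmaeta^2}\sumny\sumnz\tfrac{\ntilde^2 d^2}{\ntilde^2\lambdap^2+4d^2}\cos^2\thetano$ and, with $\sin^2\thetano$ in place of $\cos^2\thetano$, the stated $\tjM(\phi)$; together these are the two angular rows of \eqref{eq:Jkdfinal2}-\eqref{eq:Jkthetaplanar}.

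The argument is entirely computational, so there is no real obstacle. The one step that warrants care is the ranging simplification: one must push the factor $4$ consistently through both the bracketed numerator and the square root so that the common denominator $4d^2+\lambdap^2\ntilde^2$ appears cleanly and the square-root term ends up with coefficient $4d$ (rather than $8d$ or $2d$); this is the single place where a sign or factor slip is easy to make.
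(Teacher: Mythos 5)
Your proof is correct and follows essentially the same route as the paper, which simply states that the result follows by substituting $(\dno,\phino,\thetano)$ and $(\theta,\phi)$ from Assumptions~\ref{as:A2}--\ref{as:A3} into \eqref{eq:Jkdfinal}--\eqref{eq:Jkthetafinal}; your explicit evaluation of $g_n=0$, $(\partial g_n/\partial\theta)^2=\cos^2\thetano$, $(\partial g_n/\partial\phi)^2=\sin^2\thetano$ and the factor-of-four bookkeeping with $\dno=\tfrac{\lambdap}{2}\ntilde$ just fills in the algebra the paper leaves implicit.
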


\begin{proof}
From \eqref{eq:Jkdfinal}-\eqref{eq:Jkthetafinal}, the \acp{FIM} can be obtained by substituting $\left(\dno, \phino, \thetano \right)$ according to Assumption \ref{as:A2} and $\left(\theta, \phi \right)$ according to Assumption \ref{as:A3}.
\end{proof}

\begin{rem}\label{rem:r2} For asymptotic considerations, we specialize \eqref{eq:Jkdfinal2}-\eqref{eq:Jkthetaplanar} at the boundary of the Fresnel region ($d=\dF$), considering $D= \frac{\lambdap}{2} \Ntilde \triangleq \frac{\lambdap}{2} \sqrt{\Ny^2+\Nz^2}$, thus yielding 
\begin{align}
\tjM\left(d \right) & 
=\frac{4\, \pi^2}{\lambdap^2\, \sigmaeta^2} \sumnynz\frac{2\, +\left(\frac{\ntilde}{\Ntilde^2}\right)^2-2\, \sqrt{1+\left(\frac{\ntilde}{\Ntilde^2}\right)^2}}{1+\left(\frac{\ntilde}{\Ntilde^2}\right)^2}, \\
\tjM \left(\theta \right)&=\frac{\pi^2}{ \sigmaeta^2}\sumnynz \frac{\ntilde^2}{1+ \left(\frac{\ntilde}{\Ntilde^2}\right)^2}\left(\cos \thetano\right)^2,   \\
\tjM \left(\phi \right)&=\frac{\pi^2}{ \sigmaeta^2}\sumnynz \frac{\ntilde^2}{1+ \left(\frac{\ntilde}{\Ntilde^2}\right)^2}\left(\sin \thetano\right)^2, 
\end{align}
where $\sumnynz=\sumny \sumnz$, whereas, for $\dk \gg \dF$, we obtain 
\begin{align}\label{eq:FIMdinf}
&\tjM\left(d \right) = 0,  \\
&\tjM \left(\theta \right) 
=  \frac{\pi^2}{ \sigmaeta^2}\,\frac{\Ny \, \Nz \, \left(2 \, \Nz - 1 \right)\, \left( \Nz - 1 \right) }{6},  \\
&\tjM \left(\phi \right) 
=\frac{\pi^2}{ \sigmaeta^2}\,\frac{\Ny \, \Nz \left(2 \Ny - 1\right)\,\left(\Ny - 1\right)}{6}\,. \label{eq:FIMpinf}
\end{align}
Notably, due to the considered system geometry, the number of antennas on the $z$-axis, i.e., $\Nz$, augments the information in estimating the elevation angle, whereas $\Ny$ plays the same role for the azimuth. 
\begin{figure}[t!]
\psfrag{T}[c][c][0.9]{Rectangular array}
\psfrag{x}[c][c][0.9]{$d/D$}
\psfrag{y}[c][r][0.9]{$\sqrtJd$}
\psfrag{data11111111111111111111}[lc][lc][0.7]{$D=0.14\,\text{m},\,N=20\times 20$}
\psfrag{data2}[lc][lc][0.7]{$d=\dF,\,D=0.14\,$m}
\psfrag{data7}[lc][lc][0.7]{$D\!=\!0.75\,\text{m},\,N\!=\!100\times 100$}
\psfrag{data8}[lc][lc][0.7]{$d=\dF,\,D=0.75\,$m}
\psfrag{data1}[lc][lc][0.7]{Threshold}
\centerline{
\includegraphics[width=0.85\linewidth,draft=false]
{./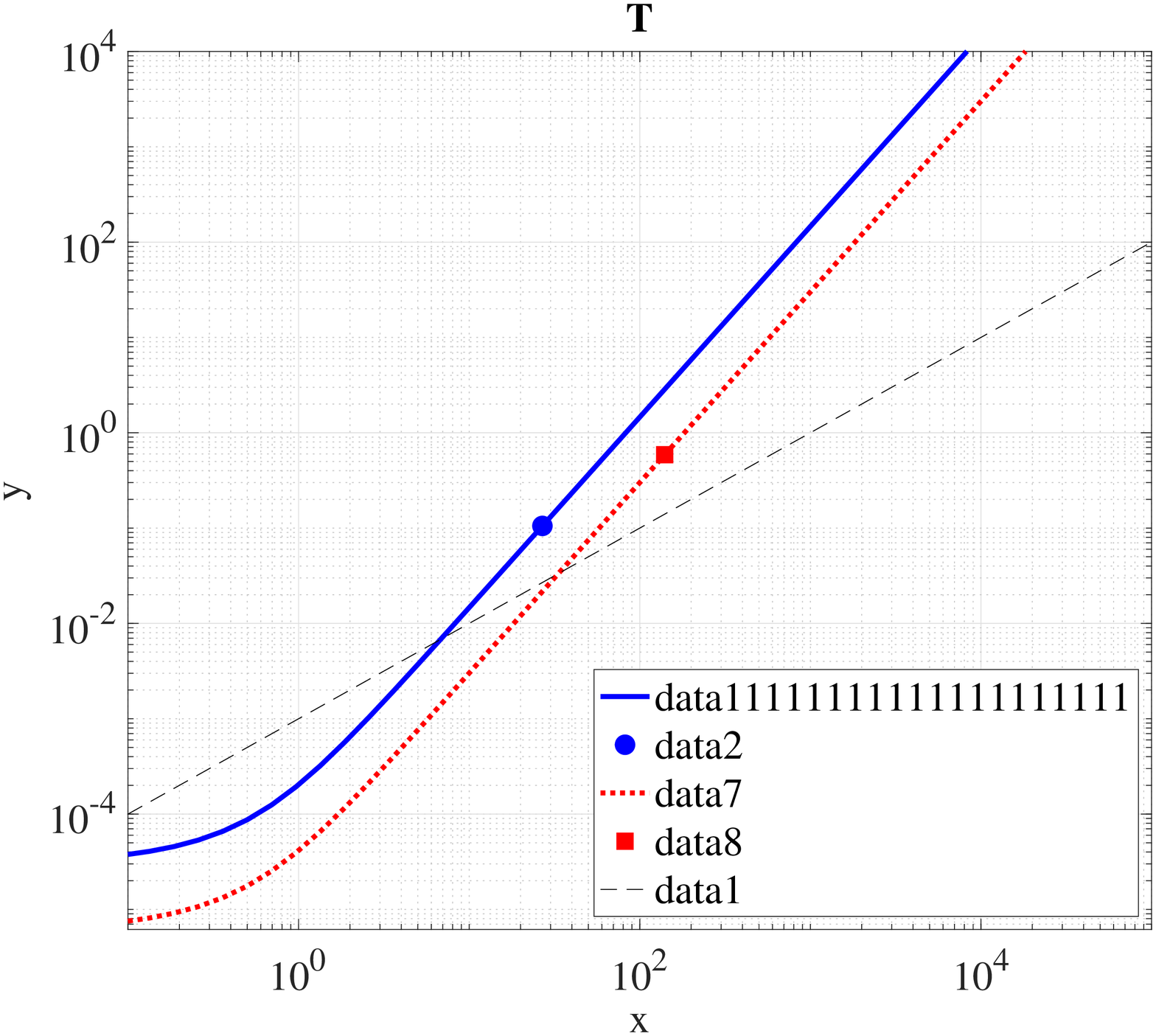}
}
\caption{Ranging error for rectangular arrays and for different array sizes as a function of $d/D$. The central frequency is $28\,$GHz. The measurement noise standard deviation is fixed to $20^\circ$. The number of antennas are $N=20 \times 20$ (blue continuous curve) and $N=100\times 100$ (red dotted curve). The threshold line indicates a ranging error of $0.1\, \%$ of the actual distance.}
\label{fig:crbd_rect}
\end{figure}
Figure~\ref{fig:crbd_rect} reports the square root of the inverse of the ranging \ac{FIM}  
as a function of the normalized distance for rectangular arrays. We notice that
the achieved performance is similar to that obtained for circular arrays because it is driven by the ratio $d/D$, where $D$ is the same in the two settings.  
\end{rem}

\section{Tracking Algorithms}
\label{sec:track}

We now provide an overview of some well-known tracking algorithms to assess their performance and their robustness, using the state-space model in \eqref{eq:motionmodel}-\eqref{eq:observationmodel}, with non--linear Gaussian observation model, and considering the \ac{CoA} for positioning.

\subsection{Extended Kalman Filter}
Among the Bayesian estimators, we start by describing the \ac{EKF}  accounting for the \ac{CoA} information in \eqref{eq:observationmodel}. 
The state is described by a Gaussian distribution, i.e., $\ssk \sim \mathcal{N}\left(\ssk;\bmk, \Pk \right)$, with $\bmk$ and $\Pk$ being the posterior mean vector and covariance matrix of the state. The major steps are reported in Algorithm~\ref{alg:EKF}, and can be described as follows \cite{sarkka2013bayesian}

\noindent \textbf{Initialization}: The \ac{EKF} is initialized by a prior distribution of the state, i.e., $\bsso \sim \po= \mathcal{N}\left(\bsso; \bmo, \Po \right)$.

\begin{algorithm}[t!]
\SetAlgoLined
\textbf{Initialization for $k=1$}:\\
Initialize the state $\bsso \sim  \mathcal{N}\left(\bsso; \bmo, \Po \right)$; \\
  \For{$k=1, \ldots, K$}{
  \textbf{Measurement update};\\
   Calculate the innovation and its covariance; \\
   $\vk=\zk-h\left(\bmkkmo\right)$; \\
   $\Sk=\Hk\Pkkmo\Hk^{\tra} + \Rk$; \\
   Compute the Kalman gain; \\
   $\Kk=\Pkkmo \Hk^{\tra}\,\Sk^{-1}$; \\
   Update the posterior state estimate and covariance; \\
   $\bmkk=\bmkkmo + \Kk\,\vk$; \\
    $\Pkk=\Pkkmo-\Kk\,\Sk\,\Kk^{\tra}$;\\
  \textbf{State Estimation};\\
  Estimate the state; \\ 
  $\hssk=\bmkk$; \\
  \textbf{Time Update}; \\
  Predict the prior state estimate and covariance; \\ 
  $\bmkpok = \A \, \bmkk$; \\
$\Pkpok = \A\, \Pkk\, \A^{\tra} + \Q$;\\
 }
 \caption{Extended Kalman Filter}
 \label{alg:EKF}
\end{algorithm}
\noindent \textbf{Measurement update}: The \ac{EKF} requires the evaluation of the Jacobian matrix associated to the linearization of the observation model $h \left(\psk \right)$ \cite{sarkka2013bayesian}, that can be written as
\begin{align}\label{eq:jacobian}
\Hk 
    &\triangleq \nabla_{\ssk}\, h \left(\psk \right),  
    \end{align}
where  $\nabla_{\ssk}$ is the gradient with respect to the state vector and where the $n$th row of $\Hk$ is given by
\begin{align}
\left\{\Hk\right\}_n
    &=\nabla_{\ssk} \hn\left(\psk \right)=\frac{2 \pi}{\lambdap}\,\nabla_{\ssk}\, \Delta \dnk\left( \psk\right) \,,
\end{align}
with $n=0, \ldots, N-1$ and where $\left\{ \cdot \right\}_n$ picks the $n$th row of $\Hk$ (refer to Appendix A). The Jacobian is evaluated at $\ssk=\bmkkmo$ where $\bmkkmo$ is the predicted state (for $k=1$, it is $\bmkkmo=\mathbf{m}_0$).
Then, following Alg.~\ref{alg:EKF}, the innovation mean and covariance ($\vk$, $\Sk$) and the Kalman gain $\Kk$ are computed and used to update the posterior mean vector $\bmkk$ and covariance matrix $\Pkk$ \cite{sarkka2013bayesian}.

\noindent \textbf{Time update}:
The \ac{EKF} prediction step makes use of the transition model in \eqref{eq:motionmodel}, leading to the estimation of the expected conditional mean $\bmkkmo$ and covariance matrix $\Pkkmo$ for the next time instant.

Unfortunately, the considered differential phases present strong non-linearities {(see Fig.~\ref{fig:phase_ex})} that can hardly be handled by the \ac{EKF} itself, so in the next we will approach the tracking problem by using particle filtering suitable for any transition and observation densities.

\subsection{Particle Filter}
A \ac{PF} exploits the representation of an arbitrary \ac{PDF} by a set of particles and associated weights and where a central role plays a sequential \ac{SIR}  procedure. 
This approach is especially useful for nonlinear non-Gaussian models \cite{DjuEtAl:J03,DjuVemBug:J08}. The goal is the sequential estimation of a filtering distribution, i.e., $p\left(\ssk|\zotok\right)$.
Indeed, this distribution cannot be analytically solved apart from very few cases and, thus, the common procedure  is to exploit discrete random measures composed of particles and weights $\left\{\sskm, \weightkm \right\}_{m=1}^M$, that are possible values of the unknown state $\ssk$, where $M$ is the number of particles \cite{li2015resampling}.
Then, the \ac{PF} can be described by following three major steps reported in Algorithm~\ref{alg:PF}.

\noindent \textbf{Sampling step}: 
The first step is the generation of new particles for the time instant $k$. The particles are drawn from an \ac{IS} density $\pi\left( \cdot \right)$ as $\sskm \sim \pi\left( \ssk \lvert \psskm, \zotok \right)$.
In the sequel, we will review possible choices for the \ac{IS} density.

\noindent \textbf{Importance step}:
Subsequently, the weights $\weightkm$ associated with each particle are computed and normalized. 
The estimate of the state $\hssk$ is inferred as a weighted sum of particles.
\begin{algorithm}[t!]
\SetAlgoLined
\textbf{Initialization for $k=1$}:\\
Initialize the particles $\ssom  \sim p_0$ and their weights $w_{m, 1|0}=1/M, \,\, \forall m$; \\
  \For{$k=1, \ldots, K$}{
  \textbf{Measurement update};\\
  Update the weights according to the likelihood; \\
  $\weightkm\triangleq\weightkkm=\weightkkmom p\left({\zk} \lvert \sskm  \right)$; \\
  Normalize the weights; \\ 
  $\weightkkm= \weightkkm/\sum_m^M \weightkkm $;\\
  \textbf{State Estimation};\\
  Estimate the state;; \\ $\hssk=\sum_m^M \weightkkm \, \sskm$; \\
  \textbf{Resampling};\\
  Resample using multinomial resampling;\\ 
  \textbf{Time Update}; \\
  Predict particles and weights according to the proposal  in \eqref{eq:pproposal}; \\
  $\sskmpo \sim \pi\left(\fssk \lvert \sskm, \fzk \right)$\\
  $\weightkpokm=\weightkkm \frac{p\left(\sskmpo \lvert \sskm \right)}{\pi\left(\sskmpo \lvert \sskm, \fzk \right)}$;
 }
 \caption{Particle Filter}
 \label{alg:PF}
\end{algorithm}
\noindent\textbf{Resampling}:
Finally, to avoid the \textit{degeneracy} problem \cite{li2015resampling} where few particles are dominant, a \textit{resampling} strategy is typically adopted. 
Resampling permits particles with large weights to dominate over particles with small weights, so that at the next time instant, new particles will be generated in the region where large weights are present. After the resampling, weights are also set to be equiprobable, i.e., to $1/M$.

\acp{PF} have become a popular approach because of their ability to operate with models of any nonlinearity and with any noise distributions for as long the likelihoods and the transition pdf that arise from the model \eqref{eq:motionmodel}--\eqref{eq:observationmodel} are computable. 
However, their computational complexity may be high if the number of particles becomes very large
\cite{li2015resampling,doucet2000sequential}.
\subsection{The \acf{IS} density}
The choice of the proposal distribution is one of the most crucial and critical tasks when implementing \ac{PF}s. We now describe some options that perform differently according to the quality of the adopted models.
\paragraph{IS from the prior}
In this case, the \ac{IS} density 
is set equal to the transition distribution function, which is
\begin{align}\label{eq:pproposal}
\sskm& \sim p\left( \sskm \lvert \sskmo \right)=\mathcal{N}\left(\sskm; \A\, \sskmo, \Q\right),
\end{align}
with $\sszm \sim p_0=p\left(\ssz \right)$ being the prior information on the state. As a result, the weights of the particles can be computed using the \ac{LF} of each particle, i.e.,
\begin{align}\label{eq:wup}
&\weightkm = \weightkmo  \,p \left(\zk|\sskm \right). 
\end{align}
A major drawback of this solution is that particles are propagated without taking into consideration the newest measurements $\zk$. Even though the latest measurement is not used for generating new particles, PFs work surprisingly well in most settings. One exception is when the likelihood of the particles is very sharp in comparison of the prior.

\paragraph{IS from the likelihood}

In situations where the likelihood function is much more informative than the prior distribution, a possible alternative to prior \ac{IS} is IS from the likelihood, that is, where particles are generated directly from the likelihood. 
A possibility is to run a \ac{MLE} at each time instant $k$, that, differently from Bayesian approaches, takes only into account the observation model and the latest set of measurements, while neglecting any statistical information regarding the state and its transition model.
More specifically, the state is inferred by solving the following maximization problem:
\begin{align}\label{eq:mle}
&\hbsml = \arg \underset{\ssk}{\max}\, \ln p\left( \bzk \lvert \ssk \right)\,,
\end{align}
where, since the measurements are considered independent at each antenna, we have 
\begin{align}\label{eq:likelihood}
&p\left(\bzk |\ssk \right) = \prod_{n=0}^{N-1} p\left(\znk|\ssk \right)= \prod_{n=0}^{N-1} \mathcal{N}\left(\znk; \hn\left(\psk\right), \sigmaketa^2 \right) \nonumber \\
&= \frac{1}{\sqrt{2\, \pi}\,\sigmaketa}\,  \text{exp}\!\left(- \frac{\sum_{n=0}^{N-1} \left( \znk\! - \!\hn\left(\psk\right)\right)^2 }{2\, \sigmaketa^2 }\right)\!.
\end{align}
Then, the particles are generated from a Gaussian distribution centered at the \ac{ML} estimate by 
\begin{align}\label{eq:lproposal}
\sskm \sim \mathcal{N}\left(\sskm;\hbsml, \BSml \right)  \,,
\end{align}
where $\BSml$ is the covariance matrix that determines how the particles are spread around the \ac{ML} estimate. 
With such choice of IS,  the weights are updated by 
\begin{align}\label{eq:lwupdate}
\weightkm=\weightkmo \frac{p\left({\zk} \lvert \sskm \right)p\left(\sskm\lvert \sskmo \right)}{\mathcal{N}\left(\sskm;\hbsml, \BSml \right)} \,.
\end{align}
Notably, in the considered approach it might happen that $\weightkm\approx0,\,\forall \, m$,
due to the mismatch between the likelihood $p\left({\zk} \lvert \sskm \right)$ and the transition $p\left(\sskm\lvert \sskmo \right)$ in \eqref{eq:lwupdate}. To overcome such an issue, we included a control such that, if all the weights are zeros (or below a certain threshold related to the numerical accuracy), we reset them to $\weightkm=1/M,\,\forall\,m$.

\paragraph{Optimal \ac{IS}} The use of the transition density as an importance function may create ambiguity problems because it does not depend on the new measurements $\zk$. At the same time, the likelihood \ac{IS} does not account for the transition model. Consequently, a possible choice for the optimal \ac{IS} is to directly sample from the posterior \cite{gustafsson2010particle,doucet2000sequential,bunch2013particle} 
\begin{align}
\pi\left( \ssk \lvert \sskmo, \zk\right) 
&= \frac{ p\left(\zk \lvert \ssk\right)  p\left( \ssk \lvert \sskmo\right)}{ \int p\left( \zk \lvert \ssk \right)  p\left( \ssk \lvert \sskmo\right) d \ssk},
\end{align}
where an analytical form can be found  if the observation function is linear and the noises in the state and observation equations are Gaussian and additive.

\paragraph{Local linearisation of the optimal \ac{IS}} In our case, since the observation function is nonlinear, we perform a local linearisation around the predicted state, as done for the \ac{EKF},  with the purpose of deriving a closed-form expression for the proposal density. 
In particular, we have 
\begin{align}\label{eq:liksampTaylor}
&\zk \approx  h\left( f\left( \pssk \right)\right) +  \Hkkmo \left(\ssk -f\left( \pssk \right) \right) + \betak,
\end{align}
where $\Hkkmo \triangleq \Hk \Big\lvert_{\ssk = f\left( \pssk \right)}$ is the $N \times \Ns$ Jacobian matrix in \eqref{eq:jacobian} evaluated at the predicated state, i.e., at $\ssk=f\left( \pssk \right)=\A\, \pssk$. 
Then we can express \eqref{eq:liksampTaylor} as a  function of the state, i.e.,
\begin{align}\label{eq:liksampTaylor2}
 \ssk &=\Hkkmo^{\dagger}\, \left(\zk - h\left( f\left( \pssk \right)\right) - \betak \right) + f\left( \pssk \right) \nonumber \\
&=\Hkkmo^{\dagger}\, \left(\zk - h\left( f\left( \pssk \right)\right)\right) + f\left( \pssk \right) + \tbetak,
\end{align}
where $\dagger$ is the pseudo-inverse operator, $\Hkkmo^{\dagger}$ is the $\Ns \times N$ Moore-Penrose inverse of the predicted Jacobian matrix, $\tbetak \sim \mathcal{N}\left( 0, \tRk\right)$ and $\tRk=\Hkkmo^{\dagger} \,\Rk\, \left(\Hkkmo^{\dagger}\right)^{\tra}$.  Consequently, by considering the product of Gaussian densities, 
%
%
it is possible to derive a density for the state that is
\begin{align}\label{eq:poptimal}
&\sskm \sim p\left( \sskm \lvert \sskmo, \zk\right) 
\approx \mathcal{N}\left( \sskm; \bmumk, \Smk \right),
\end{align}
where the mean and covariance matrix are derived  as \cite{bunch2013particle} 
\begin{align}
\label{eq:covopt}
&\Smk^{-1}=\Q^{-1} + \tRmk^{-1}, \\ 
&\bmumk=\Smk\,\Bigg[ \Q^{-1}\, f\left( \psskm \right) +  \Hmkkmo^{\tra} \, \R^{-1}  \nonumber \\
&\times \bigg( \zk - h\left( f\left(\psskm \right)\right)  + \Hmkkmo \,  f\left( \psskm\right)   \bigg) \Bigg], \label{eq:mopt}
\end{align}
where $\Hmkkmo$ and  $\tRmk$ are computed at the particle states.
In this case, the weights associated with each particle are obtained by 
\begin{equation}\label{eq:w11}
    \weightkm= \weightkmo\, \frac{p\left(\zk \lvert \sskm \right)p\left(\sskm\lvert \sskmo \right)}{\mathcal{N}\left( \sskm; \bmumk, \Smk \right)}.
\end{equation}

The optimal \ac{IS} represents a trade-off between the prior and the likelihood \ac{IS} and it provides good performance when both models are  accurate. 
In the following, we evaluate and compare their performances.

\section{Case Study}\label{sec:results}

\subsection{Simulation Parameters}
\label{sec:casestudy_sp}
We now evaluate the tracking performance and the theoretical bound by varying the array size and the model parameters.
To this purpose, we set $\lambda=0.01\,$m, and the number of particles to $M=1000$, if not otherwise indicated.

A large array was placed
in the origin, i.e., the reference location was $\pok \triangleq \left[\xo,\, \yo,\, \zo \right]^{\tra}\, \text{(m)}=\left(0,0,1 \right)$, and we alternatively considered a planar rectangular array  lying on the $YZ$-plane with $N=20 \times 20$ or $N=30\times 30$ antennas. 

The initial state of the target at time instant was $\ssz \triangleq \left[x_{0},\, y_{0},\, z_{0}\, v_{\mathsf{x},0}\, v_{\mathsf{y}, 0}\, v_{\mathsf{z}, 0} \right]^{\tra} = \left(2.5,\, -9.1,\, 1.5,\, 0.01,\, 0.97,\, 0 \right)$, where the simulation step was fixed to $\tau=1\,$second, the position and velocity coordinates were in (m) and (m/step), respectively. The total number of time instants was $K=20\,$.

The actual transition of the source followed the linear model in \eqref{eq:motionmodel} with the transition function and covariance matrix set to have a nearly constant velocity movement according to
\begin{align}\label{eq:tmodel}
&\A=\left[\begin{array}{cc} \mathbf{I}_3 & \tau\, \mathbf{I}_3 \\
\mathbf{0}_3 & \mathbf{I}_3
\end{array} \right], &&\Q=\left[\begin{array}{cc}
\frac{\tau^3}{3} \, \Q_{\mathsf{a}}& \frac{\tau^2}{2} \, \Q_{\mathsf{a}}\\
\frac{\tau^2}{2} \, \Q_{\mathsf{a}} & \tau \, \Q_{\mathsf{a}}
\end{array} \right],
\end{align}
where   $\Q_{\mathsf{a}}$ is a diagonal matrix containing the variances of the change in accelerations, i.e., $\Q_{\mathsf{a}} = \operatorname{diag}\left(\sigma^2_{\mathsf{a},\mathsf{x}}, \sigma^2_{\mathsf{a},\mathsf{y}} ,\sigma^2_{\mathsf{a},\mathsf{z}}  \right)$, where $\sigma^2_{\mathsf{a},\mathsf{x}}=\sigma^2_{\mathsf{a},\mathsf{y}}=\gammat\, 0.03^2\, \left(\text{m}^2/\text{step}^6\right)$, {with $\gammat=1$}, and $\sigma^2_{\mathsf{a},\mathsf{z}}=0$. {Instead, for the tracking estimator}, we considered alternatively $\gammat=1$ and $\gammat=10$, that represented the possibility to work with a transition model that was the same as the one used for the actual target trajectory (transition parameter match - $\textsf{TM}_0$) or not (transition parameter mismatch - $\textsf{TM}_1$), respectively. 
The measurements were
generated using the model described by
\eqref{eq:observationmodel}-\eqref{eq:observationmodel_v2}, where the noise standard deviation was set to
$\sigmaketa=\sigma \cdot (1+ \gammam)$ with $\sigma=20^\circ$ (if not otherwise indicated) and where $\gammam=0$ (i.e., $\sigmaeta=20^\circ$) and $\gammam=1$ (i.e., $\sigmaeta=40^\circ$) denote a model parameter match (measurement parameter match - $\textsf{MM}_0$) or mismatch (measurement parameter match - $\textsf{MM}_1$), respectively.
\begin{figure}[t!]
\psfrag{x}[c][c][0.7]{$x$ [m]}
\psfrag{y}[c][c][0.7]{$y$ [m]}
\psfrag{0}[c][c][0.7]{$0$}
\psfrag{0.1}[c][c][0.7]{}
\psfrag{0.2}[c][c][0.7]{$0.2$}
\psfrag{0.3}[c][c][0.5]{}
\psfrag{0.4}[c][c][0.7]{$0.4$}
\psfrag{0.5}[c][c][0.5]{}
\psfrag{0.6}[c][c][0.7]{$0.6$}
\psfrag{0.7}[c][c][0.5]{}
\psfrag{0.8}[c][c][0.7]{$0.8$}
\psfrag{0.9}[c][c][0.7]{}
\psfrag{1}[c][c][0.7]{$1$}
\psfrag{1.5}[c][c][0.5]{}
\psfrag{2}[c][c][0.7]{$2$}
\psfrag{2.5}[c][c][0.5]{}
\psfrag{3}[c][c][0.7]{$3$}
\psfrag{3.5}[c][c][0.5]{}
\psfrag{4}[c][c][0.7]{$4$}
\psfrag{4.5}[c][c][0.5]{}
\psfrag{5}[c][c][0.7]{$5$}
\centerline{
\psfrag{t}[c][c][0.7]{$4\times 4$, $d=2.15$ m (12.5 $\dF$)}
\includegraphics[width=0.85 \linewidth,draft=false]
{./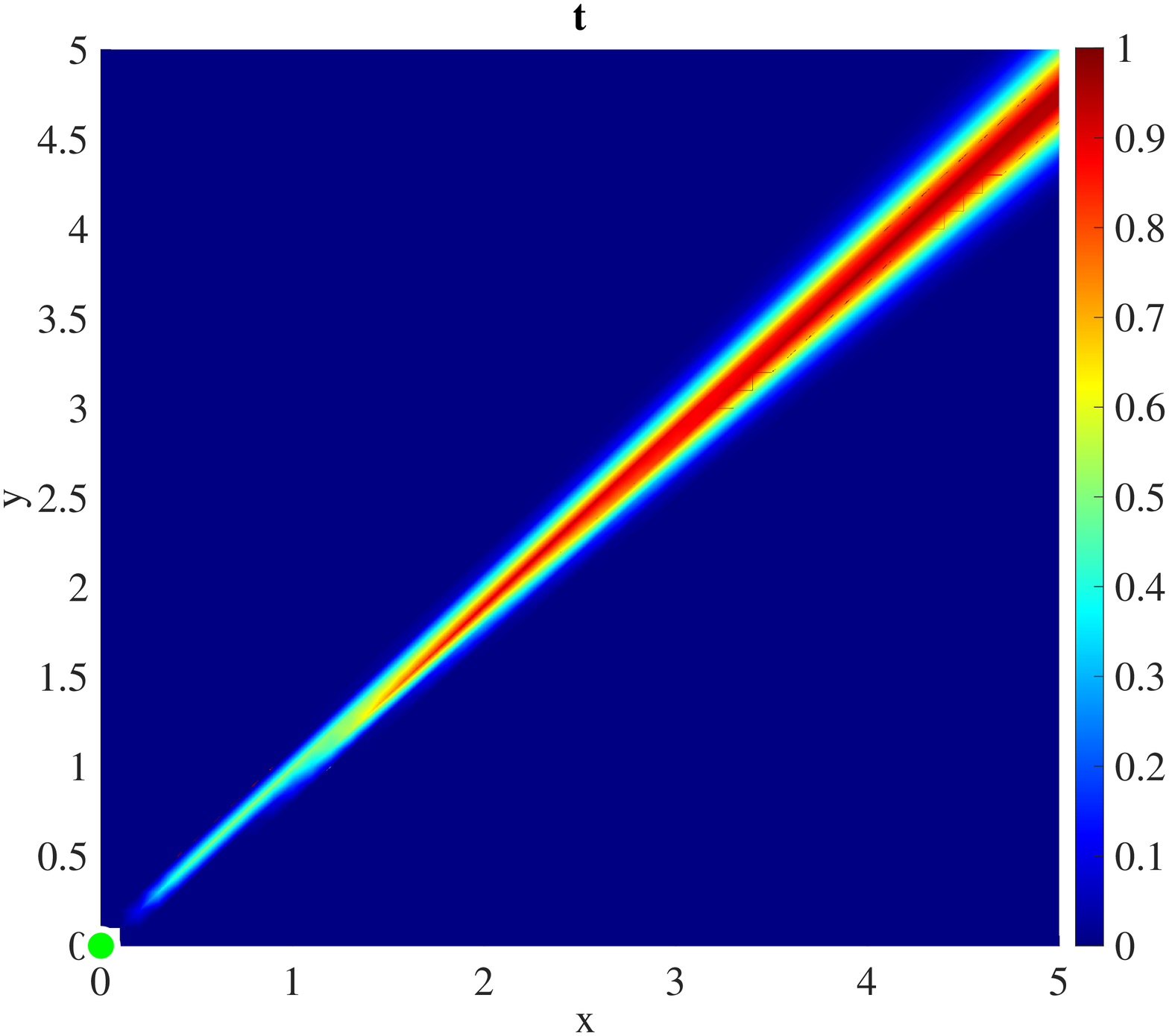}} \vspace{0.2cm}
\centerline{
\psfrag{t}[c][c][0.7]{$20\times 20$, $d=2.15$ m (0.5 $\dF$)}
\includegraphics[width=0.85 \linewidth,draft=false]
{./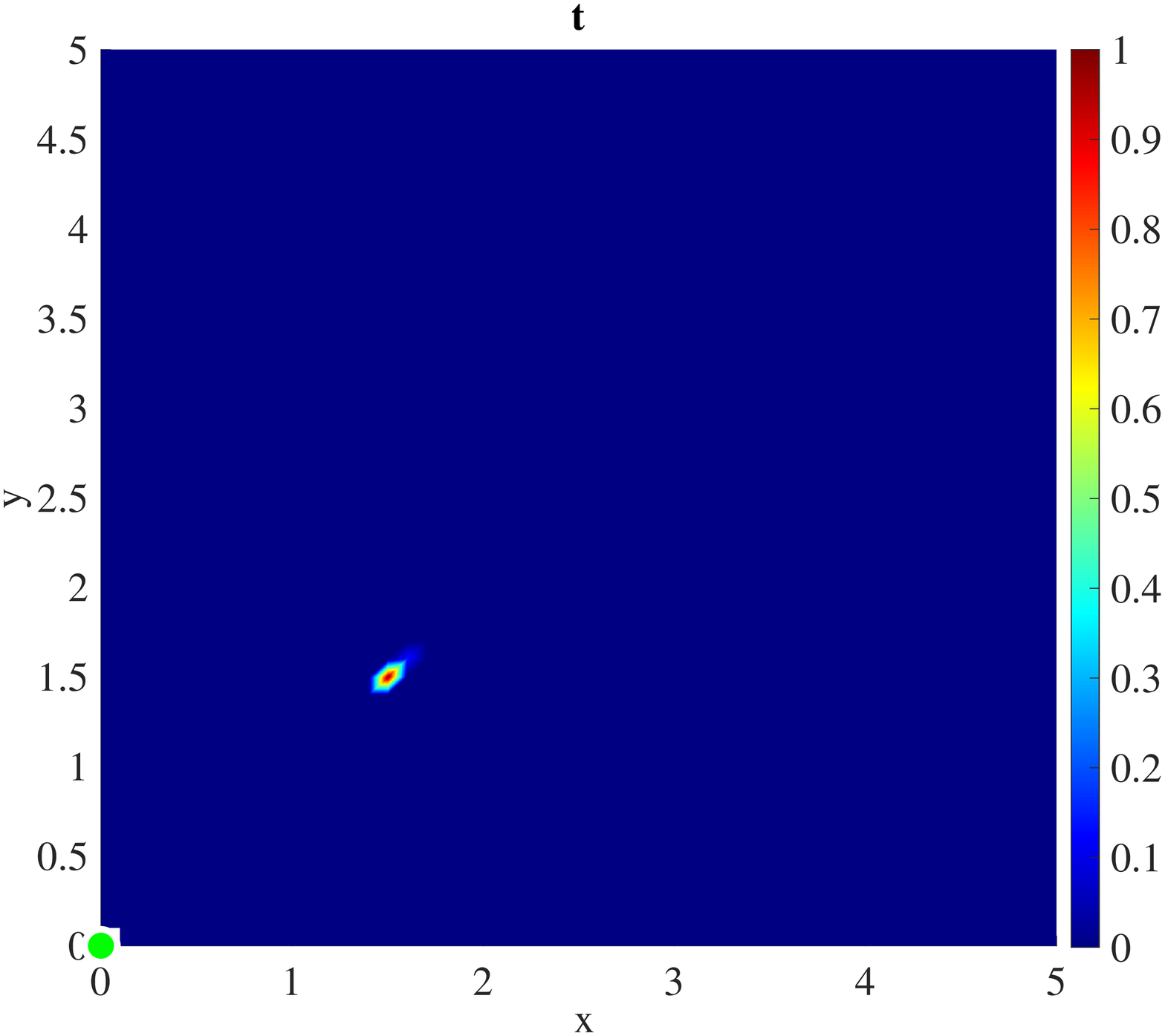}
}
\caption{Normalized \ac{LF} for planar arrays with $4 \times 4$ (top) and $20 \times 20$ (bottom) antennas on the $YZ-$ plane, with $\sigmaeta=20^\circ$. 
The receiver and target locations were in $
\left[0,0,1\right]$ and in $\left[1.51,\,1.51,\, 1\right]$, respectively.
}
\label{fig:ML_1}
\end{figure}

The \ac{EKF} and the particles were initialized according to
\begin{align}
&\mathbf{m}_0=\mathbf{s}_{m, 0}=\mathcal{N}\left(\mathbf{s}_{0}, \bm{\Sigma}_0 \right), \\
&\mathbf{P}_0 \!=\! \bm{\Sigma}_0 
\!=\! \operatorname{diag}\left(0.5^2, 0.5^2 ,0.01^2, \frac{v_{\mathsf{x},0}^2}{100} , \frac{v_{\mathsf{y},0}^2}{100} , \frac{v_{\mathsf{z},0}^2}{100}  \right)\,,
\end{align}
if not otherwise indicated. In  the \ac{PF} method, we exploited the multinomial resampling  strategy \cite{li2015resampling}. For the likelihood \ac{IS}, we set $\bm{\Sigma}_{\mathbf{s}}=\bm{\Sigma}_0, \forall k$.

For the \ac{MLE}, we used a scatter search algorithm implemented in
MATLAB~R2020a software ($\operatorname{GlobalSearch}$ command)
to find a global minimum \cite{ugray2007scatter,glover1998template} 

\subsection{Numerical results}

\subsubsection{Likelihood function}
We first investigated the \ac{LF} shape considering the observation model in \eqref{eq:observationmodel_v2}   and a target located inside and outside the Fresnel region, delimited by $\dF$.
To that end, we considered a $5\times 5\,$m$^2$ grid of points equally spaced with a step of $0.1\,$m corresponding to the state $\ssk^i$, with $i$ being the index of the $i$th grid point and $k$ the time instant. For each test position, we computed the \ac{LF} related to the actual state. 

Fig.~\ref{fig:ML_1} shows the normalized \ac{LF}  for  $4\times 4$ and  $20\times 20$ arrays. The target was located at a distance of $d=2.15\,$m from the array that corresponds to $0.5\,\dF$ for the $20\times 20$ array and $12.5\,\dF$ for the $4\times 4$ array. We notice that the \ac{LF} is peaky and focused on the target's position when a large array is used as the target falls in its near--field region, whereas it becomes less and less sharp and with ambiguities when exiting the Fresnel region because the effect of the \ac{CoA} tends to vanish. Nevertheless, as demonstrated in Sec~\ref{sec:tracklim}, there is no variation in the performance of the angle estimation when moving from the near--field to the far--field region, as it is also evident from the sector shape in Fig.~\ref{fig:ML_1} (top).

\begin{figure}[t!]
\psfrag{x}[c][c][0.7]{$x$ [m]}
\psfrag{y}[c][c][0.7]{$y$ [m]}
\psfrag{10}[c][c][0.5]{$10$}
\psfrag{8}[c][c][0.5]{$8$}
\psfrag{6}[c][c][0.5]{$6$}
\psfrag{4}[c][c][0.5]{$4$}
\psfrag{2}[c][c][0.5]{$2$}
\psfrag{0}[c][c][0.5]{$0$}
\psfrag{-2}[c][c][0.5]{\!\!\!$-2$}
\psfrag{-4}[c][c][0.5]{\!\!\!$-4$}
\psfrag{-6}[c][c][0.5]{\!\!\!$-6$}
\psfrag{-8}[c][c][0.5]{\!\!\!$-8$}
\psfrag{-10}[c][c][0.5]{\!\!\!$-10$}
\psfrag{data1111111111111}[lc][lc][0.5]{Array}
\psfrag{data2}[lc][lc][0.5]{True trajectory}
\psfrag{data3}[lc][lc][0.5]{EKF}
\psfrag{data4}[lc][lc][0.5]{MLE}
\psfrag{data5}[lc][lc][0.5]{PF - P-IS}
\psfrag{data6}[lc][lc][0.5]{PF - L-IS}
\psfrag{data7}[lc][lc][0.5]{PF - LO-IS}
\centerline{
\psfrag{T}[c][c][0.7]{$20\times 20, \sigmaeta=10^\circ$}
\includegraphics[width=0.5 \linewidth,draft=false]
{./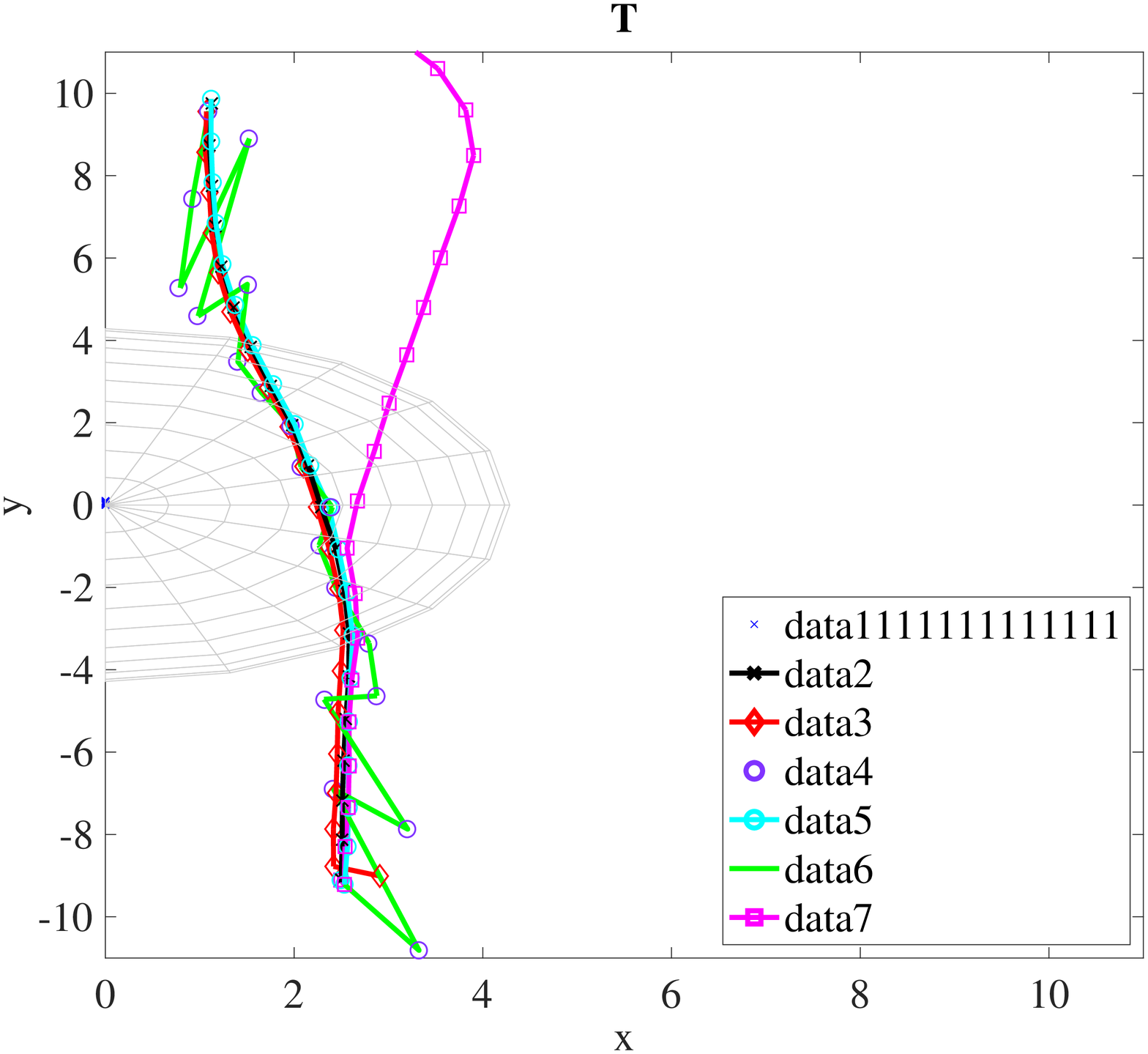}
\psfrag{T}[c][c][0.7]{$20\times 20, \sigmaeta=20^\circ$}
\includegraphics[width=0.5 \linewidth,draft=false]
{./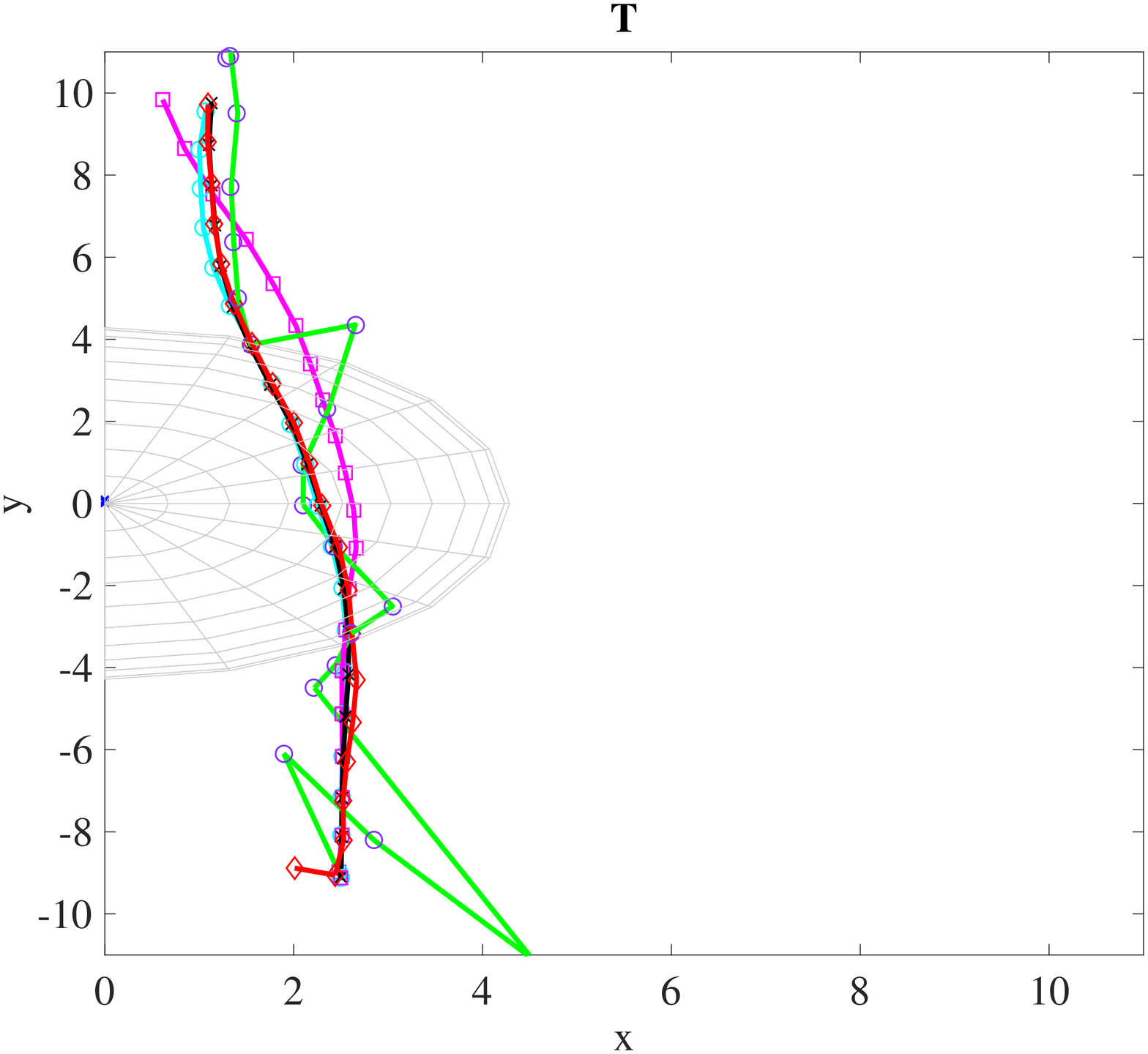}
} 
\vspace{0.2cm}
\centerline{
\psfrag{T}[c][c][0.7]{$30\times 30, \sigmaeta=10^\circ$}
\includegraphics[width=0.5 \linewidth,draft=false]
{./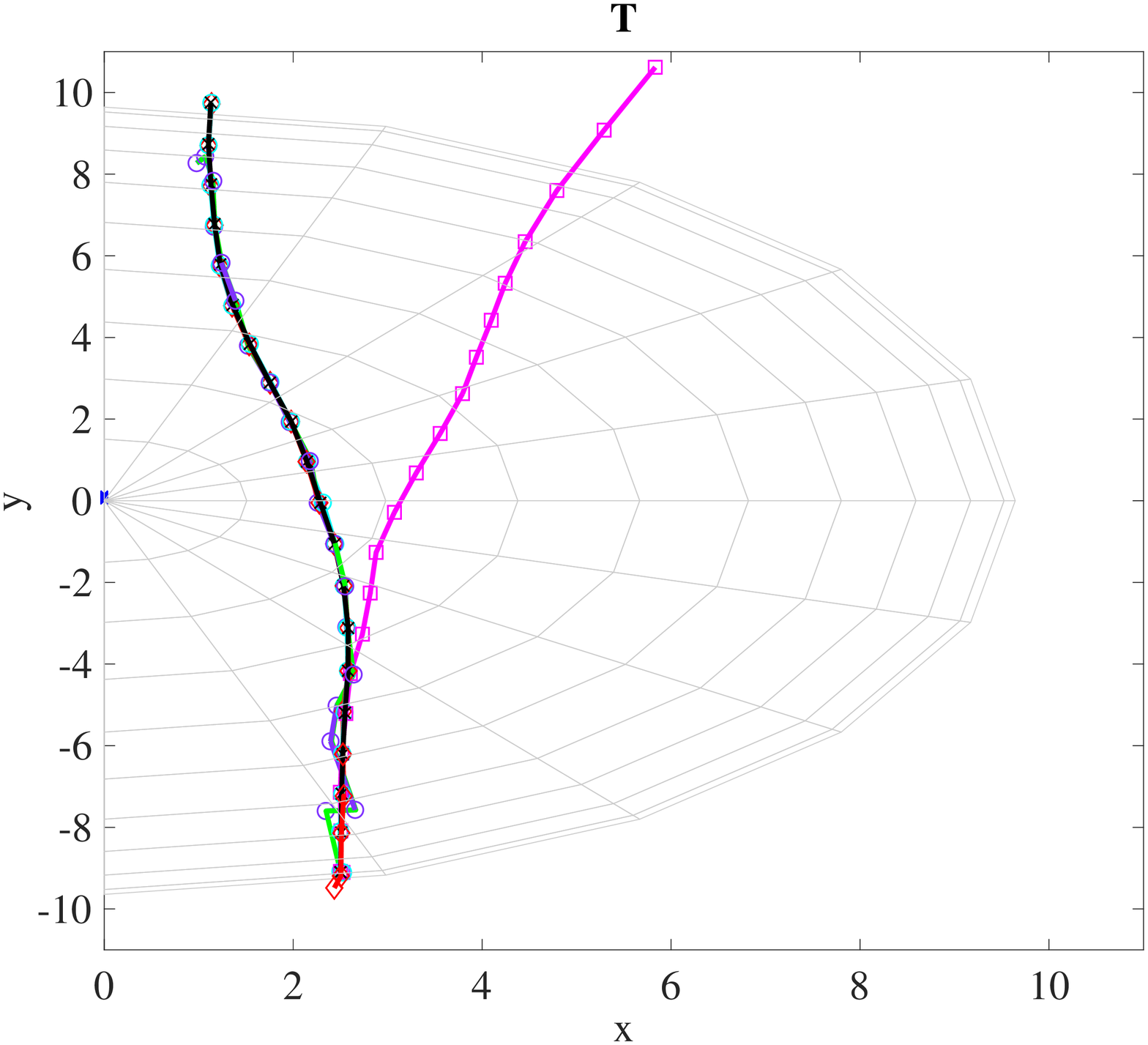}
\psfrag{T}[c][c][0.7]{$30\times 30, \sigmaeta=20^\circ$}
\includegraphics[width=0.5 \linewidth,draft=false]
{./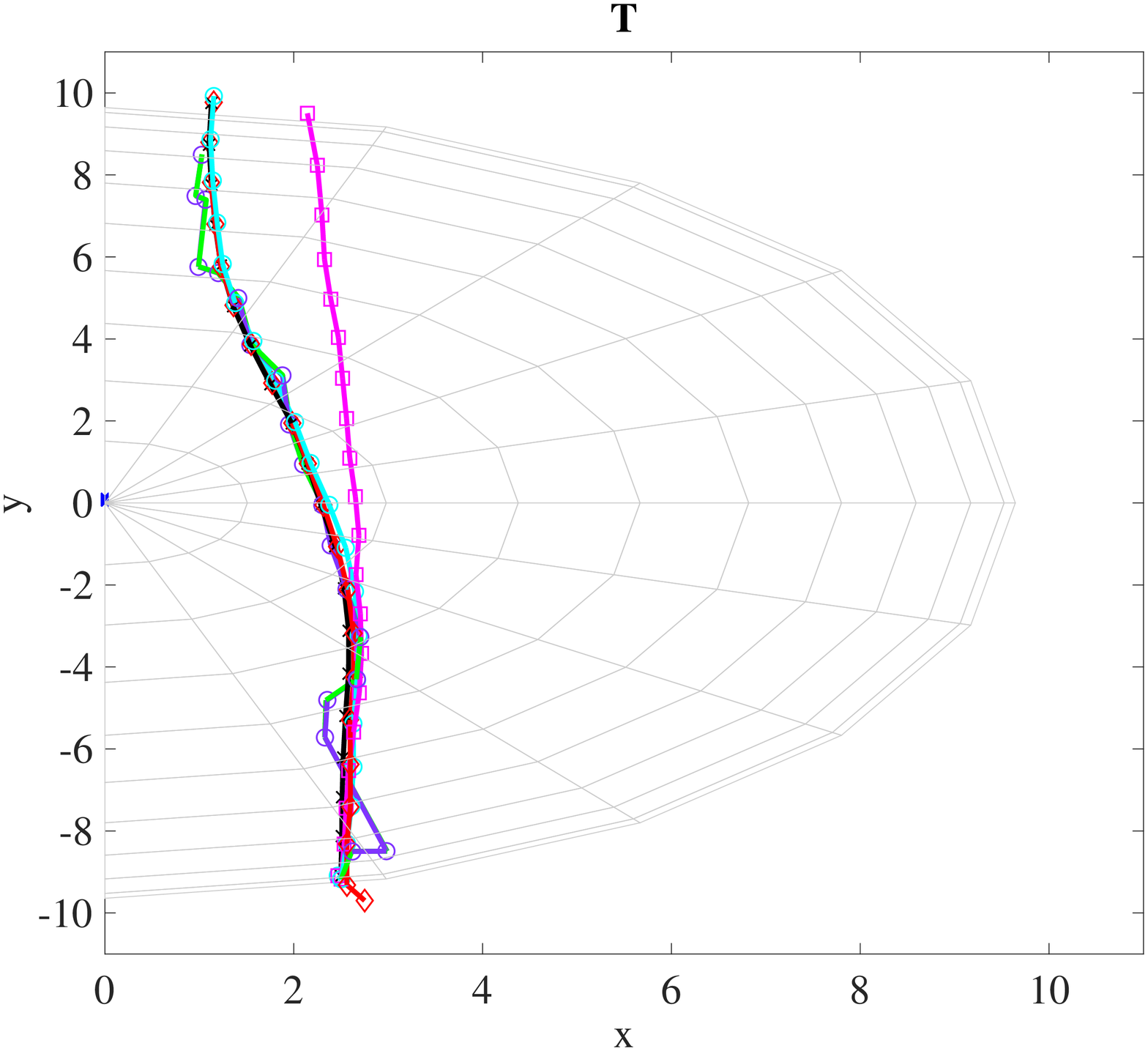}
} 
\psfrag{data1111111111111111}[lc][lc][0.5]{Array}
\psfrag{data2}[lc][lc][0.5]{Actual trajectory}
\psfrag{k}[lc][lc][0.7]{$k=9$}
\psfrag{data3}[lc][lc][0.5]{LO-IS, $\sigmaeta=10^\circ$}
\psfrag{data4}[lc][lc][0.5]{LO-IS, $\sigmaeta=100^\circ$}
\psfrag{x}[c][c][0.7]{$x$ [m]}
\psfrag{y}[c][c][0.7]{$y$ [m]}
\centerline{
\includegraphics[width=0.5 \linewidth,draft=false]
{./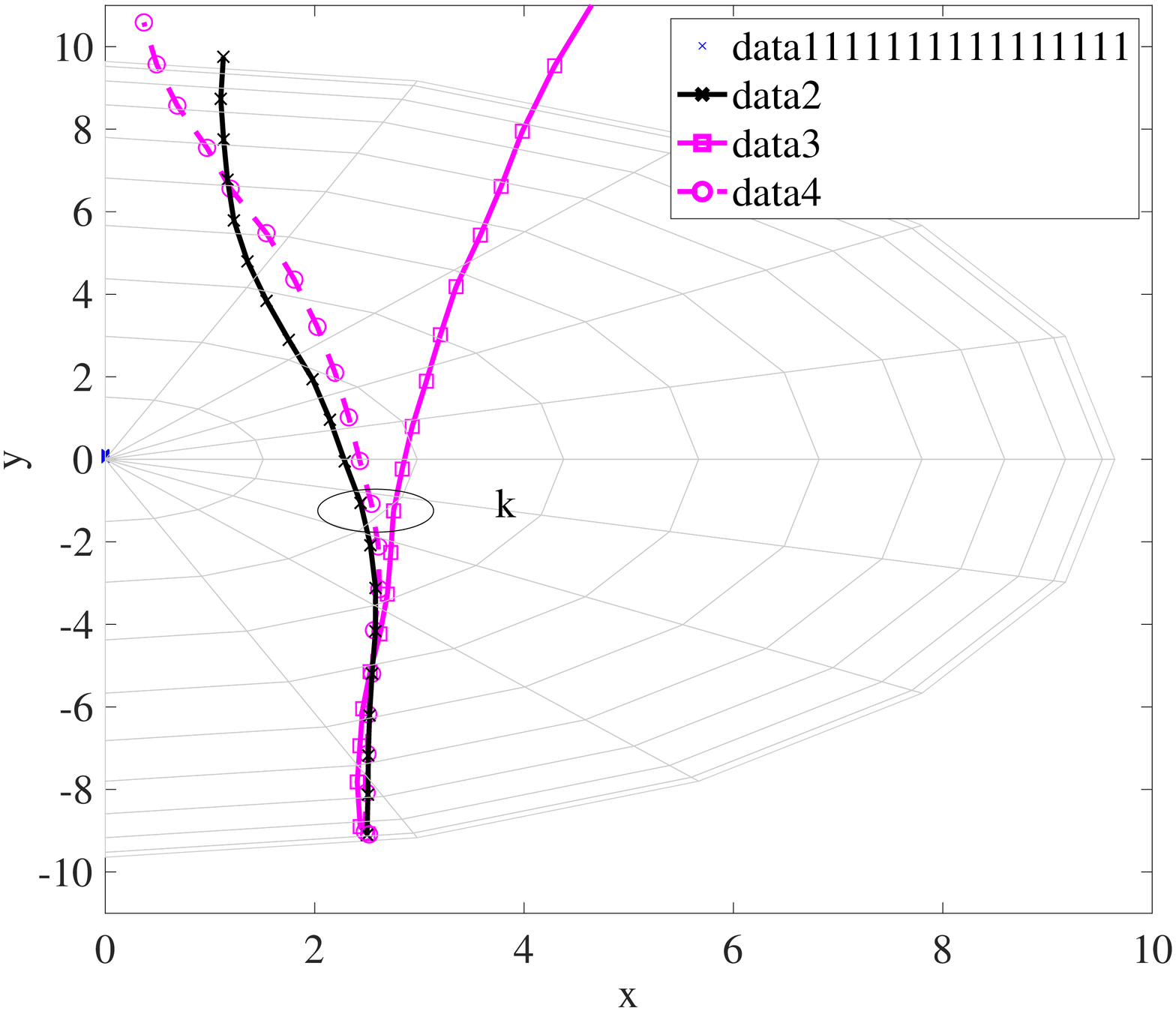}
}
\caption{Example of estimated trajectories for different approaches and array sizes. Top: $N=20\times 20$, $\textsf{TM}_0$, $\textsf{MM}_0$; Middle: $N=30\times 30$, $\textsf{TM}_0$, $\textsf{MM}_0$; Bottom:   $N=30\times 30$, $\textsf{TM}_0$, with and without measurement parameter mismatch ($\textsf{MM}_0$ vs. $\textsf{MM}_1$).  The array reference location is in $\left[0,\,0,\, 1 \right]$ and is lying in the $YZ$ plane. (P-IS) indicates the \ac{PF} with prior \ac{IS}, (LO-IS) is the \ac{PF} with linearised optimal \ac{IS} and (L-IS) is the \ac{PF} with likelihood \ac{IS}. }
\label{fig:trajectories_forCDF}
\end{figure}
\begin{figure}[t!]
\psfrag{x}[c][c][0.7]{$x$ [m]}
\psfrag{y}[c][c][0.7]{\!\!\!\!\!\!\!$y$ [m]}
\psfrag{-1.245}[c][c][0.5]{}
\psfrag{-1.25}[c][c][0.5]{\!\!$-1.25$}
\psfrag{-1.255}[c][c][0.5]{}
\psfrag{-1.26}[c][c][0.5]{\!\!$-1.26$}
\psfrag{-1.265}[c][c][0.5]{}
\psfrag{-1.27}[c][c][0.5]{\!\!$-1.27$}
\psfrag{-1.275}[c][c][0.5]{}
\psfrag{-1.28}[c][c][0.5]{\!\!$-1.28$}
\psfrag{-30}[c][c][0.5]{$-30$}
\psfrag{-40}[c][c][0.5]{$-40$}
\psfrag{-50}[c][c][0.5]{$-50$}
\psfrag{-60}[c][c][0.5]{$-60$}
\psfrag{-70}[c][c][0.5]{$-70$}
\psfrag{-80}[c][c][0.5]{$-80$}
\psfrag{-90}[c][c][0.5]{$-90$}
\psfrag{-100}[c][c][0.5]{$-100$}
\psfrag{t}[c][c][0.6]{Measurement Update, PF - LO-IS}
\centerline{
\includegraphics[width=0.5 \linewidth,draft=false]
{./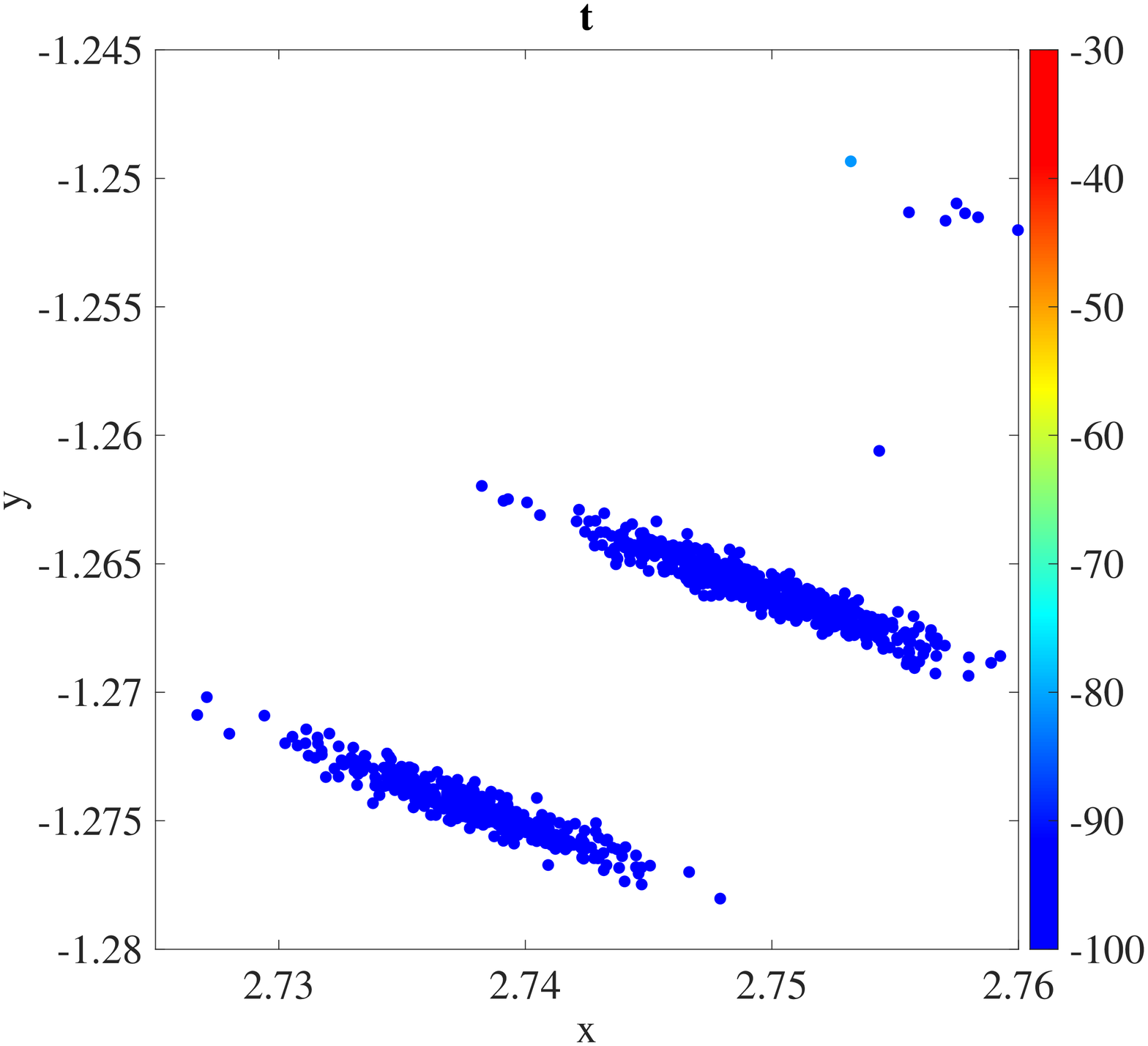}
\psfrag{t}[c][c][0.6]{Time Update, PF - LO-IS}
\includegraphics[width=0.5 \linewidth,draft=false]
{./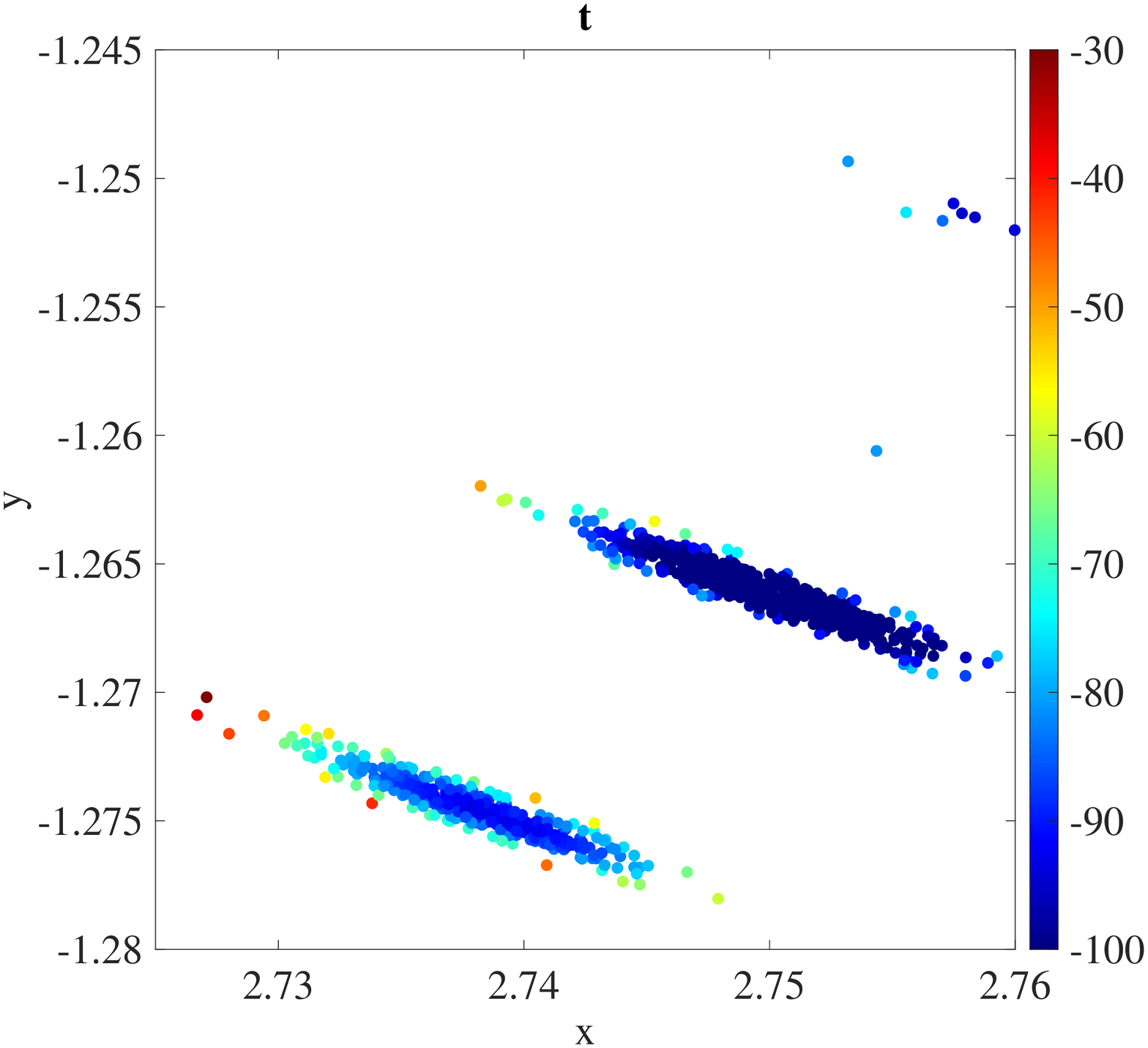}
}  
\vspace{0.2cm}
\psfrag{x}[c][c][0.7]{$x$ [m]}
\psfrag{y}[c][c][0.7]{\!\!\!\!\!\!\!\!$y$ [m]}
\psfrag{-30}[c][c][0.5]{$-30$}
\psfrag{-40}[c][c][0.5]{$-40$}
\psfrag{-50}[c][c][0.5]{$-50$}
\psfrag{-60}[c][c][0.5]{$-60$}
\psfrag{-70}[c][c][0.5]{$-70$}
\psfrag{-80}[c][c][0.5]{$-80$}
\psfrag{-90}[c][c][0.5]{$-90$}
\psfrag{-100}[c][c][0.5]{$-100$}
\psfrag{-1}[c][c][0.5]{\!\!$-1$}
\psfrag{-1.02}[c][c][0.5]{\!\!$-1.02$}
\psfrag{-1.04}[c][c][0.5]{}
\psfrag{-1.06}[c][c][0.5]{\!\!$-1.06$}
\psfrag{-1.08}[c][c][0.5]{}
\psfrag{-1.1}[c][c][0.5]{\!\!$-1.1$}
\psfrag{-1.12}[c][c][0.5]{}
\psfrag{-1.14}[c][c][0.5]{\!\!$-1.14$}
\psfrag{2.4}[c][c][0.5]{$2.4$}
\psfrag{2.45}[c][c][0.5]{$2.45$}
\psfrag{2.5}[c][c][0.5]{$2.5$}
\centerline{
\psfrag{t}[c][c][0.6]{Measurement Update, PF - P-IS}
\includegraphics[width=0.5 \linewidth,draft=false]
{./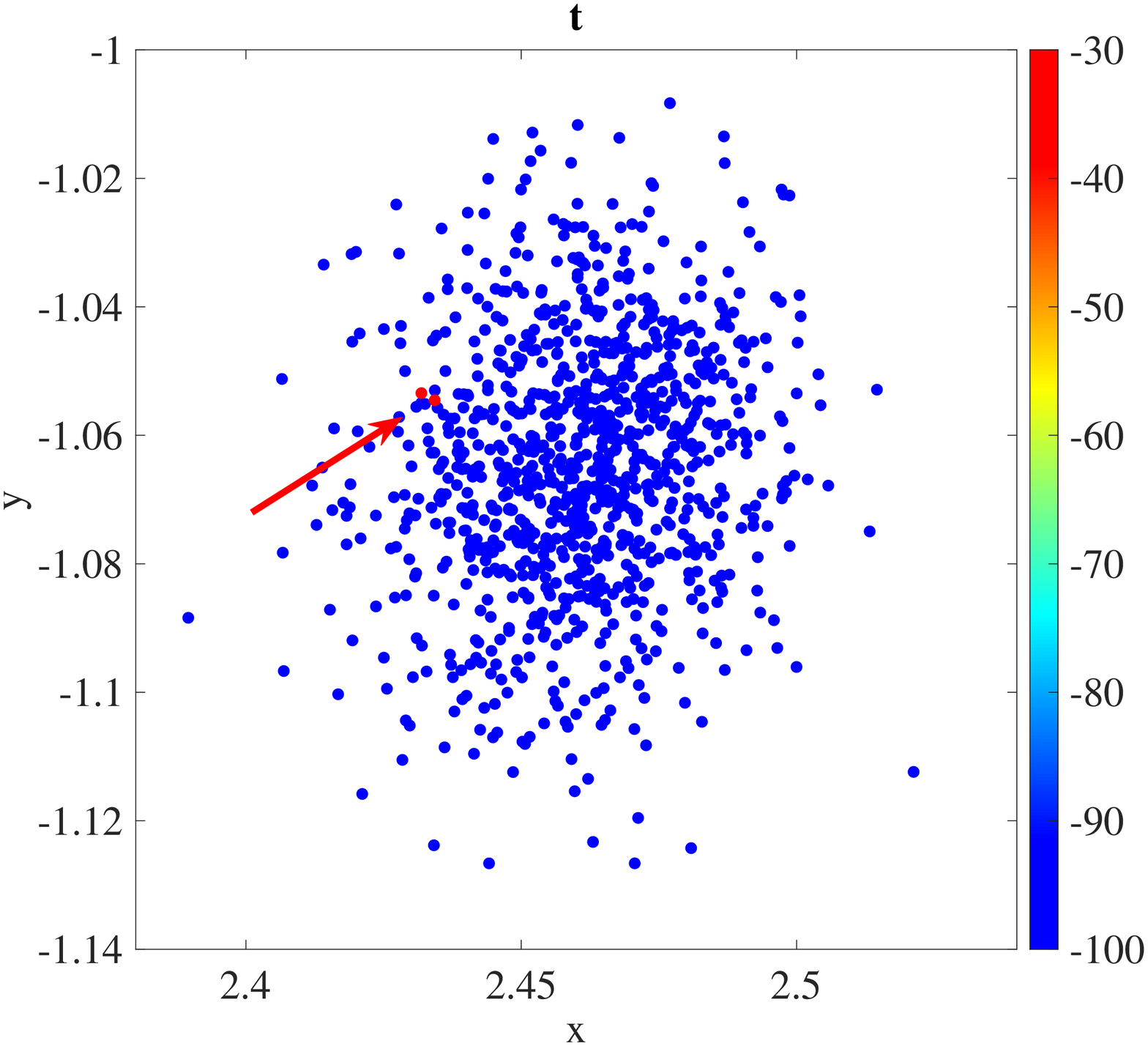}
\psfrag{t}[c][c][0.6]{Time Update, PF - P-IS}
\includegraphics[width=0.5 \linewidth,draft=false]
{./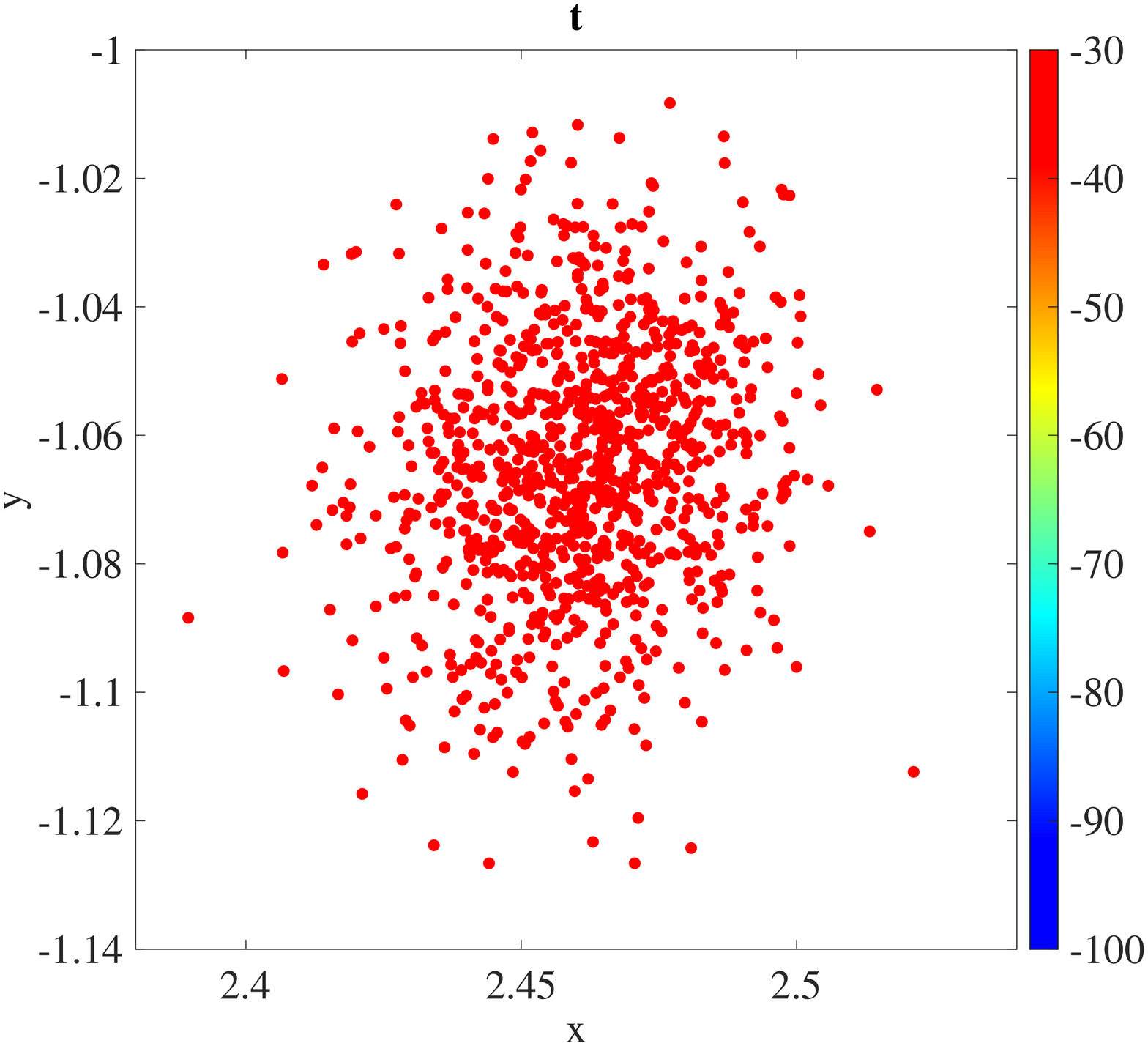}
} \vspace{0.2cm}
\psfrag{x}[c][c][0.7]{$x$ [m]}
\psfrag{y}[c][c][0.7]{\!\!\!\!\!\!\!\!$y$ [m]}
\psfrag{-30}[c][c][0.5]{$-30$}
\psfrag{-40}[c][c][0.5]{$-40$}
\psfrag{-50}[c][c][0.5]{$-50$}
\psfrag{-60}[c][c][0.5]{$-60$}
\psfrag{-70}[c][c][0.5]{$-70$}
\psfrag{-80}[c][c][0.5]{$-80$}
\psfrag{-90}[c][c][0.5]{$-90$}
\psfrag{-100}[c][c][0.5]{$-100$}
\psfrag{-1.245}[c][c][0.5]{}
\psfrag{-1.25}[c][c][0.5]{\!\!$-1.25$}
\psfrag{-1.255}[c][c][0.5]{}
\psfrag{-1.26}[c][c][0.5]{\!\!$-1.26$}
\psfrag{-1.265}[c][c][0.5]{}
\psfrag{-1.27}[c][c][0.5]{\!\!$-1.27$}
\psfrag{-1.275}[c][c][0.5]{}
\psfrag{-1.28}[c][c][0.5]{\!\!$-1.28$}
\centerline{
\psfrag{t}[c][c][0.6]{PF - LO-IS, $\sigmaeta=10^\circ$, $k=9$}
\includegraphics[width=0.5 \linewidth,draft=false]
{./fig11.eps}
\psfrag{t}[c][c][0.6]{PF - LO-IS, $\sigmaeta=100^\circ$, $k=9$}
\includegraphics[width=0.5 \linewidth,draft=false]
{./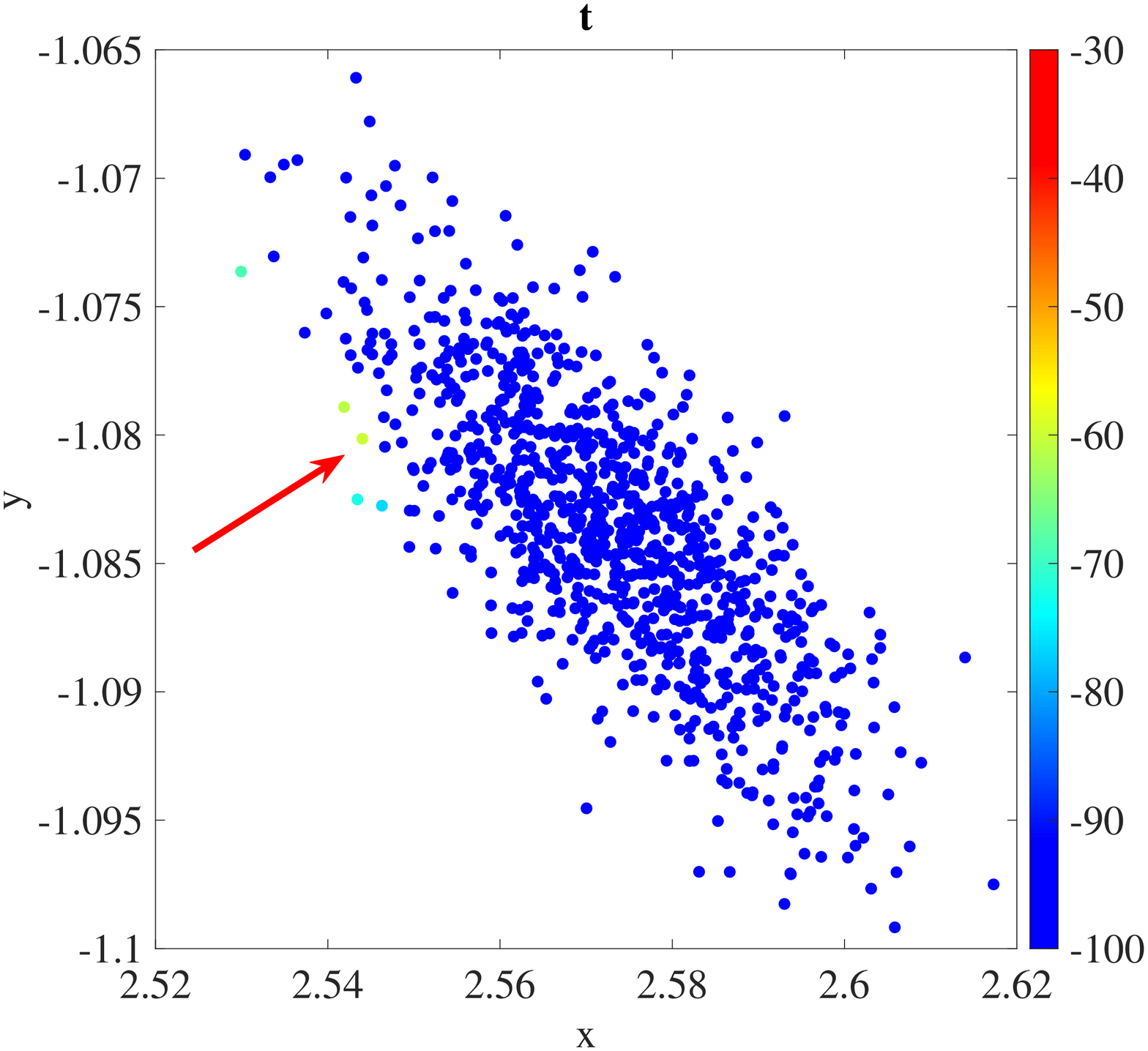}
}
\caption{Particles and weights for a \ac{PF} with linearised optimal \ac{IS} (top, bottom) and for prior \ac{IS} (middle). (Top, Middle)-Left and Bottom: Measurement update at $k=9$. (Top, Middle)-Right: Time update step at $k=8$. (Bottom)-Left: $\textsf{MM}_0$ with $\sigmaeta=10^\circ$. (Bottom)-Right: $\textsf{MM}_1$ with $\sigmaeta=100^\circ$.The red arrow indicates a group of/a single particle/s experiencing larger weights.  }
\label{fig:weightsMM1}
\end{figure}
\begin{figure}[t!]
\psfrag{x}[c][c][0.7]{$x$ [m]}
\psfrag{y}[c][c][0.7]{$y$ [m]}
\psfrag{data11111}[lc][lc][0.5]{Array}
\psfrag{data2}[lc][lc][0.5]{True Traj.}
\psfrag{data3}[lc][lc][0.5]{MLE}
\psfrag{data4}[lc][lc][0.5]{PF - P-IS}
\psfrag{data5}[lc][lc][0.5]{PF - L-IS}
\psfrag{10}[c][c][0.5]{$10$}
\psfrag{8}[c][c][0.5]{$8$}
\psfrag{6}[c][c][0.5]{$6$}
\psfrag{4}[c][c][0.5]{$4$}
\psfrag{2}[c][c][0.5]{$2$}
\psfrag{0}[c][c][0.5]{$0$}
\psfrag{-2}[c][c][0.5]{\!\!\!$-2$}
\psfrag{-4}[c][c][0.5]{\!\!\!$-4$}
\psfrag{-6}[c][c][0.5]{\!\!\!$-6$}
\psfrag{-8}[c][c][0.5]{\!\!\!$-8$}
\psfrag{-10}[c][c][0.5]{\!\!\!$-10$}
\centerline{
\psfrag{t}[c][c][0.7]{$30\times 30$ $\sigmaeta=20^\circ$, $\textsf{TM}_0$}
\includegraphics[width=0.5 \linewidth,draft=false]
{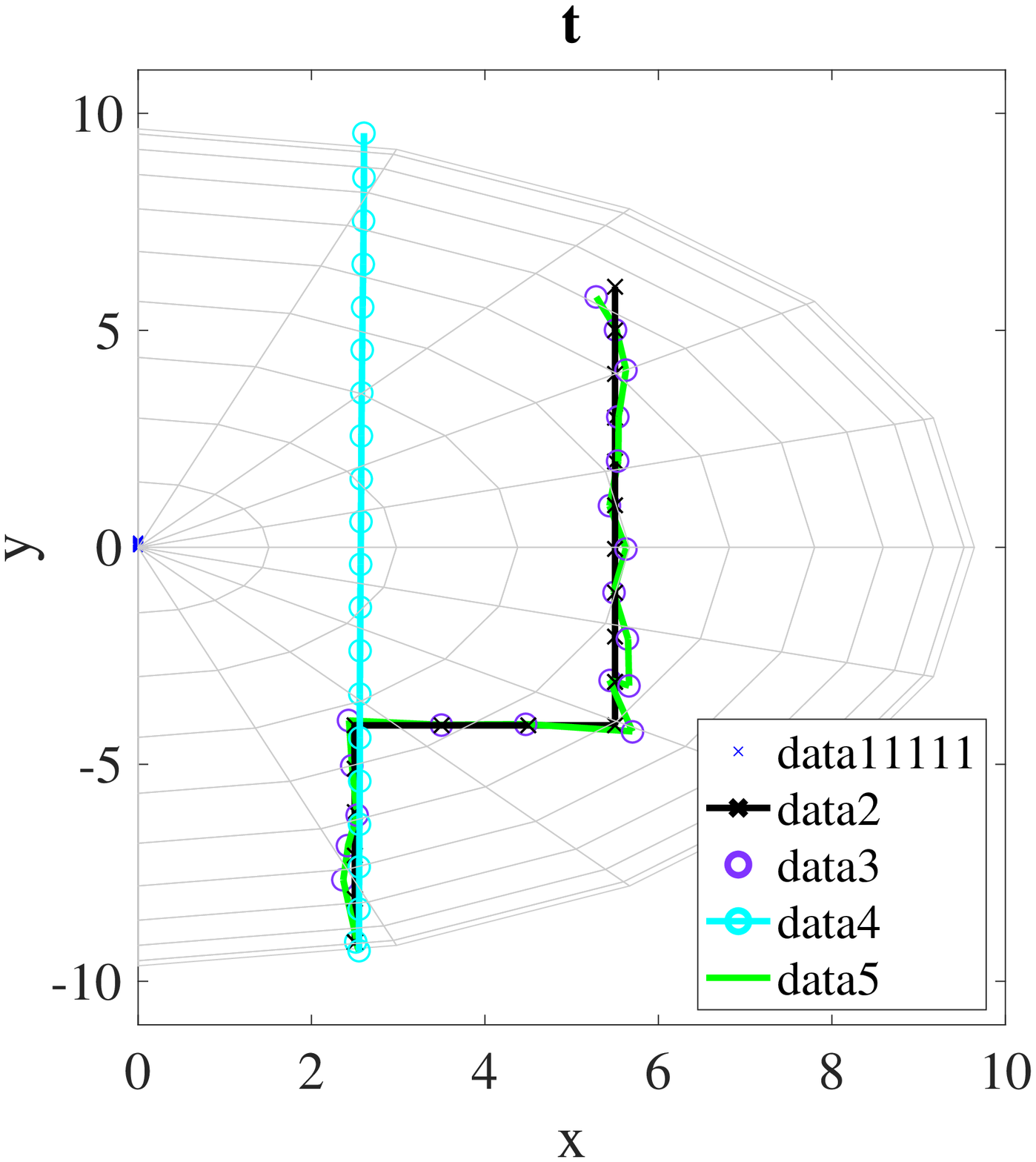}
\psfrag{t}[c][c][0.7]{$30\times 30$ $\sigmaeta=20^\circ$, $\textsf{TM}_1$}
\includegraphics[width=0.5 \linewidth,draft=false]
{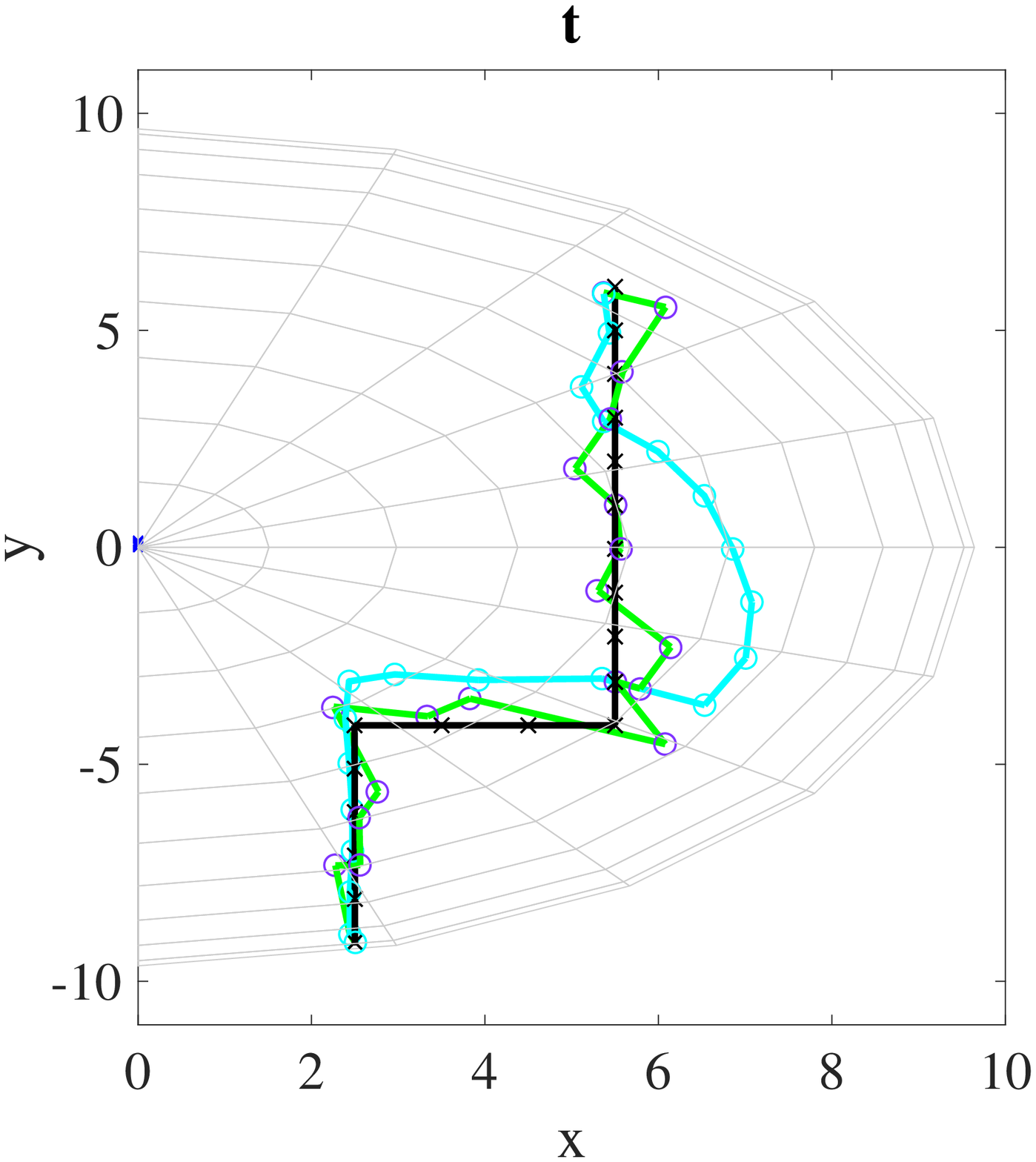}
}
\caption{Tracking estimates for a different trajectory, with transition parameter match, i.e., $\gammat=1$ (left) and mismatch, i.e., $\gammat=10^5$ (right).}
\label{fig:zigzag}
\end{figure}
\subsubsection{Tracking performance}

In Fig.~\ref{fig:trajectories_forCDF}, we
present the estimated trajectories for two different arrays with $N=20\times20$ (top) and $30\times30$ (middle) working with different measurement noise levels ($\sigmaeta=10^\circ$ on the left and $\sigmaeta=20^\circ$ on the right). The Fresnel region is displayed as a grey sphere, whereas the actual trajectory of the source with black with cross markers at each step. The estimated trajectories are depicted with different colors according to the employed approach and to the legend. The parameters used in the models for generating the data are the same used by the estimators, i.e., perfect parameter match ($\gammat=1$, $\gammam=0$). When $N=20\times20$ antennas were used, the initial and final points of the source trajectory were outside the Fresnel region. Consequently, according to the analysis from Sec.~\ref{subsec:dataFIM}, in these areas, measurements are less informative about the source state and larger errors in the trajectory estimation were made, especially by those estimators mainly based on the information retrieved from \ac{LF}, i.e., the \ac{PF} with likelihood \ac{IS} and the \ac{MLE}. 
On the contrary, when operating in the near--field region, a significant tracking performance improvement is obtained under the same measurement noise conditions. 
We notice that the \ac{PF} with linearised optimal \ac{IS} (namely PF - LO-IS in the figures) is less robust and accurate in estimating the trajectory than the other PF methods. In contrast to the prior \ac{IS} (PF - P-IS), the particle propagation depends both on the transition and measurement densities, which are not always in perfect accordance with each other. This is evident when the \ac{LF} becomes extremely peaky (i.e., when the source is very close to the array or with a large number of measurements). Then it is very likely that particles are not propagated in regions of  large probability masses because the likelihood is not overlapped with regions of high 
transition density \cite{bunch2013particle}.
\begin{figure}[t!]
\psfrag{x}[c][c][0.7]{Localization Error [m]}
\psfrag{y}[c][c][0.7]{Empirical CDF}
\psfrag{t2020TM0MM0}[c][c][0.7]{$20\times 20$, Perfect Parameter Match ($\mathsf{TM}_0$, $\mathsf{MM}_0$)}
\psfrag{t3030TM0MM0}[c][c][0.7]{$30\times 30$, Perfect Parameter Match ($\mathsf{TM}_0$, $\mathsf{MM}_0$)}
\psfrag{PEB}[lc][lc][0.6]{$\sqrt{\text{P-CRLB}}$}
\psfrag{data1111111111111111111}[lc][lc][0.6]{Extended Kalman Filter}
\psfrag{ML}[lc][lc][0.6]{MLE}
\psfrag{PFPriorS}[lc][lc][0.6]{\ac{PF}, PF - P-IS}
\psfrag{PFLikelihoodS}[lc][lc][0.6]{PF - L-IS}
\psfrag{PFOptimalS}[lc][lc][0.6]{PF - LO-IS}
\centerline{
\psfrag{t1}[c][c][0.7]{$N=20\times20$}
\includegraphics[width=0.85 \linewidth,draft=false]
{./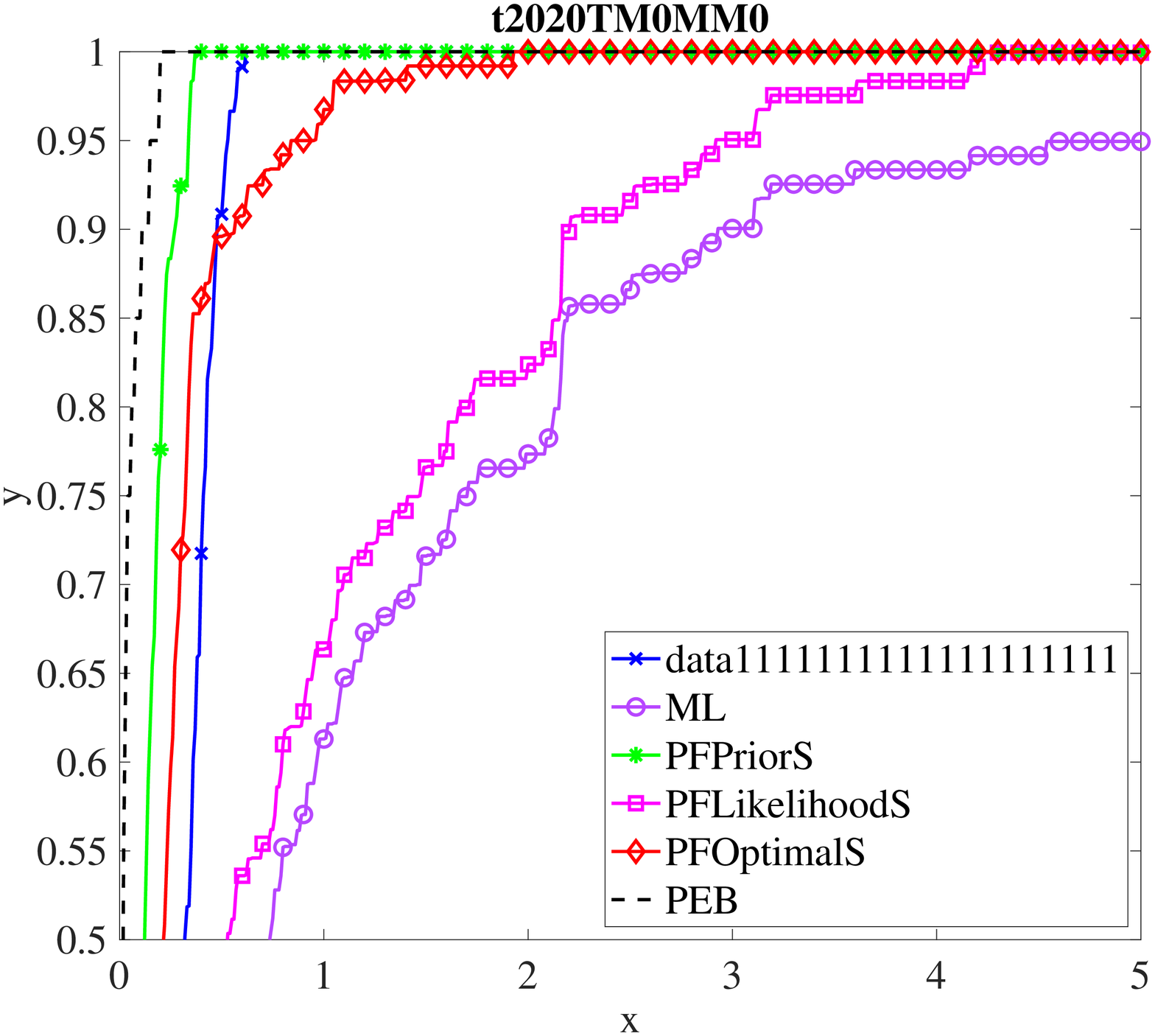}
} 
\vspace{0.5cm}
\centerline{
\psfrag{t1}[c][c][0.7]{$N=30\times30$}
\includegraphics[width=0.85 \linewidth,draft=false]
{./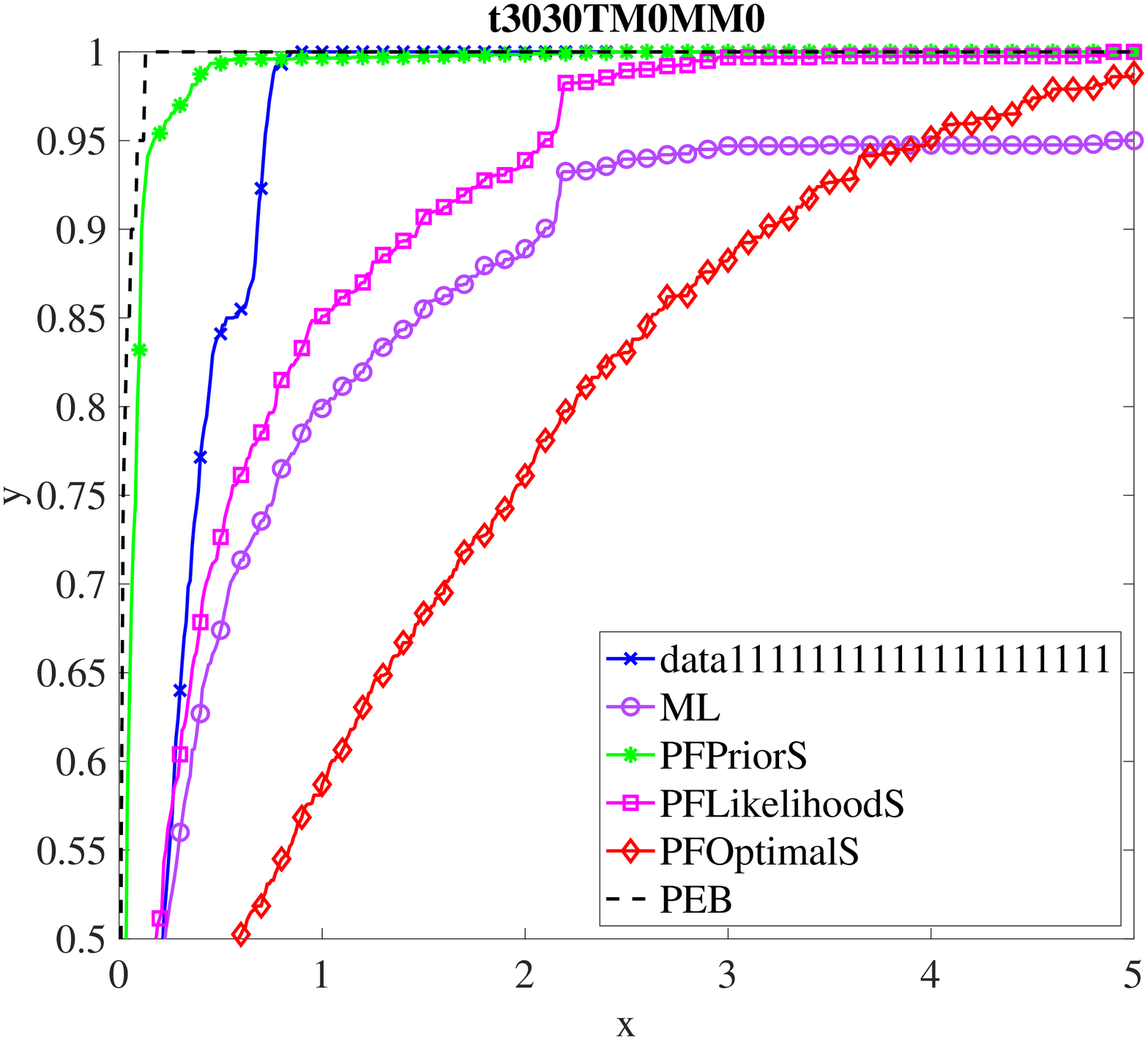}
} 
\caption{Empirical CDF vs. Localization error in meters for different estimators and by considering a rectangular array with $N=20\times 20$ (top) and $N=30\times 30$ (bottom) antennas, respectively. The measurement noise variance is set to $\sigmaeta=20^\circ$. Markers are plotted with a step of 10.}
\label{fig:SuccessRate}
\end{figure}
\begin{figure}[t!]
\psfrag{x}[c][c][0.7]{Localization Error [m]}
\psfrag{y}[c][c][0.7]{Empirical CDF}
\psfrag{t}[c][c][0.7]{Model parameters mismatches}
\psfrag{PFPriorSTM1MM1}[lc][lc][0.6]{Prior IS - TM$_1$ MM$_1$}
\psfrag{PFPriorSTM0MM1}[lc][lc][0.6]{Prior IS - TM$_0$ MM$_1$}
\psfrag{PFPriorSTM0MM0}[lc][lc][0.6]{Prior IS - TM$_0$ MM$_0$}
\psfrag{PFPriorSTM1MM0}[lc][lc][0.6]{Prior IS - TM$_1$ MM$_0$}
\psfrag{PFOptimalSTM1MM1}[lc][lc][0.6]{Optimal IS - TM$_1$ MM$_1$}
\psfrag{PFOptimalSTM0MM1}[lc][lc][0.6]{Optimal IS - TM$_0$ MM$_1$}
\psfrag{PFOptimalSTM0MM0}[lc][lc][0.6]{Optimal IS - TM$_0$ MM$_0$}
\psfrag{PFOptimalSTM1MM0}[lc][lc][0.6]{Optimal IS - TM$_1$ MM$_0$}
\psfrag{PEB}[lc][lc][0.6]{PEB}
\centerline{
\psfrag{t1}[c][c][0.7]{$N=20\times20$, }
\includegraphics[width=0.85 \linewidth,draft=false]
{./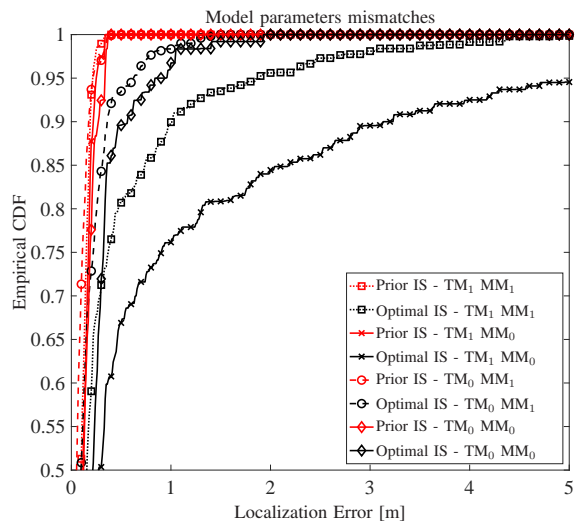}
}
\caption{Empirical CDF vs. Localization error in meters for particle filters using prior (red lines) and optimal (black lines) \ac{IS}, by considering a rectangular array with $N=20\times 20$ antennas. Parameters mismatches are considered as described in Sec.~\ref{sec:casestudy_sp}. Markers are plotted with a step of 10.}
\label{fig:SuccessRate_Comparison}
\end{figure}

We explain this effect in Fig~\ref{fig:weightsMM1}. There we see the weighted particles with linearised optimal and prior \ac{IS} for the measurement update at time instant $k=9$ and time update at $k=8$ (prediction for $k=9$). The color of the particles represents their weights in dB.
During the measurement update, the predictive weights are modified by the likelihood as in \eqref{eq:wup}. In  Fig~\ref{fig:weightsMM1}-(top), the particles of the \ac{PF} linearised optimal \ac{IS} are propagated in a region where the \ac{LF} is not informative and, thus, in this example all the weights of the particles have very low weights of about $-100\,$dB corresponding to the \ac{LF} tails. Moreover, there are two separated clouds of particles because of ambiguities.

Conversely, in the prior \ac{IS} case of Fig~\ref{fig:weightsMM1}-middle, the likelihood peak can be more easily caught because the time update set all the weights at the same probability ($1/M=10^{-3}$, $-30$ dB) and propagate them only using the transition model (with a bigger spatial dispersion with respect to the previous case). In this case, we can see that few particles, indicated with a red arrow in the plot, intercept the peak of the \ac{LF}.

The problem of the optimal \ac{IS} can be partially overcome by increasing the uncertainty on the measurement model, as shown in Figs.~\ref{fig:weightsMM1}-bottom where $\sigmaeta$ is set to $100^\circ$ instead of $10^\circ$ for $N=30 \times 30$. This is equivalent to perform a \textit{roughening} operation, i.e., a spreading of the \ac{LF} by increasing its variance. 
Figure~\ref{fig:trajectories_forCDF}-bottom shows the two estimated trajectories by considering an augmented  measurement variance.

In Fig.~\ref{fig:zigzag}, we show a different target trajectory with abrupt changes in direction. Indeed, rapid variations of the target trajectory are more challenging from a tracking perspective \cite{bugallo2007performance}. Consequently, in this case, having a parameter model mismatch is beneficial in order to increase the probability of propagating particles in informative transition regions. In fact, the target trajectory is not well described by the model in \eqref{eq:tmodel} and, thus, a bigger covariance matrix leads to smaller
inertia in the estimation process and to better results.

The previous results were obtained by considering a single realization in order to get a qualitative idea about the performance behaviour of the investigated tracking algorithms and of their robustness. Now, considering the same target trajectory shown in Fig.~\ref{fig:trajectories_forCDF}, we perform a performance comparison through Monte Carlo simulations of many realizations of trajectories. 
As a metric for comparison, we consider the empirical \ac{CDF} defined as
\begin{align}
    &\mathsf{CDF}\left( e_\mathsf{th}\right)= \frac{1}{\Nmc\, K}  \sum_{\ell=1}^{\Nmc} \sum_{k=1}^{K} \mathbf{1}\left(e_{\ell,k} \le e_\mathsf{th} \right) \,,
\end{align}
where the Monte Carlo cycles were fixed to $\Nmc=100$,  $\mathbf{1}\left(\cdot\right)$ is equal to one if its logical argument is true, otherwise it is zero, $e_{\ell,k}=\lVert \hat{\mathbf{p}}_{\ell,k} - \mathbf{p}_k \rVert$, $\hat{\mathbf{p}}_{\ell,k}$ is the estimated target position at time instant $k$ for the $\ell$th Monte Carlo run, $\mathbf{p}_k$ is the actual target position at the $k$th time instant and $e_\mathsf{th}$ is a threshold for the localization error.

Figure~\ref{fig:SuccessRate} depicts the \ac{CDF} obtained for $N=20\times 20$ (top) and $N=30\times 30$ (bottom), respectively, for $\sigmaeta=20^\circ$ 
when the parameters match both in the measurement and transition models (i.e., $\mathsf{TM}_0$ and $\mathsf{MM}_0$). 
The $\sqrt{\text{P-CRLB}}$ is also shown as performance benchmark. 
As intuitively predictable,
the likelihood \ac{IS} performs better for $30\times 30$ than for the $20\times 20$ thanks to the more peaky \ac{LF}, as the target is located always within the near--field region of the receiver. On the other hand, the \ac{PF} with the optimal \ac{IS} with linearised observation model has lower performance, especially for the $30\,\times\,30$ array. The \ac{EKF} also allows to attain reliable performance despite its low complexity, provided that it is well initialized.
We also evaluated the impact of parameter mismatches for the \ac{PF}, considering a $20\times 20$ array. 
The results in Fig.~\ref{fig:SuccessRate_Comparison} suggest that the prior \ac{IS} is robust to model mismatches, as the red curves exhibit similar behaviors. In particular, with large variances in the models, the system was more robust to trajectory variations and, consequently, it could track the target with a slightly improved accuracy. On the flip side, the optimal \ac{IS} with linearized likelihood was more sensitive to the accuracy of the model.
In this case, the joint variations in the transition model together with the peaky likelihood dramatically affected the performance. We observed performance improvement by introducing a measurement mismatch ($\text{TM}_0$, $\text{MM}_1$). 
Finally, the results of the performance as a function of the number of particles are presented in Fig.~\ref{fig:Nparticles}. They show that $M=1000$ is a good tradeoff in terms of obtained accuracy. 
\begin{figure}[t!]
\psfrag{x}[c][c][0.7]{Localization Error [m]}
\psfrag{y}[c][c][0.7]{Empirical CDF}
\psfrag{t1}[c][c][0.7]{Number of particles}
\psfrag{data1111111111111111111}[lc][lc][0.6]{Prior IS - $M=10000$}
\psfrag{PFPriorS10000aaaaaa}[lc][lc][0.6]{Prior IS - $M=10000$}
\psfrag{PFPriorS1000}[lc][lc][0.6]{Prior IS - $M=1000$}
\psfrag{PFPriorS100}[lc][lc][0.6]{Prior IS - $M=100$}
\psfrag{PFOptimalS100}[lc][lc][0.6]{Optimal IS - $M=100$}
\psfrag{PFOptimalS1000}[lc][lc][0.6]{Optimal IS - $M=1000$}
\psfrag{PFOptimalS10000}[lc][lc][0.6]{Optimal IS - $M=10000$}
\centerline{
\includegraphics[width=0.85 \linewidth,draft=false]
{./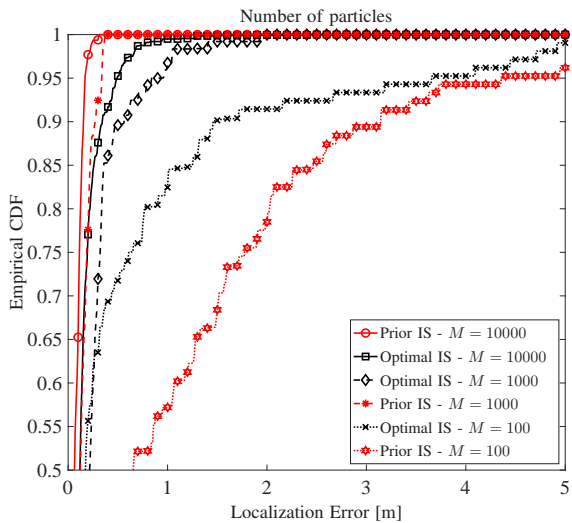}
}
\caption{Empirical CDF vs. Localization error in meters for particle filters using prior (red lines) and optimal (black lines) \ac{IS}, by considering a rectangular array with $N=20\times 20$ antennas. The number of particles spans from $100$ to $10000$. Markers are plotted with a step of 10.}
\label{fig:Nparticles}
\end{figure}

\section{Conclusions}\label{sec:conclusions}
In this paper, we investigated a tracking problem where a single array equipped with a co-located large number of antennas estimates the position of a source by exploiting the \ac{CoA} information. 
First, we derived the theoretical bound on tracking estimation error and we investigated the capability of the system to infer both angle and distance information (i.e., the position), when operating in the near--field region, thanks to the exploitation of the electromagnetic wavefront curvature. The asymptotic analysis puts in evidence that the distance information tends to vanish when approaching the far--field region, which implies a scarce position estimation in the radial direction. 
Second, we compared the performance of some state-of-the-art practical tracking algorithms to show the feasibility of accurate tracking using \ac{CoA} information under different working conditions.
Numerical results show that the performance of \ac{PF}-based schemes is close to that of the theoretical bound, and hence sub-meter accuracy can be obtained in the considered scenarios. Moreover, the comparison among different methods highlights  that \ac{PF} with prior \ac{IS} is more robust to model mismatches and to abrupt trajectory changes.
Finally, our study indicates that it is possible to perform high accuracy tracking using only one single antenna array and narrowband signals by exploiting the \ac{CoA}, provided that the target is within the near--field region of the antenna. Operating like this, there is no need of accurate \ac{TOA} estimation, which requires very large bandwidths and tight synchronization between the transmitter and the receiver.

\section*{Appendix A}
We derive the expression for $\nabla_{\ssk}\, \Delta \dnk$ where, for simplicity of notation, we omit the temporal index $k$. In particular, due to the fact that the derivatives made with respect to the source velocity $\vsk$ are $0$, we focus only on the source position. We can write
\begin{align}\label{eq:nablas2}
   \!\!\nabla_{\ps} \,\Delta \dn &= \left(\nabla_{\ps} d \right) \! \left[\sqrt{f_n}-1 \right]+ \frac{d \,\, \nabla_{\ps}\,\, f_n}{2\, \sqrt{f_n}}, 
\end{align}
where $\nabla_{\ps}\left( f_n\right)$ is given by
\begin{align}\label{eq:nablas3}
   \nabla_{\ps} f_n =& - \frac{2\, \dno}{d} \left( \frac{\dno\, \nabla_{\ps} d}{d^2} + \nabla_{\ps} g_n - \frac{g_n \, \nabla_{\ps}\,d}{d} \right),
\end{align}
with 
\begin{align}
\nabla_{\ps}\!\left( d \right)&\!\!=\!\left[  \frac{\partial d}{\partial x}, \frac{\partial d}{\partial y}, \frac{\partial d}{\partial z} \right] 
= \left[  \frac{\xs-\xo}{d},\, \frac{\ys-\yo}{d},\, \frac{\zs-\zo}{d} \right],
\end{align}
being the gradient of the distance with respect to the source position and where the gradient of the angular term $g_n$ is 
\begin{align}\label{eq:nablas5}
    \!\!\!\! &\nabla_{\ps}\, g_n=  \sin(\thetano) \left( \cos\left(\phino - \phi \right) \cos\left( \theta \right) \nabla_{\ps}\left( \theta \right)  \right.    \\
   & \left. \!\! + \sin\left(\theta \right) \sin \left(\phino\!-\!\phi \right)\! \nabla_{\ps} \left(\phi \right) \right) \!-\! \cos(\thetano)\sin \left(\theta \right) \! \nabla_{\ps} \!\left(\theta \right), \nonumber
\end{align}
with
\begin{align}\label{eq:dtheta}
    \nabla_{\ps}(\theta)&=\left[  \frac{\partial \theta}{\partial x},\, \frac{\partial \theta}{\partial y},\, \frac{\partial \theta}{\partial z} \right] \nonumber \\
    &=\!\left[\frac{\cos(\phi)\cos(\theta)}{d},\frac{\sin(\phi)\cos(\theta)}{d},-\frac{\sin(\theta)}{d} \right]\,, \\ \nabla_{\ps}(\phi)&\! =\!\left[  \frac{\partial \phi}{\partial x},\, \frac{\partial \phi}{\partial y},\, \frac{\partial \phi}{\partial z} \right] \!=\!\left[-\frac{\sin(\phi)}{d\,\sin(\theta)},\, \frac{\cos(\phi)}{d\,\sin(\theta)},\,0 \right]. \label{eq:dphi}
\end{align}

Now, it is easy to show that \eqref{eq:limite} holds. In fact, considering for example the $x$ coordinate, we have 
\begin{align}
   \frac{\partial \Delta \dn}{\partial x} &= \frac{1}{d} \left\{(\xs-\xo)\! \left[\sqrt{f_n}-1 \right]- \frac{ \dno}{\sqrt{f_n}} \left( \frac{\dno\, (\xs-\xo)}{d^2} \right. \right. \nonumber \\
   &\left.\left.+  d\, \frac{\partial g_n}{\partial x} - \frac{g_n \, (\xs-\xo)}{d} \right)\right\} \underset{\dk \gg \dF}{\longrightarrow} 0.
\end{align}

\section*{Appendix B}

By substituting $\frac{\partial h_n }{\partial \xi}  =\frac{2\, \pi}{\lambdap}\, \frac{\partial \Delta d_n}{\partial \xi}$ and by omitting the temporal index for notation simplicity, we can reformulate \eqref{eq:dfim}  according to \cite{jourdan2008position} as
\begin{align}\label{eq:Jkparam}
\tjM\left(\xi\right)
&= \frac{1}{\sigmaeta^2} \, \sumn\left( \frac{\partial h_n}{\partial \xi} \right)^2 \!\!= \frac{4\, \pi^2}{\lambdap^2\,\sigmaeta^2} \, \sumn\left( \frac{\partial \Delta d_n}{\partial \xi} \right)^2, 
\end{align}
where the derivatives inside the summations of \eqref{eq:Jkparam} depend on the actual significance of $\xi$ and are given by
\begin{align}\label{eq:d1d}
&\frac{\partial \Delta \dn}{\partial d} = \sqrt{\fn}-1-\frac{\dno}{d\,\sqrt{\fn}}\left(\frac{\dno}{d} - \gn \right), \\
&\frac{\partial \Delta \dn}{\partial \theta}= d\, \frac{\partial \sqrt{\fn}}{\partial \theta} = -\frac{\dno}{d\sqrt{\fn}} \, \frac{\partial \gn}{\partial \theta}, \\
&\frac{\partial \Delta \dn}{\partial \phi}= d\, \frac{\partial \sqrt{\fn}}{\partial \phi} = -\frac{\dno}{d\sqrt{\fn}} \, \frac{\partial \gn}{\partial \phi} \,, \label{eq:d1p}
\end{align}
with $\fn$ and $\gn$ defined in \eqref{eq:f} and \eqref{eq:gn}, respectively, and with
\begin{align}\label{eq:gnd}
    &\frac{\partial \gn}{\partial \theta}\!=\!\cos(\theta) \sin(\thetano) \cos(\phino-\phi)\!-\!\sin(\theta)\cos(\thetano),  \\
    &\frac{\partial \gn}{\partial \phi}\!=\!\sin(\phino-\phi)\, \sin(\thetano)\, \sin(\theta).
\end{align}
By substituting $f_n$ in \eqref{eq:d1d}-\eqref{eq:d1p} and squaring, we obtain
\begin{align} \label{eq:functions}
   &\left(\frac{\partial \Delta \dn}{\partial d} \right)^2  = \frac{\left(1- \frac{g_n\,\dno}{d} - \sqrt{1+ \frac{\dno^2}{d^2}-2\, \frac{g_n\,\dno}{d}} \right)^2}{1+ \frac{\dno^2}{d^2}-2\, \frac{g_n\,\dno}{d}},
   \\
  &\left(\frac{\partial \Delta \dn}{\partial \theta} \right)^2  =  \frac{\dno^2}{1+ \frac{\dno^2}{d^2}-2\, \frac{g_n\,\dno}{d}} {\left(\frac{\partial \gn}{\partial \theta} \right)^2}, \\
  &\left(\frac{\partial \Delta \dn}{\partial \phi} \right)^2  =  \frac{\dno^2}{1+ \frac{\dno^2}{d^2}-2\, \frac{g_n\,\dno}{d}} {\left(\frac{\partial \gn}{\partial \phi} \right)^2}.
  \label{eq:functions2}
\end{align}   
By injecting the expressions above into \eqref{eq:Jkparam}, we obtain \eqref{eq:Jkdfinal}. 
%
%
%
%

\section*{Appendix C}
In this case, given {\it A1}, we can write \eqref{eq:Jkdfinal1}-\eqref{eq:Jkthetafinal1} as
\begin{align}\label{eq:CFIMd1}
&\tjM\left(d \right) =\frac{4\, \pi^2}{\lambdap^2\, \sigmaeta^2} \sumn \frac{8\, d^2 +D^2-4\, d\, \sqrt{4\,d^2+D^2}}{4\,d^2+D^2} \\
&\tjM\left(\theta \right)  =\frac{\pi^2}{\lambdap^2 \sigmaeta^2}\, \frac{D^2}{1+ \frac{D^2}{4\, d^2}}\, \sumn \left(\cos \left(2\, \pi \frac{n}{N}\right) \right)^2 \\
&\tjM\left(\phi \right)  =\frac{\pi^2}{\lambdap^2 \sigmaeta^2}\, \frac{D^2}{1+ \frac{D^2}{4\, d^2}}\, \sumn \left(\sin \left(2\, \pi \frac{n}{N}\right) \right)^2 
\label{eq:CFIMp1}
\end{align}
where we have exploited the following relationships
\begin{align}
&\left(\frac{\partial \gn}{\partial \theta}\right)^2=\left(\cos \thetano \right)^2=\left(\cos \left(2\, \pi \frac{n}{N}\right) \right)^2\\
&\left(\frac{\partial \gn}{\partial \phi}\right)^2=\left(\sin \thetano\right)^2=\left(\sin \left(2\, \pi \frac{n}{N}\right)\right)^2.
\end{align}
Then, \eqref{eq:CFIMd1}-\eqref{eq:CFIMp1} can be further simplified as in \eqref{eq:CFIMd}-\eqref{eq:CFIMp}.


\begin{thebibliography}{10}
\providecommand{\url}[1]{#1}
\csname url@samestyle\endcsname
\providecommand{\newblock}{\relax}
\providecommand{\bibinfo}[2]{#2}
\providecommand{\BIBentrySTDinterwordspacing}{\spaceskip=0pt\relax}
\providecommand{\BIBentryALTinterwordstretchfactor}{4}
\providecommand{\BIBentryALTinterwordspacing}{\spaceskip=\fontdimen2\font plus
\BIBentryALTinterwordstretchfactor\fontdimen3\font minus
  \fontdimen4\font\relax}
\providecommand{\BIBforeignlanguage}[2]{{%
\expandafter\ifx\csname l@#1\endcsname\relax
\typeout{** WARNING: IEEEtran.bst: No hyphenation pattern has been}%
\typeout{** loaded for the language `#1'. Using the pattern for}%
\typeout{** the default language instead.}%
\else
\language=\csname l@#1\endcsname
\fi
#2}}
\providecommand{\BIBdecl}{\relax}
\BIBdecl

\bibitem{WinSheDai:J18}
M.~Z. Win, Y.~Shen, and W.~Dai, ``A theoretical foundation of network
  localization and navigation,'' \emph{Proc. {IEEE}}, vol. 106, no.~7, pp.
  1136--1165, Jul. 2018, special issue on {\em Foundations and Trends in
  Localization Technologies}.

\bibitem{DarCloDju:J15}
D.~{Dardari}, P.~{Closas}, and P.~M. {Djuri\'c}, ``Indoor tracking: Theory,
  methods, and technologies,'' \emph{IEEE Trans. Veh. Technol.}, vol.~64,
  no.~4, pp. 1263--1278, Apr. 2015.

\bibitem{GenEtAl:J14}
L.~{Geng} \emph{et~al.}, ``Indoor tracking with {RFID} systems,'' \emph{IEEE J.
  Sel. Topics Signal Process.}, vol.~8, no.~1, pp. 96--105, 2014.

\bibitem{win2011network}
M.~Z. Win \emph{et~al.}, ``Network localization and navigation via
  cooperation,'' \emph{IEEE Commun. Mag.}, vol.~49, no.~5, pp. 56--62, 2011.

\bibitem{durrant2006simultaneous}
H.~Durrant-Whyte and T.~Bailey, ``Simultaneous localization and mapping: {P}art
  {I},'' \emph{IEEE Robot. Automat. Mag.}, vol.~13, no.~2, pp. 99--110, 2006.

\bibitem{guidi2019radio}
F.~{Guidi} and D.~{Dardari}, ``Radio positioning with {EM} processing of the
  spherical wavefront,'' \emph{IEEE Trans. Wireless Commun.}, pp. 1--1, 2021.

\bibitem{de2021convergent}
C.~De~Lima \emph{et~al.}, ``Convergent communication, sensing and localization
  in {6G} systems: {A}n overview of technologies, opportunities and
  challenges,'' \emph{IEEE Access}, 2021.

\bibitem{FasEtAl:J19}
A.~{Fascista} \emph{et~al.}, ``Millimeter-wave downlink positioning with a
  single-antenna receiver,'' \emph{IEEE Trans. Wireless Commun.}, vol.~18,
  no.~9, pp. 4479--4490, 2019.

\bibitem{GuiEtAl:J18}
F.~{Guidi} \emph{et~al.}, ``Indoor environment-adaptive mapping with
  beamsteering massive arrays,'' \emph{IEEE Trans. Veh. Technol.}, vol.~67,
  no.~10, pp. 10\,139--10\,143, Oct. 2018.

\bibitem{vukmirovic2018position}
N.~Vukmirovi{\'c} \emph{et~al.}, ``Position estimation with a millimeter-wave
  massive {MIMO} system based on distributed steerable phased antenna arrays,''
  \emph{EURASIP J. Adv. Signal Process.}, vol. 2018, no.~1, p.~33, 2018.

\bibitem{GueGuiDar:J18}
A.~Guerra, F.~Guidi, and D.~Dardari, ``Single anchor localization and
  orientation performance limits using massive arrays: {MIMO} vs.
  beamforming,'' \emph{IEEE Trans. Wireless Commun.}, vol.~17, no.~8, pp.
  5241--5255, 2018.

\bibitem{garcia2017direct}
N.~Garcia \emph{et~al.}, ``Direct localization for massive mimo,'' \emph{IEEE
  Trans. Signal Process.}, vol.~65, no.~10, pp. 2475--2487, 2017.

\bibitem{WanWuShe:J19}
Y.~{Wang}, Y.~{Wu}, and Y.~{Shen}, ``Joint spatiotemporal multipath mitigation
  in large-scale array localization,'' \emph{IEEE Trans. Signal Process.},
  vol.~67, no.~3, pp. 783--797, Feb 2019.

\bibitem{Wym:C20}
H.~{Wymeersch}, ``A {F}isher information analysis of joint localization and
  synchronization in near field,'' in \emph{Proc. IEEE Int. Conf. Commun.
  Workshops (ICC Workshops)}, 2020, pp. 1--6.

\bibitem{ZhaEtAl:J19}
S.~{Zhang} \emph{et~al.}, ``Spherical wave positioning based on curvature of
  arrival by an antenna array,'' \emph{IEEE Wireless Commun. Lett.}, vol.~8,
  no.~2, pp. 504--507, Apr. 2019.

\bibitem{elzanaty2020reconfigurable}
A.~Elzanaty \emph{et~al.}, ``Reconfigurable intelligent surfaces for
  localization: Position and orientation error bounds,'' \emph{arXiv preprint
  arXiv:2009.02818}, 2020.

\bibitem{HadSacFas:C17}
N.~Hadaschik, B.~Sackenreuter, and M.~Fa{\ss}binder, ``Direct multi-array and
  multi-tone positioning,'' in \emph{Proc. IEEE Int. Conf. Commun. Workshop},
  May 2017, pp. 1067--1072.

\bibitem{ElKetAl:J13}
M.~N.~E. Korso \emph{et~al.}, ``Deterministic performance bounds on the mean
  square error for near field source localization,'' \emph{IEEE Trans. Signal
  Process.}, vol.~61, no.~4, pp. 871--877, Feb. 2013.

\bibitem{LeC:J95}
J.-P. Le~Cadre, ``Performance analysis of wavefront curvature methods for range
  estimation of a moving source,'' \emph{IEEE Trans. Aerosp. Electron. Syst.},
  vol.~31, no.~3, pp. 1082--1103, 1995.

\bibitem{friedlander2019localization}
B.~Friedlander, ``Localization of signals in the near-field of an antenna
  array,'' \emph{IEEE Trans. Signal Process.}, vol.~67, no.~15, pp. 3885--3893,
  2019.

\bibitem{vukmirovic2019direct}
N.~Vukmirovi{\'c} \emph{et~al.}, ``Direct wideband coherent localization by
  distributed antenna arrays,'' \emph{Sensors}, vol.~19, no.~20, p. 4582, 2019.

\bibitem{ElKetAl:C09-2}
M.~N.~E. Korso \emph{et~al.}, ``Sequential estimation of the range and the
  bearing using the {Z}ero-{F}orcing {M}usic approach,'' in \emph{Proc. 17th
  European Signal Process. Conf. (EUSIPCO)}, Aug. 2009, pp. 1404--1408.

\bibitem{FerWyb:C01}
B.~G. Ferguson and R.~J. Wyber, ``Wavefront curvature passive ranging in a
  temporally varying sound propagation medium,'' in \emph{Proc. MTS/IEEE
  Oceans. An Ocean Odyssey}, vol.~4, 2001, pp. 2359--2365 vol.4.

\bibitem{HuRusEdf:J18}
S.~{Hu}, F.~{Rusek}, and O.~{Edfors}, ``Beyond massive {MIMO}: The potential of
  positioning with large intelligent surfaces,'' \emph{IEEE Trans. Signal
  Processing}, vol.~66, no.~7, pp. 1761--1774, Apr. 2018.

\bibitem{bjornson2021reconfigurable}
E.~Bj{\"o}rnson \emph{et~al.}, ``Reconfigurable intelligent surfaces: {A}
  signal processing perspective with wireless applications,'' \emph{arXiv
  preprint arXiv:2102.00742}, 2021.

\bibitem{Bal:05}
C.~Balanis, \emph{Antenna Theory}, 3rd~ed.\hskip 1em plus 0.5em minus
  0.4em\relax Wiley, 2005.

\bibitem{sarkka2013bayesian}
S.~S{\"a}rkk{\"a}, \emph{Bayesian filtering and smoothing}.\hskip 1em plus
  0.5em minus 0.4em\relax Cambridge University Press, 2013, vol.~3.

\bibitem{tichavsky1998posterior}
P.~Tichavsky, C.~H. Muravchik, and A.~Nehorai, ``Posterior {C}ram{\'e}r-{R}ao
  bounds for discrete-time nonlinear filtering,'' \emph{IEEE Trans. Signal
  Process.}, vol.~46, no.~5, pp. 1386--1396, 1998.

\bibitem{brehard2006closed}
T.~Br{\'e}hard and J.-P. Le~Cadre, ``Closed-form posterior {C}ram{\'e}r-{R}a
  bounds for bearings-only tracking,'' \emph{IEEE Trans. Aerosp. Electron.
  Syst.}, vol.~42, no.~4, pp. 1198--1223, 2006.

\bibitem{bergman2001optimal}
N.~Bergman, A.~Doucet, and N.~Gordon, ``Optimal estimation and
  {C}ram{\'e}r-{R}ao bounds for partial non-{G}aussian state space models,''
  \emph{Annals of the Institute of Statistical Mathematics}, vol.~53, no.~1,
  pp. 97--112, 2001.

\bibitem{fritsche2014fresh}
C.~Fritsche \emph{et~al.}, ``A fresh look at bayesian {C}ram{\'e}r-{R}ao bounds
  for discrete-time nonlinear filtering,'' in \emph{Proc. 17th Int. Conf.
  Information Fusion (FUSION)}, 2014, pp. 1--8.

\bibitem{koohifar2018autonomous}
F.~Koohifar, I.~Guvenc, and M.~L. Sichitiu, ``Autonomous tracking of
  intermittent {RF} source using a {UAV} swarm,'' \emph{IEEE Access}, vol.~6,
  pp. 15\,884--15\,897, 2018.

\bibitem{kay1993fundamentals}
S.~M. Kay, \emph{Fundamentals of statistical signal processing}.\hskip 1em plus
  0.5em minus 0.4em\relax Prentice Hall PTR, 1993.

\bibitem{DjuEtAl:J03}
P.~M. {Djuri\'c} \emph{et~al.}, ``Particle filtering,'' \emph{IEEE Signal
  Process. Mag.}, vol.~20, no.~5, pp. 19--38, 2003.

\bibitem{DjuVemBug:J08}
P.~M. {Djuri\'c}, M.~{Vemula}, and M.~F. {Bugallo}, ``Target tracking by
  particle filtering in binary sensor networks,'' \emph{IEEE Trans. Signal
  Process.}, vol.~56, no.~6, pp. 2229--2238, 2008.

\bibitem{li2015resampling}
T.~Li, M.~Bolic, and P.~M. Djuri\'c, ``Resampling methods for particle
  filtering: classification, implementation, and strategies,'' \emph{IEEE
  Signal Process. Mag.}, vol.~32, no.~3, pp. 70--86, 2015.

\bibitem{doucet2000sequential}
A.~Doucet, S.~Godsill, and C.~Andrieu, ``On sequential {M}onte {C}arlo sampling
  methods for {B}ayesian filtering,'' \emph{Stat. Comput.}, vol.~10, no.~3, pp.
  197--208, 2000.

\bibitem{gustafsson2010particle}
F.~Gustafsson, ``Particle filter theory and practice with positioning
  applications,'' \emph{IEEE Aerosp. Electron. Sys. Mag.}, vol.~25, no.~7, pp.
  53--82, 2010.

\bibitem{bunch2013particle}
P.~Bunch and S.~Godsill, ``Particle filtering with progressive {G}aussian
  approximations to the optimal importance density,'' in \emph{Proc. of 5th
  IEEE Int. Workshop Comput. Adv. Multi-Sensor Adapt. Process. (CAMSAP)}.\hskip
  1em plus 0.5em minus 0.4em\relax IEEE, 2013, pp. 360--363.

\bibitem{ugray2007scatter}
Z.~Ugray \emph{et~al.}, ``Scatter search and local nlp solvers: {A} multistart
  framework for global optimization,'' \emph{INFORMS J. Comput.}, vol.~19,
  no.~3, pp. 328--340, 2007.

\bibitem{glover1998template}
F.~Glover, ``A template for scatter search and path relinking,'' \emph{Lecture
  notes in computer science}, vol. 1363, pp. 13--54, 1998.

\bibitem{bugallo2007performance}
M.~F. Bugallo, S.~Xu, and P.~M. Djuri{\'c}, ``Performance comparison of {EKF}
  and particle filtering methods for maneuvering targets,'' \emph{Digital
  Signal Process.}, vol.~17, no.~4, pp. 774--786, 2007.

\bibitem{jourdan2008position}
D.~B. Jourdan, D.~Dardari, and M.~Z. Win, ``Position error bound for {UWB}
  localization in dense cluttered environments,'' \emph{IEEE Trans. Aerosp.
  Electron. Syst.}, vol.~44, no.~2, pp. 613--628, 2008.

\end{thebibliography}
%


\end{document}